\newtheorem{observation}{Observation}
\newtheorem{procedure}{Procedure}
\newcounter{theconditions}
\crefname{theconditions}{condition}{conditions}
\newcommand{\defcond}[1]{
  \phantomsection
  \refstepcounter{theconditions}
  \label{#1}
  \textbf{Condition~\arabic{theconditions}}
}
\tikzset{
  invisible/.style={opacity=0},
  only/.code args={<#1>#2}{\only<#1>{\pgfkeysalso{#2}}},
  alt/.code args={<#1>#2#3}{\alt<#1>{\pgfkeysalso{#2}}{\pgfkeysalso{#3}}},
  temporal/.code args={<#1>#2#3#4}{%
    \temporal<#1>{\pgfkeysalso{#2}}{\pgfkeysalso{#3}}{\pgfkeysalso{#4}}},
  point/.style={circle,inner sep=1.5pt,minimum size=1.5pt,draw,fill=#1},
  point/.default=red,
  big arrow/.style={
    decoration={markings,mark=at position 1 with {\arrow[scale=1.5,#1]{>}}},
    postaction={decorate},
    shorten >=0.4pt},
  big arrow/.default=black}
\tikzset{cross/.style={%
    cross out, draw=black, minimum size=2*(#1-\pgflinewidth), inner sep=0pt,%
    outer sep=0pt},%
  cross/.default={1pt}}
\def\centerarc[#1](#2)(#3:#4:#5)
\def\calA{{\cal A}}
\def\calB{{\cal B}}
\def\calC{{\cal C}}
\def\calE{{\cal E}}
\def\calF{{\cal F}}
\def\calP{{\cal P}}
\def\calQ{{\cal Q}}
\def\calS{{\cal S}}
\def\calR{{\cal R}}
\newcommand{\R}{\ensuremath{\mathbb{R}}}               
\newcommand{\Rx}[1]{\ensuremath{\R\rule{0.3mm}{0mm}^{#1}}}
\newcommand{\Rone}{\Rx{1}\xspace}
\newcommand{\Rtwo}{\Rx{2}\xspace}
\newcommand{\Rd}{\Rx{d}\xspace}
\newcommand{\PPx}[1]{\ensuremath{\mathbb{P} \rule{0.3mm}{0mm}^{#1}}}
\newcommand{\PPone}{\PPx{1}\xspace}
\newcommand{\PPtwo}{\PPx{2}\xspace}
\newcommand{\SOx}[1]{\ensuremath{\mathbb{S}\rule{0.3mm}{0mm}^{#1}}}
\newcommand{\SOone}{\SOx{1}\xspace}
\newcommand{\SOtwo}{\SOx{2}\xspace}
\newcommand{\HOx}[1]{\ensuremath{\mathbb{H}\rule{0.3mm}{0mm}^{#1}}}
\newcommand{\HOtwo}{\HOx{2}\xspace}
\newcommand{\HS}{\ensuremath{\mathbb{H}}}               
\newcommand{\CS}{\ensuremath{\mathbb{A}}}               
\newcommand{\openCS}{\CS}
\newcommand{\closedCS}{\bar{\CS}}
\newcommand{\openHS}{\HS}
\newcommand{\closedHS}{\bar{\HS}}
\newcommand{\boundaryHS}{\partial{\HS}}
\newcommand{\cgal}{\textsc{Cgal}}
\definecolor{eficolor}{rgb}{0.459, 0.459, 1.0}
\definecolor{tomcolor}{rgb}{0.8235, 0.4117, 0.1176}
\definecolor{todocolor}{rgb}{0.0235, 0.3117, 0.0176}
\definecolor{color1}{rgb}{1, 0.5, 0.2}
\definecolor{color2}{rgb}{0.5, 0.5, 1.}
\renewcommand{\Comment}[2][.45\linewidth]{\leavevmode\hfill\makebox[#1][l]{\color{red}//~#2}}
\newcommand{\CommentB}[2][.45\linewidth]{\leavevmode\hfill\makebox[#1][l]{\color{red}//~#2}}
\begin{document}
\title{Optimized Synthesis of Snapping Fixtures\thanks{This work has been supported in part by the Israel Science Foundation (grant nos.~825/15,1736/19), by the Blavatnik Computer Science Research Fund, and by grants from Yandex and from Facebook.}}
\titlerunning{Snapping Fixtures}
\author{Tom Tsabar\inst{1} \and
  Efi Fogel\inst{1} \and
  Dan Halperin\inst{1}}
\authorrunning{T. Tsabar et al.}
%
\institute{%
  The Blavatnik school of Computer Science, Tel Aviv University, Israel\\
  \email{\{tomtsabar9@gmail.com,efifogel@gmail.com,danha@post.tau.ac.il\}}%
}
%
\maketitle              
\newlength{\intextsepSaved}\setlength\intextsepSaved{\intextsep}%
\newlength{\columnsepSaved}\setlength\columnsepSaved{\columnsep}%
\newsavebox{\measurebox}
\newlength{\beforeSectionVSpace}\setlength\beforeSectionVSpace{-7pt}
\newlength{\afterSectionVSpace}\setlength\afterSectionVSpace{-7pt}
\newlength{\subsectionVSpace}\setlength\subsectionVSpace{-3pt}
\newlength{\abovecaptionskipSaved}
\newlength{\belowcaptionskipSaved}
\setlength{\abovecaptionskipSaved}{\abovecaptionskip}
\setlength{\belowcaptionskipSaved}{\belowcaptionskip}
\begin{abstract}
  Fixtures for constraining the movement of parts have been
  extensively investigated in robotics, since they are essential for
  using robots in automated manufacturing.
  This paper deals with the design and optimized synthesis of a special
  type of fixtures, which we call \emph{snapping fixtures}. Given a
  polyhedral workpiece $P$ with $n$ vertices and of constant genus,
  which we need to hold, a snapping fixture is a semi-rigid polyhedron
  $G$, made of a palm and several fingers, such that when $P$ and $G$
  are well separated, we can push $P$ toward $G$, slightly bending the
  fingers of $G$ on the way (exploiting its mild flexibility), and
  obtain a configuration, where $G$ is back in its original shape and
  $P$ and $G$ are inseparable as rigid bodies. We prove the minimal
  closure conditions under which such fixtures can hold parts, using
  Helly's theorem. We then introduce an algorithm running in $O(n^3)$
  time that produces a snapping fixture, minimizing the number
  of fingers and optimizing additional objectives, if a snapping fixture
  exists. We also provide an efficient and robust implementation of a
  simpler version of the algorithm, which produces the fixture model
  to be 3D printed and runs in $O(n^4)$ time. We describe two
  applications with different optimization criteria: Fixtures to hold
  add-ons for drones, where we aim to make the fixture as lightweight as
  possible, and small-scale fixtures to hold precious stones in jewelry,
  where we aim to maximize the exposure of the stones, namely minimize
  the obscuring of the workpiece by the fixture.
\end{abstract}

\keywords{Computational Geometry, Automation, Grasping, Fixture Design}
\section{Introduction}
\label{sec:intro}
A fixture is a device that holds a part in place. Constraining the
movement of parts is a fundamental requirement for using robots in
automated manufacturing~\cite{a-mrbic-96},\cite[Section~3.5]{w-irs-15}.
There are many types
and forms of fixtures; they range from modular fixtures synthesized on
a lattice to fixtures generated to suit a specific part. A fixture
possesses some grasp characteristics. For example, a grasp with complete
restraint prevents loss of contact, prevents any motion, and thus may
by considered secure. Two primary kinematic restraint properties are
\emph{form closure} and \emph{force closure}~\cite{pt-g-16}.
Both properties guarantee maintenance of contact under some conditions.
However, the latter typically relies on contact friction; therefore,
achieving force closure typically requires fewer contacts than
achieving form closure.  Fixtures with complete restraint are mainly
used in manufacturing processes where preventing any motion is
critical. Other types of fixtures can be found anywhere, for example,
in the kitchen where a hook holds a cooking pan, or in the office
where a pin and a bulletin board hold a paper still. This paper deals
with a specific problem in this area; here, we are given a rigid
object, referred to as the \emph{workpiece}, and we seek for an
automated process that designs a semi-rigid object, referred to as the
snapping \emph{fixture}, such that, starting at a configuration where
the workpiece and the holding fixture are separated, they can be
pushed towards each other, applying a linear force and exploiting the
mild flexibility of the fixture, into a configuration where both the
workpiece and the fixture are inseparable as rigid bodies. A generated
fixture has a base part, referred to as the \emph{palm}, and fingers
connected to the palm; see Section~\ref{ssec:terms:structure} for
formal definitions. Without additional computational effort, a hook, a
nut, or a bolt can be added to the palm resulting in a generic fixture
that can be utilized in a larger system.  Another advantage of the
single-component flexible fixture is that it can easily be 3D-printed.
We have 3D-printed several fixtures that our generator has
automatically synthesized for some given workpieces.  The objective of
the algorithm is obtaining snapping fixtures with the minimal number
of fingers.  With additional care that also accounts for properties of
the material used to produce the fixtures, the smallest or lightest
possible fixture can be synthesized, for a given workpiece. This can
(i) expedite the production of the fixture using, e.g., additive
manufacturing, (ii) minimize the weight of the produced fixture, and
(iii) maximize the exposed area of the boundary of the workpiece when
held by the fixture.


\subsection{Background}
\label{ssec:intro:background}
Form closure has been studied since the 19th century. Early results
showed that at least four frictionless contacts are necessary for
grasping an object in the plane, and seven in 3D space. Specifically,
it has been shown that four and seven contacts are necessary and
sufficient for the form-closure grasp of any polyhedron in the 2D and
3D case, respectively~\cite{mnp-gg-90,mss-esmpg-87}.

Automatic generation of various types of fixtures, and in particular,
the synthesis of form-closure grasps, are the subjects of a diverse
body of research. Brost and Goldberg~\cite{bg-casmf-94} proposed
a complete algorithm for synthesizing modular fixtures of polygonal
workpieces by locating three pegs (locators), and one clamp on a
lattice. Their algorithm is complete in the sense that it examines all
possible fixtures for an input polygon. Their results were obtained by
generating all configurations of three locators coincident to three
edges, for each triplet of edges in the input polygon. For each such
configuration, the algorithm checks whether \emph{form closure} can be
obtained by adding a single clamp. Our work uses a similar
strategy to obtain all possible configurations. In subsequent work
Zhuang, Goldberg, and Wong~\cite{zk-esmf-96} showed that there exists
a non-trivial class of polygonal workpieces that cannot be held in
form closure by any fixture of this type (namely, a fixture that uses
three locators and a clamp). They also considered fixtures that use
four clamps, and introduced two classes of polygonal workpieces that
are guaranteed to be held in form closure by some fixture of this
type. Wallack
and Canny~\cite{wc-pmhf-97} proposed another type of fixture called
the vise fixture and an algorithm for automatically designing such
fixtures. The vise fixture includes two lattice plates mounted on the
jaws of a vise and pegs mounted on the plates. Then, the workpiece is
placed on the plates, and \emph{form closure} is achieved by
activating the vise and closing the pins from both sides on the
workpiece. The main advantage in this type of fixture is its
simplicity of usage. Brost and Peters~\cite{bp-adfap-98} extended
the approach exploited in~\cite{bg-casmf-94} to three dimensions. They
provided an algorithm that generates suitable fixtures for
three-dimensional workpieces.  Wagner, Zhuang, and
Goldberg~\cite{wzg-ffpsm-95} proposed a three-dimensional
seven-contact fixture device and an algorithm for planning
form-closure fixtures of a polyhedral workpiece with pre-specified
pose. A summary of the studies in the field of flexible fixture design
and automation conducted in the last century can be found
in~\cite{bz-ffdar-01}. Related studies in the field of grasping and manipulation are summarized in~\cite{bk-rgm-01}.
Subsequent works studied other
types of fixtures and provided algorithms for computing them, for
example, unilateral fixtures~\cite{ggbzk-ufsmp-04}, which are used to
fix sheet-metal workpieces with holes. Other studies focused on
grasping synthesis algorithms with autonomous robotic \emph{fingers},
where a single robotic hand gets manipulated by motors and being used
to grasp different workpieces; an overview of such algorithms can be
found in ~\cite{seb-oogsa-12}. A common dilemma for all the grasping
and fixture design algorithms is defining and finding the optimal
grasp. Several works, e.g.,~\cite{lwxl-qfgsf-04}
and~\cite{rpr-goscc-13}, discuss such quality functions and their
optimization.

\subsection{Our Results}
\label{ssec:intro:results}
We introduce certain properties of minimal snapping fixtures of given
workpieces. Formally, we are given a closed polyhedron $P$ of
complexity $n$ and of a constant genus that represents a workpiece.
The surface of a polyhedron of genus zero is homeomorphic to a
sphere. In our work we allow more complicated polyhedra; see, for
example, Figure~\ref{fig:ms:a}.\footnote{The genus counts the number of
``handles'' in the polyhedron; see, e.g.,
\url{https://mathworld.wolfram.com/Genus.html}.} In our analysis
in the sequel we assume that the genus is bounded by a constant.
We introduce an algorithm that determines whether a closed polyhedron $G$
that represents a fixture exists, and if so, it constructs it in
$O(n^3)$ time. This significantly improves our simpler $O(n^4)$
algorithm listed in Appendix~\ref{sec:simple-algorithm}. We also provide
an efficient and
robust implementation of the latter. In addition, we present two
practical cases that utilize our implemented algorithm: One is the
generation of a snapping fixture that mounts a device to an unmanned
aerial vehicle (UAV), such as a drone. The other is the generation of
a snapping fixture that mounts a precious stone to a jewel, such as a
ring. The common objective in both cases is, naturally, the firm
holding of the workpiece. In the first case, we are interested in a
fixture with minimal weight. In the second case we are interested in a
fixture that minimally obscures the precious stone.
We are not aware of similar work on semi-rigid one-part fixtures; thus,
we do not conduct any comparisons, but we provide some benchmark
numbers we have obtained while executing our generator. Note
that, in theory, the generated fixtures prevent any linear motion, but
do not necessarily prevent angular motion;
however, fixtures that do not posses the \emph{form closure} property
are rarely obtained in practice. Handling angular motion is left for
future research; see Appendix~\ref{ssec:future:form-closure}.

\subsection{Outline}
\label{ssec:intro:outline}
The rest of this paper is organized as follows. Terms and definitions
for our snapping fixtures and theoretical bounds and properties are
provided in Section~\ref{sec:terms-properties}. The synthesis
algorithm is described in Section~\ref{sec:algorithm} along with the
analysis of its complexity. Two applications are presented in
Section~\ref{sec:cases}. Finally, we report on experimental results in
Section~\ref{sec:experiments}.
Appendix~\ref{sec:glossary} provides a notation glossary.
Appendix~\ref{sec:proofs} contains proofs of several lemmas,
corollaries, observations, and a theorem. We point out
some limitations of our generator and suggest future research in
Appendix~\ref{sec:future}.  An assortment of interesting workpieces
and fixtures that we have 3D-printed and experimented with are shown
in Appendix~\ref{sec:assortment}.

\section{Terminology and Properties}
\label{sec:terms-properties}
In this section we describe the structure and properties of our
snapping fixtures.

\subsection{Fixture Structure}
\label{ssec:terms:structure}
\setlength{\intextsep}{0pt}%
\setlength{\columnsep}{4pt}%
\begin{wrapfigure}[5]{R}{4.0cm}
  \includegraphics[width=4.0cm]{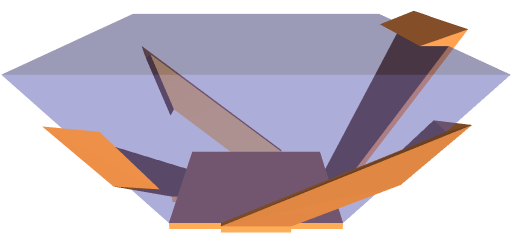}
\end{wrapfigure}
Consider an input polyhedron $P$ that represents a workpiece, such as
the one transparently rendered in blue in the figure to the right. The
structure of a fixture of $P$, rendered in orange in the figure,
resembles the structure of a hand; it is the union of a single
polyhedral part referred to as the \emph{palm}, several polyhedral
parts, referred to as \emph{fingers}, which are extensions of the
\emph{palm}, and semi-rigid joints that connect the palm and the
fingers. Each \emph{finger} consists of two polyhedral parts, namely,
\emph{body} and \emph{fingertip}, and the semi-rigid joint between the
\emph{body} and the \emph{fingertip}. The various parts, i.e., palm,
bodies, and fingertips, are disjoint in their interiors.
In the following we describe these parts in detail.
\setlength\intextsep{\intextsepSaved}%
\setlength\columnsep{\columnsepSaved}%

\begin{definition}[$\bm{\alpha}$-extrusion of a polygon and base polygon of an
  $\alpha$-extrusion] Let $L$ denote a polygon in space, let $v$
  denote a normal to the plane containing $L$, and let $v_{\alpha}$
  denote the normal scaled to length $\alpha$. The
  \emph{$\alpha$-extrusion} of $L$ is a polyhedron $Q$ in space,
  which is the extrusion of $L$ along $v_{\alpha}$. The polygon $L$ is
  referred to as the \emph{base polygon} of $Q$; see the figure below.
\end{definition}

\setlength{\intextsep}{-5pt}%
\setlength{\columnsep}{0pt}%
\begin{wrapfigure}[7]{R}{2.45cm}
  \begin{tikzpicture}[node distance=5cm]
    \node (img) {\includegraphics[width=\linewidth]{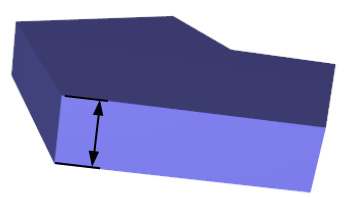}};
    \node [yshift=-7, xshift=-6] at (img.center) {\large $\alpha$};
  \end{tikzpicture}\vspace{-4pt}\\
  \begin{tikzpicture}[node distance=5cm]
    \node (img) {\includegraphics[width=\linewidth]{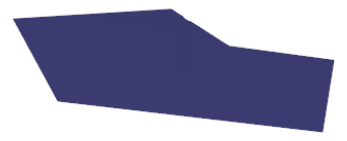}};
    \draw [thick,-{Stealth[scale=1.5]}] (0.6,-0.1)--(0.7,0.5);
    \node [above,yshift=-6, xshift=20] at (img.north){\large $v$};
  \end{tikzpicture}
\end{wrapfigure}
We use the abbreviation $\alpha$-extrusion of a facet
$f$ of some polyhedron to refer to the $\alpha$-extrusion $Q$ of the
geometric embedding of the facet $f$, and we refer to the facet of $Q$
that overlaps with $f$ as the base facet of the $\alpha$-extrusion $Q$.
\setlength\intextsep{\intextsepSaved}%
\setlength\columnsep{\columnsepSaved}%

Our formal computational model is oblivious to the thickness of the
various parts. In this model the parts are flat and if two parts are
connected by a joint, they share an edge, which is the axis of the
joint. Our generator, though, synthesizes solid models of fixtures. We
use $\alpha$-extrusion to inflate the various parts.

Let $G$ denote a snapping fixture made of a palm, $k$ fingers
$F_1,F_2,\ldots,F_k$, and corresponding joints. The palm is an
$\alpha_p$-extrusion of a facet $f_p$ of the workpiece $P$.
(The various $\alpha$ values are discussed below.)
Consider a specific finger $F=F_i$
of $G$. The body of $F$ is defined by one of the neighboring facets of
$f_p$, denoted $f_b$. The fingertip of $F$ is defined by one of the
neighboring facets of $f_b$, denoted $f_t$, $f_t\neq f_p$. Let
$e_{pb}$ denote the common edge of $f_p$ and $f_b$, and let $e_{bt}$
denote the common edge of $f_b$ and $f_t$. Note that in some
degenerate cases $e_{pb}$ and $e_{bt}$ are incident to a common
vertex. The body of a finger is an $\alpha_b$-extrusion of $f_b$.
Let $v$ denote the cross
product of the vector that corresponds to $e_{bt}$ and the normal to
the plane containing $f_t$ of length $\alpha_t$.
Let $q_t$ denote the quadrilateral defined by the two vertices incident to
$e_{bt}$ and their translations by $v$. The fingertip is an
$\alpha_t$-extrusion of $q_t$. The axis of the joint that connects the
palm and the body of $F$ coincides with $e_{pb}$ and the axis of the
joint that connects the body of $F$ with its fingertip coincides with
$e_{bt}$.  The value $\alpha_p$ and the values $\alpha_b$ and
$\alpha_t$ for each finger determine the trade-off between the
strength and flexibility of the joints.\footnote{Typically, these
values are identical.} They depend on the material and shape of the
fixture. In our implementation they can be determined by the
user.\footnote{For example, in several of the fixtures that we
produced, they were set to 5mm.}

For a complete view of a workpiece and a snapping fixture consider
Figure~\ref{fig:wrap}. Observe that both the palm and the fingers of
the fixture in the figure differ from the formal definitions above.
The differences stem from practical considerations. In particular, the
parts in the figure have smaller volumes, which (i) reduces
fabrication costs, and (ii) resolves collision between distinct
fingers. In some degenerate cases (see Figure~\ref{fig:degenerate})
distinct fingers could have overlapped.  In the figure, the base facet
of the fingertip of one finger, $f_{t_1}$, coincides with $f$, a facet
of the workpiece.  Likewise, the base facet of the body of the other
finger, $f_{b_2}$, also coincides with $f$. Avoiding overlaps is
achieved by simultaneously shrinking the base facets $f_{t_1}$ and
$f_{b_2}$. Now, the fingertip grips only the tip of $f$ and the body is
stretching only on a small portion of the workpiece facet. As another
example, consider the body of a finger depicted in
Figure~\ref{fig:wrap}\subref{fig:fix1:t}; it is the
$\alpha_b$-extrusion of a quadrilateral defined by two points that lie
in the interior of $e_{pb}$ and two points that lie in the interior of
$e_{bt}$, as opposed to the formal definition above, where the body is
the $\alpha_b$-extrusion of the entire facet of $P$.  Also, in
reality, parts are not fabricated separately, and the entire fixture
is made of the same flexible material. Instead of rotating about the
joint axes, the entire fingers bend. The differences, though, have
no effect on the correctness of the proofs and algorithm (which adhere
to the formal definitions) presented in the sequel. These structural
changes and the extrusion values, merely determine the degree of
flexibility and strength of the fixture; see
Appendix~\ref{ssec:future:flexibility} and
Appendix~\ref{ssec:future:strength}.

\begin{figure}[!ht]
  \vspace{-25pt}
  \centering\subfloat[]{\label{fig:fix1:o}
    \begin{tikzpicture}
      \node (img) {\includegraphics[height=2.8cm]{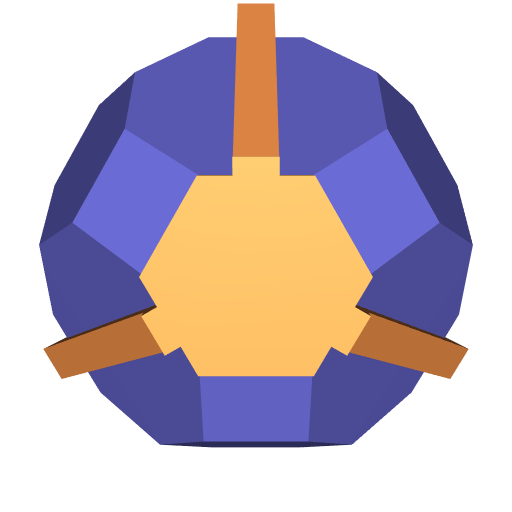}};
      \begin{scope}[scale=1.4]
        \node [](P) at (0.8,-0.8) {\textcolor{color2!50!black}{$P$}};
        \node [](ps) at (-0.7, 1.1) {\textcolor{color1!50!black}{Palm}};
        \node [](fs) at (0.6, 1.1) {\textcolor{color1!50!black}{Fingers}};
        \coordinate (pe) at (0.,0.);
        \draw [thick,-{Stealth[scale=1.5]}] (ps) to[out=-90,in=180](pe);
        \coordinate (fe1) at (0.6,-0.38);
        \coordinate (fe2) at (0,0.6);
        \coordinate (fe3) at (-0.55,-0.34);
        \draw [thick,-{Stealth[scale=1.5]}] (fs)--(fe1);
        \draw [thick,-{Stealth[scale=1.5]}] (fs) to[out=-90,in=0](fe2);
        \draw [thick,-{Stealth[scale=1.5]}] (fs) to[out=-90,in=0](fe3);
      \end{scope}
  \end{tikzpicture}}
  \subfloat[]{\label{fig:fix1:f}
    \begin{tikzpicture}[node distance=10cm]
      \node (img) {\includegraphics[height=2.8cm]{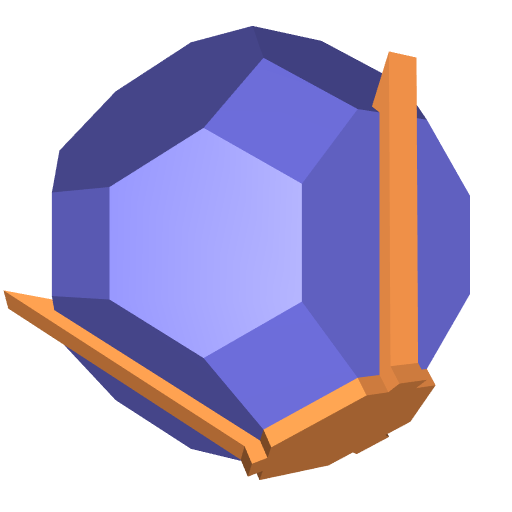}};
      \begin{scope}[scale=1.4]
        %
        \node [](ts) at (0.4, 1.1) {\textcolor{color1!50!black}{Fingertip}};
        \node [](bs) at (-0.65, 1.1) {\textcolor{color1!50!black}{Body}};
        \coordinate (te) at (0.47,0.61);
        \coordinate (be) at (0.55,0.1);
        \draw [thick,-{Stealth[scale=1.5]}] (ts)to(te);
        \draw [thick,-{Stealth[scale=1.5]}] (bs)to[out=-90,in=180](be);
      \end{scope}
  \end{tikzpicture}}
  \subfloat[]{\label{fig:fix1:t}
    \begin{tikzpicture}
      \node (img) {\includegraphics[height=2.8cm]{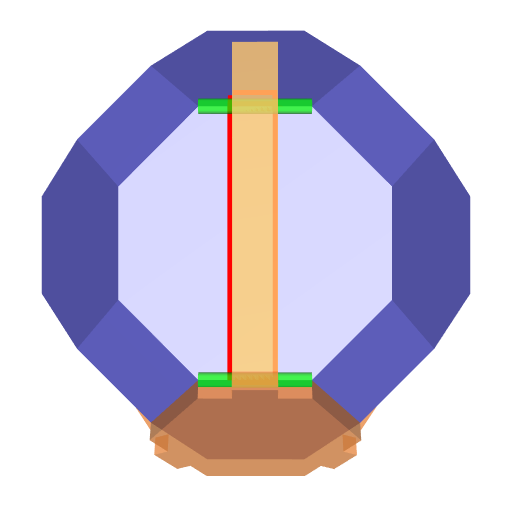}};
      \begin{scope}[scale=1.4]
        \node [inner sep=1pt](epbs) at (-0.32, -0.05) {\textcolor{black}{$e_{pb}$}};
        \coordinate (epbe) at (-0.16,-0.47);
        \draw [thick,-{Stealth[scale=1.5]}] (epbs)--(epbe);
        \node [,inner sep=1pt](ebgs) at (0.3, 0.1) {\textcolor{black}{$e_{bt}$}};
        \coordinate (ebte) at (0.15,0.55);
        \draw [thick,-{Stealth[scale=1.5]}] (ebgs)--(ebte);
      \end{scope}
  \end{tikzpicture}}
  \subfloat[]{\label{fig:degenerate}
    \begin{tikzpicture}[node distance=5cm]
      \node (img) {\includegraphics[width=0.20\linewidth]{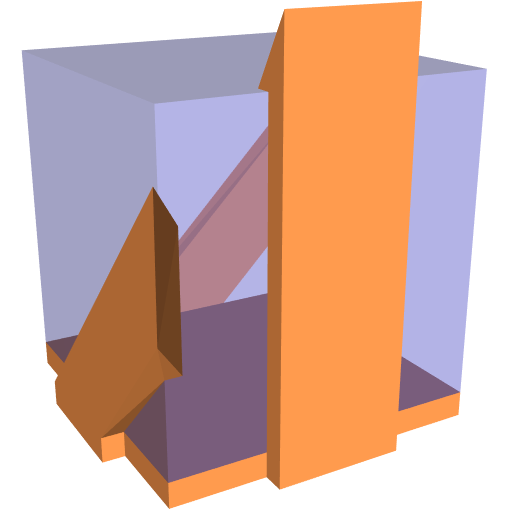}};
      \node [inner sep=1pt](ts) at (-0.5, 1.5) {$f_{t_1}$};
      \node [inner sep=1pt](bs) at (0.45, 0.0) {$f_{b_2}$};
      \node [inner sep=1pt](fs) at (1.1, 1.5) {$f$};
      \coordinate (te) at (-0.45,-0.2);
      \coordinate (be) at (0.1,0.0);
      \coordinate (fe) at (1.0,0.5);
      \draw [thick,-{Stealth[scale=1.5]}] (ts)--(te);
      \draw [thick,-{Stealth[scale=1.5]}] (fs)--(fe);
    \end{tikzpicture}}
  \setlength{\abovecaptionskip}{0pt}
  \setlength{\belowcaptionskip}{0pt}
  \caption[]{(\subref*{fig:fix1:o}), (\subref*{fig:fix1:f}),
    (\subref*{fig:fix1:t}) Different views of a truncated cuboctahedron
    (blue) and a snapping fixture (orange).
    (\subref*{fig:degenerate}) A transparent cube (blue) and a snapping
    fixture (orange).}
  \label{fig:wrap}
  \vspace{-20pt}
\end{figure}

\subsection{The Configuration Space}
\label{ssec:terms:configuration-space}
The workpiece and its snapping fixture form an assembly.  Each joint in
the fixture connects two parts; it enables the rotation of one part
with respect to the other about an axis. Each joint adds one degree
of freedom (DOF) to the configuration space of the assembly.

In our context, the workpiece and its snapping fixture are considered
assembled, if they are \emph{infinitesimally inseparable}. When two
polyhedra are infinitesimally inseparable, any linear motion applied
to one of the polyhedra causes a collision between the polyhedra interiors.
  The workpiece and the fixture are in the \emph{serving configuration}
  if (i) they are separated (that is, they are arbitrarily far away
  from each other), and (ii) there exists a vector $v$, such that when
  the fixture is translated by $v$, as a result of some force applied
  in the direction of $v$, exploiting the flexibility of the joints of
  the fixture, the workpiece and the fixture become assembled.
When the workpiece and its snapping fixture are separated, the fixture
can be transformed without colliding with the workpiece to reach the
serving configuration.\footnote{The video clip available at
\url{http://acg.cs.tau.ac.il/projects/ossf/snapping_fixtures.mp4}
illustrates the snapping operation.}

\subsection{Spreading Degree}
\label{sssec:terms:spreading-degree}
\vspace{\subsectionVSpace}
\setlength{\intextsep}{0pt}%
\begin{wrapfigure}[4]{r}{1.45cm}
  \centering%
  \includegraphics[trim={1cm 0 0 0},width=1.45cm]{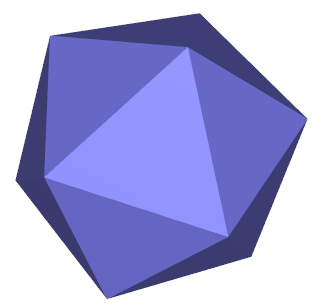}
\end{wrapfigure}
The \emph{spreading degree} is the number of facets involved in the
definition of a finger.  In this paper we restrict ourselves to
snapping fixtures that have fingers with spreading degree two, which
means that the body of every finger is based on a single facet of $P$.
Every finger (the body and the fingertip) stretches over two facets of
$P$. Naturally, fingers with a higher spreading-degree reach
further. An icosahedron, for example, (depicted in the figure above)
does not admit a valid fixture with spreading degree two. This is
proven by exhaustion running our implemented algorithm.
\setlength\intextsep{\intextsepSaved}%
\setlength\columnsep{\columnsepSaved}%

\subsection{Fixture Planning}
\label{ssec:terms:planning}
\vspace{\subsectionVSpace}
The basic objective of our fixture algorithms is obtaining fixtures
with the minimal number of fingers. Our generator is of the exhaustive
type. As explained in Section~\ref{sec:algorithm}, it examines many
different possible candidates of fingers, before it reaches a
conclusion. The simple (and implemented) algorithm, for example,
visits every valid fixture (of 2,~3, or~4) fingers; thus, it can be
used to produce all or some valid fixtures according to any
combination of optimization criteria.  As aforementioned, the
generator synthesizes fixtures of spreading degree two. Extending the
generator to enable the synthesis of fixtures with an increased
spreading degree (without further modifications) will directly
increase the search space exponentially.

\subsection{Properties}
\label{ssec:terms:properties}
\vspace{-2pt}

\begin{definition}[unit circle, semicircle, open semicircle]
  An open semicircle is a semicircle excluding its two endpoints. An
  open hemisphere is a hemisphere excluding the great circle that
  comprises its boundary curve.
\end{definition}

\begin{definition}[Covering set]
\looseness=-1%
  Let $\calS=\{s_1,...,s_{|\calS|}\}$ be a finite set of subsets of
  $\Rd$ and $C$ be a set of points in $\Rd$. If
  $\bigcup_{i=1}^{|\calS|} s_{i} \supseteq C$ then $\calS$ is a
  covering set of $C$.
\end{definition}


A pair of open unit semicircles (respectively, hemispheres) are called
antipodal if the closure of their union is the entire unit circle
(respectively, sphere).


For lack of space, we defer portions of the formal analysis to the
appendix. In particular a sequence of lemmas proved in the appendix
yield the following corollaries. The proofs of the corollaries,
observations and theorem in the remainder of this section, appear in
Appendix~\ref{sec:proofs}.

\begin{corollary}\label{corollary:parallelepiped:2:minimal}
  Let $\calR$ be a set of four open unit semicircles that cover the unit
  circle $\SOone$. $\calR$ is minimal (i.e., for every open semicircle
  $s \in \calR$, $\calR \setminus \{s\}$ is not a covering set of \SOone)
   iff it consists of two antipodal pairs of open unit semicircles.
\end{corollary}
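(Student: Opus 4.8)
The plan is to reduce the statement to a one‑dimensional combinatorial fact about arcs on a circle. First I would set up coordinates: every open unit semicircle is an open circular arc of length exactly $\pi$, so it is determined by the angular position $\theta\in[0,2\pi)$ of its midpoint. Write $\calR=\{s_1,\dots,s_4\}$ and let $\theta_1,\theta_2,\theta_3,\theta_4$ be the midpoints listed in cyclic order around $\SOone$. Since $\calR$ is a \emph{set} of four semicircles, the midpoints are pairwise distinct, so the four cyclic gaps $g_1,g_2,g_3,g_4$ between consecutive midpoints (with $g_j$ the gap from $\theta_j$ to $\theta_{j+1}$, indices cyclic) are strictly positive and satisfy $g_1+g_2+g_3+g_4=2\pi$. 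In this language, two semicircles are antipodal exactly when their midpoints differ by $\pi$, and ``$\calR$ is two antipodal pairs'' means, after relabelling, $\theta_3=\theta_1+\pi$ and $\theta_4=\theta_2+\pi$.

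Next I would establish the covering criterion I need: a finite family of open unit semicircles covers $\SOone$ iff every cyclic gap between consecutive midpoints is strictly less than $\pi$. The idea is that a point lying in a gap of length $g$ fails to be covered by the two arcs bounding that gap iff its angular distances to both bounding midpoints are at least $\pi/2$, which is possible iff $g\ge\pi$; and any arc whose midpoint lies outside the gap is even farther from such a point. (This elementary lemma is the one place that requires a little care with open‑ versus closed‑endpoint bookkeeping, in particular the degenerate case $g=\pi$; it may well already be available among the lemmas proved in the appendix.) An immediate consequence is that deleting $s_j$ merges the two gaps $g_{j-1}$ and $g_j$ incident to $\theta_j$ and leaves the remaining gaps — which were already $<\pi$ because $\calR$ covers — untouched; hence $\calR\setminus\{s_j\}$ fails to cover $\SOone$ iff $g_{j-1}+g_j\ge\pi$.

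With this in hand both directions are short. For the forward direction, minimality of $\calR$ gives $g_{j-1}+g_j\ge\pi$ for every $j\in\{1,2,3,4\}$; summing these four inequalities, the left side equals $2(g_1+g_2+g_3+g_4)=4\pi$, so every inequality must be tight, i.e.\ $g_{j-1}+g_j=\pi$ for all $j$. Subtracting consecutive equalities forces $g_1=g_3$ and $g_2=g_4$ with $g_1+g_2=\pi$, whence $\theta_3=\theta_1+\pi$ and $\theta_4=\theta_2+\pi$, so $\{s_1,s_3\}$ and $\{s_2,s_4\}$ are antipodal pairs. For the converse, if $\calR$ consists of two (necessarily distinct) antipodal pairs, relabel so that $\theta_1<\theta_2<\theta_1+\pi<\theta_2+\pi$ and set $a:=\theta_2-\theta_1\in(0,\pi)$; then the gap sequence is $a,\ \pi-a,\ a,\ \pi-a$. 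Each gap lies in $(0,\pi)$, so $\calR$ covers $\SOone$; and each consecutive pair of gaps sums to exactly $\pi$, so by the criterion above the removal of any single $s_j$ destroys the cover. Hence $\calR$ is minimal.

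The only genuine obstacle is proving the gap‑length covering criterion cleanly — everything after that is the averaging observation ``four numbers each at least $\pi$ and summing to $4\pi$ must all equal $\pi$.'' I would also double‑check the boundary cases (a gap equal to $\pi$, and the requirement that the two antipodal pairs be distinct so that $\calR$ really has four elements), but I expect no surprises there.
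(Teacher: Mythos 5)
Your proof is correct, but it takes a genuinely different route from the paper's. The paper proves the forward direction by reduction to its Helly-based machinery: if some semicircle $a\in\calR$ has no antipodal counterpart in $\calR$, then its closed complementary semicircle falls under case~(ii) of Lemma~\ref{lemma:A}, which supplies a two-element subcover of that closed semicircle and hence a three-element cover of $\SOone$, contradicting minimality; the converse is dispatched in one line by observing that deleting any member of two antipodal pairs leaves a single boundary point uncovered. You instead parameterize each open semicircle by its midpoint, prove the clean criterion that a family covers $\SOone$ iff every cyclic gap between consecutive midpoints is strictly less than $\pi$, translate minimality into $g_{j-1}+g_j\ge\pi$ for all $j$, and close with the averaging identity $\sum_j(g_{j-1}+g_j)=2\cdot 2\pi$ forcing all four inequalities to be tight. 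Both arguments are sound (your gap criterion, including the boundary case $g=\pi$, checks out, and it correctly identifies the uncovered point after a deletion as the midpoint of the merged gap). What each buys: the paper's version is essentially free given that Lemma~\ref{lemma:A} is needed anyway for the hemisphere analogues and keeps the appendix uniform; yours is self-contained, avoids Helly entirely, and as a byproduct characterizes exactly which single semicircles are redundant in an arbitrary four-element cover, which is more information than the corollary asks for.
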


\begin{corollary}\label{corollary:parallelepiped:2:no}
  Let $\calS$ be a set of distinct open unit semicircles that covers
  $\SOone$; if $|\calS| \geq 5$, then there exists $\calR \subset
  \calS$, $|\calR|=3$ and $\calR$ covers $\SOone$.
\end{corollary}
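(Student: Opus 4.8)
The plan is to recast the covering condition as a statement about points on the circle. Identify each open unit semicircle $s$ with its midpoint $c(s)\in\SOone$, so that $s$ is the open arc of angular length $\pi$ centered at $c(s)$; two open semicircles are antipodal, in the sense defined above, precisely when their midpoints are antipodal points of $\SOone$. A short duality argument shows that a family $\calT$ of open semicircles covers $\SOone$ if and only if its midpoint set $C(\calT)=\{c(s):s\in\calT\}$ is \emph{not} contained in any closed semicircle: if all the midpoints lie in the closed semicircle centered at a point $z$, then $-z$ lies in no member of $\calT$, and conversely an uncovered point $x$ forces all midpoints into the closed semicircle centered at $-x$. Equivalently, listing the midpoints cyclically, $\calT$ covers iff every gap between two cyclically consecutive midpoints has length strictly less than $\pi$; specializing to a triple $\{s_1,s_2,s_3\}$, it covers iff $c(s_1),c(s_2),c(s_3)$ are distinct and cut $\SOone$ into three arcs each of length $<\pi$.

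Now let $\calS$ cover $\SOone$ with $|\calS|\ge5$, and put $U=C(\calS)$, so $|U|\ge5$ and all cyclic gaps of $U$ are $<\pi$. I would split into two cases according to the antipodal structure of $U$. In the first case some $u\in U$ has its antipode $-u$ not in $U$. Then $-u$ lies strictly inside some gap of $U$, say the counterclockwise arc from $\theta_j$ to $\theta_{j+1}$, whose length $g_j$ is $<\pi$; since $g_j<\pi$ the point $u$ differs from $\theta_j$ and from $\theta_{j+1}$, and since $-u$ lies between $\theta_j$ and $\theta_{j+1}$ the point $u$ lies between their antipodes, hence on the long arc running from $\theta_{j+1}$ to $\theta_j$, which it splits into two subarcs each of length in $(\pi-g_j,\pi)$. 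Thus the three arcs determined by $u$, $\theta_j$, and $\theta_{j+1}$ have lengths $g_j$ and two values in $(\pi-g_j,\pi)$, all $<\pi$, so the three semicircles of $\calS$ with these midpoints form the desired covering triple.

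In the second case every point of $U$ has its antipode in $U$, so $U$ is a disjoint union of antipodal pairs (the antipodal map on $\SOone$ is fixed-point free) and $|U|$ is even; hence $|U|\ge6$ and $U$ contains at least three antipodal pairs. Choose three of them and list their six points cyclically as $\phi_1<\dots<\phi_6$; antipodality gives $\phi_{i+3}=\phi_i+\pi$, so the six consecutive gaps $d_1,\dots,d_6$ satisfy $d_{i+3}=d_i$, and therefore $d_1+d_2+d_3=\pi$. The triple $\{\phi_1,\phi_3,\phi_5\}$ — one point from each chosen pair — cuts $\SOone$ into arcs of lengths $d_1+d_2$, $d_3+d_4=d_3+d_1$, and $d_5+d_6=d_2+d_3$, each of which equals $\pi$ minus one of the positive numbers $d_1,d_2,d_3$, hence lies in $(0,\pi)$. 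So the corresponding three semicircles form a covering triple, finishing this case.

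The point deserving the most care is why the hypothesis $|\calS|\ge5$ cannot be relaxed: the only obstruction to a covering triple is the ``two antipodal pairs'' configuration of Corollary~\ref{corollary:parallelepiped:2:minimal}, and that configuration uses four semicircles; a fifth semicircle must either supply a midpoint whose antipode is absent (the first case above) or complete a third antipodal pair (the second case, which genuinely needs three pairs). Beyond that the argument is routine, but one must keep all the arc-length inequalities strict — a gap of exactly $\pi$ already leaves one direction uncovered — and be careful with the cyclic ``mod $2\pi$'' bookkeeping in the two arc computations.
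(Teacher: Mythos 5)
Your argument is correct, and it takes a genuinely different route from the paper. The paper argues by contradiction: it invokes Lemma~\ref{lemma:B} to extract a minimal covering subset of size four, applies Corollary~\ref{corollary:parallelepiped:2:minimal} to conclude that this subset is two antipodal pairs, and then feeds a fifth semicircle back into case~(ii) of the Helly-based Lemma~\ref{lemma:A} to cover the complementary closed semicircle with two elements, yielding a $3$-cover and a contradiction. You instead make the underlying duality explicit --- a family of open semicircles covers $\SOone$ iff its midpoint set is not contained in any closed semicircle, i.e.\ iff all cyclic gaps between consecutive midpoints are strictly less than $\pi$ --- and then split on the antipodal structure of the midpoint set, in each case exhibiting a covering triple constructively (the unpaired-antipode case and the three-antipodal-pairs case, whose arc-length computations $d_1+d_2=\pi-d_3$, etc., are all correct). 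What your approach buys is a self-contained, constructive proof that does not lean on Lemmas~\ref{lemma:A}--\ref{lemma:B} or Corollary~\ref{corollary:parallelepiped:2:minimal} at all, and that identifies exactly which three semicircles to keep; it also sidesteps a small awkwardness in the paper's wording (``the sole semicircle in $\calS\setminus\calR$'' tacitly assumes $|\calS|=5$). What the paper's route buys is uniformity: the same Helly/central-projection machinery is reused verbatim for the spherical analogue (Corollary~\ref{corollary:parallelepiped:3:no}), whereas your explicit midpoint-sorting argument is specific to dimension one and would need rethinking on $\SOtwo$. Both proofs correctly keep the arc-length inequalities strict, which is the one place an error could hide.
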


Generalizing Corollaries~\ref{corollary:parallelepiped:2:minimal}
and~\ref{corollary:parallelepiped:2:no} to 3-space yields the
following.

\begin{corollary}\label{corollary:parallelepiped:3}
  Let $\calR$ be a set of six open unit hemispheres that cover the unit
  sphere $\SOtwo$. $\calR$ is minimal iff it consist of three antipodal
  pairs of open unit hemispheres.
\end{corollary}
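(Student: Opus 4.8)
The plan is to pass to the dual picture. Represent each open unit hemisphere $H\in\calR$ by its pole $u\in\SOtwo$, so $H=H_u:=\{x\in\SOtwo:\langle x,u\rangle>0\}$, and let $S=\{u_1,\dots,u_6\}$ be the set of poles; two hemispheres are antipodal exactly when $u_j=-u_i$. Two standard observations translate the statement: $\calR$ covers $\SOtwo$ iff the $u_i$ positively span $\R^3$, equivalently $0$ lies in the interior of $\mathrm{conv}(S)$; and $\calR$ is minimal in the stated sense iff no proper subfamily of $\calR$ covers $\SOtwo$, i.e.\ $S$ is an irredundant positive spanning set. So it suffices to show that an irredundant positive spanning set of $\R^3$ with six elements is exactly a set of the form $\{\pm w_1,\pm w_2,\pm w_3\}$ with $w_1,w_2,w_3$ linearly independent.

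For ``$\Leftarrow$'', suppose $\calR=\{H_{\pm w_1},H_{\pm w_2},H_{\pm w_3}\}$. Since $\calR$ covers $\SOtwo$ by hypothesis, and three linearly dependent $w_i$ would leave the common point of the great circles orthogonal to $w_1,w_2,w_3$ uncovered, the $w_i$ are linearly independent. Fix $\ell$ and let $y$ be the unit vector orthogonal to the two $w$'s other than $w_\ell$, with the sign chosen so that $\langle y,w_\ell\rangle>0$ (possible, and the inner product is nonzero, by independence). Then $y\in H_{w_\ell}$, while $\langle y,-w_\ell\rangle<0$ and $\langle y,\pm w_i\rangle=0$ for $i\neq\ell$, so $y$ lies in no other hemisphere of $\calR$; hence $\calR\setminus\{H_{w_\ell}\}$ does not cover $\SOtwo$. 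By symmetry this holds for the removal of any hemisphere, so $\calR$ is minimal.

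For ``$\Rightarrow$'', if $S$ is centrally symmetric, $S=-S$, then — as $|S|=6$ and no unit vector is its own antipode — $S$ partitions into three antipodal pairs, and covering forces their representatives linearly independent; we are done. Otherwise pick $u_1\in S$ with $-u_1\notin S$. Since $\SOtwo=H_{u_1}\sqcup\overline{H_{-u_1}}$ and $H_{u_1}$ covers only $H_{u_1}$, the other five hemispheres cover the closed hemisphere $D:=\overline{H_{-u_1}}=\{x:\langle x,u_1\rangle\le0\}$, and minimally so: a four-element subcover of $D$ would give, together with $H_{u_1}$, a five-element subcover of $\SOtwo$. Write $D=E_1\sqcup H_{-u_1}$ with $E_1$ its boundary great circle, identify $u_1^{\perp}\cong\R^2$, and apply the gnomonic (central) projection from the origin onto the tangent plane at $-u_1$, which maps the open cap $H_{-u_1}$ bijectively onto $\R^2$. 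Because $-u_1\notin S$, for each $j\ge2$ the vector $a_j:=\pi(u_j)$ (with $\pi$ the orthogonal projection onto $u_1^{\perp}$) is nonzero, and $H_{u_j}\cap H_{-u_1}$ maps onto the open half-plane $\Pi_j:=\{z:\langle z,a_j\rangle>c_j\}$ with $c_j:=\langle u_1,u_j\rangle$. Hence $\{H_{u_j}\}_{j\ge2}$ covers $D$ iff the five half-planes $\Pi_j$ cover $\R^2$ (covering the cap $H_{-u_1}$) \emph{and} the five directions $a_j$ positively span $\R^2$ (covering $E_1$, since the $a_j$ are the centre directions of the open semicircles $H_{u_j}\cap E_1$).

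Minimality of the $D$-cover now says: for every $j\ge2$, deleting $H_{u_j}$ destroys at least one of these two conditions. But Corollary~\ref{corollary:parallelepiped:2:no} provides a three-element subset $T_S$ of the directions that already positively spans $\R^2$, and Helly's theorem in the plane, applied to the complementary closed half-planes $\R^2\setminus\Pi_j$ (whose common intersection is empty), provides a three-element subset $T_C$ of the half-planes that already covers $\R^2$. If some choice of $T_S,T_C$ has $|T_S\cap T_C|\ge2$, then $|T_S\cup T_C|\le4<5$, so there is an index $j\notin T_S\cup T_C$; deleting $H_{u_j}$ leaves a spanning subset of directions and a covering subset of half-planes, hence still covers $D$, contradicting minimality. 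What remains are the ``tight'' configurations in which every spanning triple and every covering triple of indices meet in exactly one index; there $T_C$ must contain both indices outside $T_S$, and a short case analysis — using Corollary~\ref{corollary:parallelepiped:2:minimal}, the elementary fact that a pair of half-planes covering $\R^2$ must have antiparallel directions (so any non‑spanning covering triple contains such a pair), and the parallelism this forces on the remaining direction of $T_C$ — either exposes an immediately redundant hemisphere (again contradicting minimality) or forces all five directions $a_j$ into a common closed half-plane, contradicting that they positively span $\R^2$. I expect this last combinatorial step — reconciling the ``equator'' constraint (the $a_j$ spanning $\R^2$) with the ``cap'' constraint (the $\Pi_j$ covering $\R^2$), both carried by the same five hemispheres — to be where the real work lies.
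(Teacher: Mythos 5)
Your $(\Leftarrow)$ direction is complete and correct; exhibiting, for each $H_{w_\ell}$, a witness point of $\SOtwo$ that lies in $H_{w_\ell}$ and in no other member of $\calR$ is in fact a little cleaner than the paper's argument, which instead restricts to a great circle and invokes Corollary~\ref{corollary:parallelepiped:2:minimal}.

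The $(\Rightarrow)$ direction has a genuine gap, and it sits exactly at the crux. Your reduction coincides with the paper's: if some $u_1\in S$ has $-u_1\notin S$, the remaining five hemispheres cover the closed hemisphere $D=\SOtwo\setminus H_{u_1}$, none of them equals the interior of $D$, and it suffices to extract a four-element subcover of $D$ to contradict minimality. The paper disposes of precisely this situation by citing case~(ii) of Lemma~\ref{lemma:C}: map $D$ to the extended plane $\PPtwo$, so that each hemisphere becomes an open half-plane \emph{together with} an open half of the circle of signed directions at infinity (the image of the equator); Helly yields a minimal subfamily of at most three half-planes covering $\R^2$, and the only obstruction to its covering the points at infinity as well is an antiparallel pair, which misses exactly two antipodal directions and costs at most two further hemispheres --- a subcover of size at most four in every case. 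You instead decouple the cap condition (the $\Pi_j$ cover $\R^2$) from the equator condition (the $a_j$ positively span $\R^2$) and must then reconcile a Helly triple for the first with a spanning triple for the second. That reconciliation is exactly what you do not carry out: the ``short case analysis'' for the tight configurations is asserted, not performed, and you say yourself that this is where the real work lies. As written, the corollary is not proved. Two repairs are needed even inside your framework. First, Corollary~\ref{corollary:parallelepiped:2:no} requires the five semicircles to be \emph{distinct}; your $a_j$ need not be, and you must first argue (it is true, via nesting of the corresponding $\Pi_j$ and equality of the corresponding equatorial semicircles) that a repeated direction makes one hemisphere redundant on $D$ outright. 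Second, and more importantly, the case analysis should be replaced by the following observation, which is essentially Lemma~\ref{lemma:C}(ii) in your coordinates: a \emph{minimal} Helly subfamily $T_C$ of half-planes covering $\R^2$ either consists of two half-planes with antiparallel normals --- in which case it already covers the cap and all of the equator except two antipodal points, each of which is absorbed by one more hemisphere supplied by the spanning hypothesis, giving $|T_C|+2=4$ --- or consists of three pairwise non-parallel half-planes, whose normals then automatically positively span $\R^2$, so $T_C$ alone covers $D$ with three elements. Either branch contradicts minimality of the five-element cover of $D$ and closes the proof.
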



\begin{corollary}\label{corollary:parallelepiped:3:no}
  Let $\calS$ be a set of distinct open unit hemispheres that covers
  $\SOtwo$; if $|\calS| \geq 7$, then there exists $\calR \subset
  \calS$, $|\calR|=5$ and $\calR$ covers $\SOtwo$.
\end{corollary}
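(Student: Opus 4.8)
The plan is to reduce Corollary~\ref{corollary:parallelepiped:3:no} to its planar analogue, Corollary~\ref{corollary:parallelepiped:2:no}, by slicing the sphere with a suitable great circle. Given a set $\calS$ of distinct open unit hemispheres covering $\SOtwo$ with $|\calS|\geq 7$, each hemisphere $h_i$ is determined by an outward normal $u_i\in\SOtwo$ (the hemisphere being $\{x\in\SOtwo : \langle x,u_i\rangle > 0\}$). First I would pick a great circle $\gamma$ in ``general position'' with respect to the normals $u_1,\dots,u_{|\calS|}$ — that is, a plane through the origin whose normal is not parallel to any $u_i$ and such that $\gamma$ avoids the finitely many degenerate directions. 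The key observation is that the intersection of an open hemisphere $h_i$ with $\gamma$ is exactly an open unit semicircle of $\gamma$ (it is nonempty and a proper open arc precisely because $u_i$ is not $\pm$ the normal of $\gamma$), and these semicircles are pairwise distinct for a generic choice of $\gamma$. Since the $h_i$ cover $\SOtwo$, their restrictions cover $\gamma$; hence we have at least seven distinct open unit semicircles covering the unit circle $\gamma$.

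Next I would invoke Corollary~\ref{corollary:parallelepiped:2:no}: from these $\geq 7\geq 5$ semicircles covering $\gamma$ we extract a subcollection of three that already cover $\gamma$, say the restrictions of $h_{i_1},h_{i_2},h_{i_3}$. Now I have three hemispheres whose union contains the entire great circle $\gamma$. The remaining task is to add two more hemispheres so that all five together cover $\SOtwo$. The circle $\gamma$ divides $\SOtwo$ into two open hemispherical ``caps''; what is left uncovered by $h_{i_1},h_{i_2},h_{i_3}$ is a closed region contained in one or both of these caps. I would handle each cap separately: restricting $\calS$ to the part of the sphere on one side of $\gamma$ and using a covering/compactness argument (or directly re-using the structure of Corollary~\ref{corollary:parallelepiped:2:no} applied to slightly-rotated great circles foliating the cap) to show that a single additional hemisphere from $\calS$ suffices to finish off the uncovered part of that cap. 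Doing this for the two caps yields $\calR=\{h_{i_1},h_{i_2},h_{i_3},h_{i_4},h_{i_5}\}$ with $|\calR|=5$ covering $\SOtwo$.

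The step I expect to be the main obstacle is precisely the last one: showing that, once a great circle is covered by three hemispheres, only \emph{one} extra hemisphere is needed per side. A naive argument would spend one hemisphere per uncovered ``corner region'' and could blow past the budget of five. The clean way to control this is to choose the great circle $\gamma$ more carefully — not arbitrarily generic, but adapted to the covering — e.g. to pass through a point $p$ on $\SOtwo$ that lies on the boundary of several hemispheres, or to align $\gamma$ with a face of the arrangement of the great circles $\partial h_i$ on $\SOtwo$. With such a choice the three hemispheres covering $\gamma$ can be taken to also cover a neighborhood strip of $\gamma$, so that the genuinely uncovered remainder sits strictly inside each open cap, and then a single hemisphere containing the ``pole'' of each cap (which exists because $\calS$ covers the poles and, by minimality-type reasoning from the corollaries, the structure near a pole is constrained) mops up that cap. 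One must also double-check the distinctness of the five chosen hemispheres, which follows since they already restrict to distinct arcs on $\gamma$ for the first three, and the two polar hemispheres can be chosen distinct from these and from each other by a counting argument using $|\calS|\geq 7$. Finally I would note that the bound $5$ is tight by Corollary~\ref{corollary:parallelepiped:3} (three antipodal pairs form a minimal cover of size $6$ with no $5$-subcover, matching the ``$\geq 7$'' hypothesis).
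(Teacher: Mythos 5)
Your reduction to the circle via a generic great circle $\gamma$ is fine as far as it goes, but the step you yourself flag as the main obstacle --- that after three hemispheres of $\calS$ cover $\gamma$, a \emph{single} additional hemisphere of $\calS$ suffices per cap --- is not just unproven, it is false in general, and no choice of $\gamma$ can repair it without bringing the seventh hemisphere into play. Concretely, let $\calS_0$ consist of the six coordinate hemispheres $\{x\in\SOtwo : \pm x_i>0\}$, $i=1,2,3$ (three antipodal pairs). For a generic great circle $\gamma$ these restrict to six distinct open semicircles covering $\gamma$, and one checks that three of them (one per pair, with suitably chosen signs so that the three center directions positively span the plane of $\gamma$) already cover $\gamma$; yet by Corollary~\ref{corollary:parallelepiped:3} no five of these six hemispheres cover $\SOtwo$, so the two caps cannot be finished with two further members. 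The structural reason your plan cannot close is that the hypothesis $|\calS|\geq 7$ enters your argument only through distinctness bookkeeping, whereas the statement is false for $|\calS|=6$; any correct proof must make the seventh hemisphere do real work in the covering step. Your proposed repair (a polar hemisphere ``mops up'' each cap) also does not hold up on its own terms: the uncovered residue in a cap is a spherically convex closed set that is contained in \emph{some} open hemisphere, but there is no reason that hemisphere belongs to $\calS$, and several members of $\calS$ may be needed to cover it.

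The paper takes a different, contradiction-based route that pinpoints exactly where the seventh hemisphere is used. If no $5$-element subcover exists, then (by Lemma~\ref{lemma:D}) a minimal subcover $\calR\subseteq\calS$ has size exactly $6$, and Corollary~\ref{corollary:parallelepiped:3} forces $\calR$ to be three antipodal pairs. Take any $h\in\calS\setminus\calR$; since $\calR$ is closed under antipodes and $h\notin\calR$, the interior of the closed hemisphere $\bar h=\SOtwo\setminus h$ is not in $\calR$, so case~(ii) in the proof of Lemma~\ref{lemma:C} applies and yields a subcover of $\bar h$ of size $3$ or $4$; adding $h$ gives a cover of $\SOtwo$ of size at most $5$, a contradiction. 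If you want to keep a direct slicing argument, you would need to show that the three circle-covering hemispheres can be chosen so as to break the three-antipodal-pairs configuration, which is precisely where an appeal to the seventh hemisphere (and something like Corollary~\ref{corollary:parallelepiped:2:no} applied to \emph{all} restrictions, not just any three) would have to appear.
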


When a facet $f$ of the workpiece partially coincides with a facet of the
fixture, the workpiece cannot translate in any direction that forms an
acute angle with the (outer) normal to the plane containing $f$
(without colliding with the fixture). This set of blocking directions
comprises an open unit hemisphere denoted as $h(f)$. Similarly,
$H(\calF) = \{h(f)\,|\,f \in \calF\}$ denotes the mapping from a
set of facets to the set of corresponding open unit hemispheres; see,
e.g., \cite{shb-spspm-17}. Let $\calF'$ denote the set of facets of
the workpiece that are coincident with facets of the fixture in some
fixed configuration. If the union of all blocking directions covers the
unit sphere in that configuration, formally stated $\SOtwo = \bigcup
H(\calF')$, then the workpiece cannot translate at all.

Let $\calF$ denote the set of all facets of the fixture $G$. Let
$\calF_{P}$ denote the singleton that consists of the base facet of
the palm of $G$, and let $f_{b_{i}}$ and $f_{t_{i}}$, $1\leq i\leq k$,
denote the base facet of the body and the base facet of the fingertip,
respectively, of the $i$-th finger of $G$, where $k$ indicates the
number of fingers. Let $\calF_{B}=\{f_{b_{i}}\,|\,1\leq i\leq k\}$ and
$\calF_{T}=\{f_{t_{i}}\,|\,1\leq i\leq k\}$ denote the set of the base
facets of the body parts of the fingers of $G$ and the set of the base
facets of the fingertip parts of the fingers of $G$, respectively. Let
$\calF_{PBT}$ denote the set of all base facets of the parts of $G$,
that is $\calF_{PBT} = \calF_{P} \cup \calF_{B} \cup \calF_{T}$.  Let
$\calF_{PB}$ denote the set of all base facets of the parts of $G$
excluding the base facets of the fingertips, that is, $\calF_{PB} =
\calF_{P} \cup {\calF}_{B}$.

If the fixture resists any linear force applied on the workpiece while
in the assembled state and there exists a collision free path (in the
configuration space) between any separated configuration and the
assembled configuration then our fixture is valid. We relax the second
condition for practical reasons; instead of requiring a full path,
we require a path of infinitesimal length. Formally we get:

\begin{enumerate}
\item[\defcond{expression:valid-fixture:1}:] $\SOtwo=\bigcup H(\calF_{PBT})$.
\item[\defcond{expression:valid-fixture:2}:] $\SOtwo\neq\bigcup H(\calF_{PB})$.
\end{enumerate}
If the second condition holds, a serving state exists (assuming the
flexibility of the joints cancels out the obstruction induced by the
presence of the fingertips).

A candidate finger of an input polyhedron $P$ is a valid finger of at
least one possible fixture $G$ of $P$.

\begin{observation}\label{observation:candidate-fingers}
  The number of candidate fingers of an input polyhedron $P$ is linear
  in the number of vertices of $P$.
\end{observation}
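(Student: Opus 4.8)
The plan is to bound the number of candidate fingers by counting, combinatorially, the objects that can define a finger of spreading degree two. Recall from Section~\ref{ssec:terms:structure} that a finger $F$ is completely determined by a triple of facets $(f_p, f_b, f_t)$ of $P$: the palm facet $f_p$, the body facet $f_b$ which must be a neighbor of $f_p$ (sharing the edge $e_{pb}$), and the fingertip facet $f_t \neq f_p$ which must be a neighbor of $f_b$ (sharing the edge $e_{bt}$). Hence every candidate finger corresponds to an incidence configuration consisting of two adjacent edges $e_{pb}, e_{bt}$ of the polyhedron that share the common facet $f_b$, together with a choice of which side is the palm. So it suffices to count such incidence configurations.

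First I would observe that the number of such configurations is at most proportional to the number of pairs $(e, f)$ where $e$ is an edge of $P$ and $f$ is a facet incident to $e$: given $f_b$ and its two bounding edges $e_{pb}$ and $e_{bt}$, the finger is fixed (up to the palm/fingertip orientation, a factor of $2$). Since each edge is incident to exactly two facets in a closed polyhedron, the number of such (edge, facet) incidences is exactly $2|E|$, where $|E|$ is the number of edges. By Euler's formula for a polyhedron of genus $g$, namely $|V| - |E| + |F| = 2 - 2g$, together with the fact that each facet has at least three bounding edges (so $2|E| \geq 3|F|$), we get $|E| \leq 3|V| - 6 + 6g$. Since $g$ is bounded by a constant by assumption, $|E| = O(|V|)$, and likewise $|F| = O(|V|)$.

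The second step is to turn ``pairs of adjacent edges of a common facet'' into a linear count. For a fixed facet $f_b$ with $d_{f_b}$ bounding edges, the number of ordered pairs of distinct bounding edges is $d_{f_b}(d_{f_b}-1)$, which is quadratic in $d_{f_b}$, so summing naively over all facets does not immediately give a linear bound. The main obstacle is therefore that a single facet with many edges could, a priori, host quadratically many candidate fingers. I would resolve this by invoking the constraints for a candidate finger more carefully: a candidate finger must be a valid finger of \emph{some} fixture $G$, and Condition~\ref{expression:valid-fixture:1} forces the blocking hemispheres of the base facets to cover $\SOtwo$. Combined with Corollary~\ref{corollary:parallelepiped:3:no}, this caps the number of facets participating in any one fixture by a constant, which constrains the geometry but not yet the total over all fixtures. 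The cleaner route — and the one I would actually pursue — is to note that in our setting the body facet is an $\alpha_b$-extrusion of a \emph{single} facet and the two joint edges $e_{pb}$, $e_{bt}$ are \emph{consecutive} along the boundary of $f_b$ (this is exactly what spreading degree two means geometrically: the body spans one facet, the fingertip the adjacent one), so there are only $O(d_{f_b})$ admissible consecutive-edge pairs per facet, and $\sum_{f_b} d_{f_b} = 2|E| = O(|V|)$.

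Putting the pieces together: the number of candidate fingers is at most a constant times $\sum_{f_b} d_{f_b} = 2|E| \leq 6|V| - 12 + 12g = O(|V|)$ since $g = O(1)$. I expect the delicate point to be the precise geometric claim that $e_{pb}$ and $e_{bt}$ must be consecutive (or at least within a constant number of steps) along $\partial f_b$; if in the general formal definition they need only both bound $f_b$, then I would instead argue that among the $d_{f_b}(d_{f_b}-1)$ ordered pairs, only $O(d_{f_b})$ can extend to a valid fixture because the fingertip direction, the body direction, and the palm direction must have blocking hemispheres compatible with a covering of $\SOtwo$ by a constant-size sub-family (Corollary~\ref{corollary:parallelepiped:3:no}), forcing all but $O(1)$ choices of $e_{bt}$ (for fixed $f_b$ and $e_{pb}$) to be infeasible — but the consecutive-edges observation is the simplest and I would lead with it.
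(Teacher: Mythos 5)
Your charging scheme starts from the wrong object. In this paper a finger is the pair $(f_b,f_t)$ --- equivalently, the edge $e_{bt}$ together with a choice of which of its two incident facets plays the body and which the fingertip --- and the palm is \emph{not} part of the finger's identity (in the algorithm a fixture is encoded as a palm index $i$ plus a set of pairs $(j,\ell)$, and the map $M$ associates palms with fingers precisely because one and the same finger can extend many palms). The paper's proof is accordingly a one-line charge: every candidate finger is built on a single edge, at most two fingers per edge, hence at most $2|E|=O(n)$ fingers by Euler's formula. Your Euler-formula step, including the genus term, is fine and in fact slightly more careful than the paper's.

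The gap is that by folding the palm into the finger you are counting palm--finger incidences, i.e., the entries of $M$, and that quantity is genuinely $\Theta(n^2)$ in the worst case: a facet $f_b$ of degree $d$ contributes on the order of $d^2$ triples $(f_p,f_b,f_t)$, and $\sum_f \deg(f)^2$ can be quadratic (the paper itself states that Phase~1 creates $O(n^2)$ palm--finger combinations). Both of your attempted repairs fail. The claim that $e_{pb}$ and $e_{bt}$ must be consecutive along $\partial f_b$ is not part of the definition: the fingertip may be based on \emph{any} neighbor of $f_b$ other than $f_p$, and the paper explicitly calls the situation where $e_{pb}$ and $e_{bt}$ share a vertex a \emph{degenerate} case, so spreading degree two does not give you consecutiveness. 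The fallback via \Cref{expression:valid-fixture:1} and Corollary~\ref{corollary:parallelepiped:3:no} only bounds the number of facets participating in any \emph{single} fixture by a constant; it does not bound, for fixed $f_b$ and $e_{pb}$, how many distinct fingertips can each appear in \emph{some} valid fixture, which is what your per-facet $O(1)$ claim would need. Once the palm is dropped from the finger's identity, the statement follows immediately from your own $2|E|$ count, so the fix is to re-read the definition rather than to strengthen the geometry.
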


\begin{theorem}\label{theorem:upper-bound}
  Every valid snapping fixture can be converted to a four-finger snapping fixture.
  Sometimes four fingers are necessary.
\end{theorem}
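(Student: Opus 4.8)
The plan is to prove the two halves separately. For the lower bound (``sometimes four fingers are necessary''), I would exhibit a single workpiece on which no valid snapping fixture with three or fewer fingers exists, while some four-finger fixture does. A natural candidate is a parallelepiped (or a cube, as in Figure~\ref{fig:degenerate}): by \cref{corollary:parallelepiped:3}, any collection of open hemispheres covering $\SOtwo$ that comes from the outer normals of the box facets must contain three antipodal pairs to be minimal, and a covering by fewer than six hemispheres of the six axis-aligned type is impossible since the box normals come in only three antipodal directions and each open hemisphere misses its own equator. Translating this into the language of $\calF_{PBT}$: a fixture with $k$ fingers contributes at most $2k+1$ base facets, and for a box the achievable blocking hemispheres are so restricted that $k\le 2$ cannot satisfy \cref{expression:valid-fixture:1}; one then checks that $k=3$ forces a configuration violating \cref{expression:valid-fixture:2} (the body facets alone already cover $\SOtwo$), whereas a symmetric arrangement of four fingers, two ``opposing'' pairs, satisfies both conditions. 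I expect this case analysis to be short but to require care about which facets of the box can simultaneously host palm/body/fingertip parts without the parts overlapping in their interiors.

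For the upper bound I would argue that an arbitrary valid fixture $G$ can be pruned down to four fingers. Start from the facet sets $\calF_{PBT}$ and $\calF_{PB}$ of $G$. Condition~\ref{expression:valid-fixture:1} says $H(\calF_{PBT})$ covers $\SOtwo$; this is a covering of the sphere by distinct open hemispheres, so if it has at least seven members, \cref{corollary:parallelepiped:3:no} extracts a subfamily of five hemispheres that still covers. Iterating, I can assume the covering uses at most six of the base facets, and by \cref{corollary:parallelepiped:3} (minimality) I can take it to consist of at most three antipodal pairs. Now I must realize this reduced family as the base facets of an \emph{actual} fixture: the palm contributes one facet $f_p$, and each retained finger contributes up to two facets, its body facet $f_b$ (a neighbor of $f_p$) and its fingertip facet $f_t$ (a neighbor of $f_b$). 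The combinatorial bookkeeping is: at most six retained facets, one of which is the palm, so at most five ``finger'' facets, which pack into at most \emph{four} fingers (a finger can carry two of them), hence at most four fingers suffice to re-cover $\SOtwo$ and satisfy \cref{expression:valid-fixture:1}.

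The subtle point, and the one I expect to be the main obstacle, is ensuring that after pruning, Condition~\ref{expression:valid-fixture:2} still holds, i.e., that the body-and-palm facets of the pruned fixture do \emph{not} already cover $\SOtwo$, so that a serving configuration survives. Removing fingers only shrinks $\bigcup H(\calF_{PB})$, so if the original fixture was valid then dropping fingers cannot \emph{create} a violation of \cref{expression:valid-fixture:2}; the danger is subtler: when I invoke \cref{corollary:parallelepiped:3:no} I am free to keep \emph{which} hemispheres I like, and I must be sure to keep, among the five, at least one that corresponds to a fingertip facet $f_t\in\calF_T$ rather than to a body/palm facet, for otherwise the fingertips become redundant for closure and the ``relaxation'' that justified dropping the full-path requirement is no longer needed — actually this is harmless, since then the fixture already closes without fingertips and trivially a serving state exists. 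The real work is checking the converse direction of the pruning: that the retained finger facets $f_b,f_t$ are genuinely consistent with a legal finger (adjacency $f_p\!-\!f_b\!-\!f_t$, interiors disjoint), which may force me to keep a slightly larger but still $\le 4$ set of fingers, or to re-pair the surviving facets. I would handle this by observing that \cref{observation:candidate-fingers} guarantees each retained body facet extends to \emph{some} candidate finger, and that at most four such candidates are needed to cover the at-most-six retained hemispheres; combining this with the monotonicity of Condition~\ref{expression:valid-fixture:2} under finger removal closes the argument.
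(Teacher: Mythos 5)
Both halves of your proposal have genuine gaps. For the lower bound, your witness is wrong: a cube does admit a three-finger snapping fixture (see Figure~\ref{fig:cube} and Table~\ref{tab:experiments}, which reports 216 minimal three-finger fixtures for the cube), so your claim that $k=3$ on a box forces a violation of \cref{expression:valid-fixture:2} cannot hold. Concretely, take the palm on the bottom facet and three fingers whose bodies lie on three side facets and whose fingertips realize the remaining normals; then $H(\calF_{PBT})$ contains all six axis directions while $H(\calF_{PB})$ misses the upward one. The paper's witness is a purpose-built polyhedron (Figure~\ref{fig:special:1}), and its four-finger necessity is established not by a hand argument but by exhaustive search of the configuration space with the implemented generator; if you want a self-contained proof of the second sentence of the theorem you would have to supply an argument the paper itself does not give.

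For the upper bound, your reduction to at most six covering hemispheres is fine in spirit (the paper instead applies Lemma~\ref{lemma:C} to the closed hemisphere $\closedHS=\SOtwo\setminus H(\calF_P)$ to get a covering subset $\calR\subseteq H(\calF_{BT})$ with $|\calR|\in\{3,4,5\}$), but the step ``five finger facets pack into at most four fingers (a finger can carry two of them)'' is exactly where the work lies and is not justified: a priori each of the five retained hemispheres could come from a facet of a distinct finger (five fingertips, say), which would leave you with five fingers. The paper closes this hole in the $|\calR|=5$ case by noting that the hemisphere antipodal to $h(f_p)$ must be $h(f_t)$ for a \emph{fingertip} facet $f_t$ (two parallel facets of a polyhedron cannot be neighbors, so this facet cannot be a body facet adjacent to the palm), and that the single finger $F$ carrying $f_t$ also accounts for $h(f_b)$; discarding both leaves only three hemispheres to realize, hence $F$ plus at most three further fingers. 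You gesture at ``re-pairing the surviving facets'' but never supply this argument, and your appeal to Observation~\ref{observation:candidate-fingers} does not bound the number of fingers needed to host five prescribed facets. Your remark that \cref{expression:valid-fixture:2} is monotone under finger removal is correct and is indeed all that is needed on that front.
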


\begin{observation}\label{observation:one-finger}
  A single-finger fixture does not exist.
\end{observation}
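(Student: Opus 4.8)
The plan is to show that a fixture with a single finger cannot satisfy \cref{expression:valid-fixture:1}, i.e.\ that the blocking directions of its base facets cannot cover $\SOtwo$. A single-finger fixture has exactly three base facets: the palm facet $f_p$, the body facet $f_b$, and the fingertip facet $f_t$, so $\calF_{PBT} = \{f_p, f_b, f_t\}$ and we must rule out $\SOtwo = h(f_p)\cup h(f_b)\cup h(f_t)$. The key observation is a counting/measure argument on the sphere: each $h(f)$ is an \emph{open} hemisphere, hence has area exactly $2\pi$ (half the sphere), so three of them have total area at most $6\pi$, which does leave room for a cover in principle; therefore a pure area bound is not enough, and I would instead argue via antipodal points. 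The crucial point is that an open hemisphere $h(f)$ contains \emph{at most one} of any antipodal pair of directions $\{u,-u\}$: if $u$ makes an acute angle with the outer normal $n_f$, then $-u$ makes an obtuse angle with $n_f$ and is therefore not in $h(f)$ (and if $u\perp n_f$, neither $u$ nor $-u$ lies in the open hemisphere). Hence each hemisphere ``kills'' at most one point from each antipodal pair.

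First I would make this precise: pick any direction $u$ on the sphere that is orthogonal to none of the three normals $n_{f_p}, n_{f_b}, n_{f_t}$ (such $u$ exist since the three great circles $\{n_{f_p}^\perp, n_{f_b}^\perp, n_{f_t}^\perp\}$ have measure zero on $\SOtwo$). For this $u$, each of the three hemispheres contains exactly one of $u, -u$. Now consider the antipodal pair $\{u,-u\}$: if all three hemispheres contained $u$, then none would contain $-u$, so $-u\notin\bigcup H(\calF_{PBT})$ and the cover fails; symmetrically if all three contain $-u$. The remaining case is that the three hemispheres ``split'' the pair as two-versus-one, say two of them contain $u$ and one contains $-u$ (or vice versa). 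To finish I would show that, in this split case, one can perturb $u$ slightly to an antipodal pair on which the split becomes three-versus-zero, or directly exhibit an uncovered direction. Concretely: the set of $u$ realizing a given ``sign pattern'' $(\epsilon_p,\epsilon_b,\epsilon_t)\in\{+,-\}^3$ with respect to the three normals is an open spherical region (an intersection of three open hemispheres), and these eight regions partition the complement of the three great circles; for the cover to succeed, every one of the eight regions must be ruled out, but the region with pattern $(\epsilon_p,\epsilon_b,\epsilon_t)$ chosen so that all three inner products $\langle u, n_f\rangle$ are negative is nonempty precisely when the three normals do not positively span $\R^3$. Since a single finger gives only three normals, and three vectors in $\R^3$ generically do not positively span (equivalently, the origin is not in the interior of their convex hull unless they are in ``general position'' pointing in suitably opposed directions), I would argue that even in the most favorable configuration three open hemispheres leave at least one direction uncovered.

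The cleanest route, and the one I would ultimately take, is to invoke the already-established \cref{corollary:parallelepiped:3}-style reasoning or a direct Helly/positive-spanning fact: $\SOtwo$ can be covered by open hemispheres $h(n_1),\dots,h(n_m)$ if and only if the origin lies in the \emph{interior} of $\mathrm{conv}\{n_1,\dots,n_m\}$, which for $\R^3$ requires $m\ge 4$ normals (a triangle spanned by three points in $\R^3$ has empty interior, so the origin can never be an interior point of it). With $m=3$ this is impossible, so \cref{expression:valid-fixture:1} fails and no single-finger fixture is valid. The main obstacle is establishing the ``iff'' cleanly in the open-hemisphere setting (the open/closed distinction matters for the boundary great circles), but this is exactly the kind of statement the sequence of lemmas preceding \cref{corollary:parallelepiped:3} is designed to supply, so I would cite it rather than reprove it. A short sanity check on the other direction of Condition~\ref{expression:valid-fixture:2} is not needed here, since failing Condition~\ref{expression:valid-fixture:1} already disqualifies the fixture.
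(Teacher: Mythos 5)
Your proposal is correct and, in its final form, takes essentially the same route as the paper: a single finger yields only three base facets, hence only three open blocking hemispheres, and the paper's covering machinery (Lemma~\ref{lemma:D}) shows that any covering set of $\SOtwo$ by open hemispheres must have size at least four. The intermediate antipodal-pair case analysis you sketch is not needed (and is left unfinished in the ``two-versus-one'' case), but your closing positive-spanning observation --- that covering $\SOtwo$ by open hemispheres forces the origin into the interior of the convex hull of the normals, which is impossible for only three points in \Rthree --- is a valid elementary substitute for citing the lemma.
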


\begin{figure}[ht]
  \centering%
  \subfloat[][]{\label{fig:special:1}%
    \includegraphics[height=2.6cm]{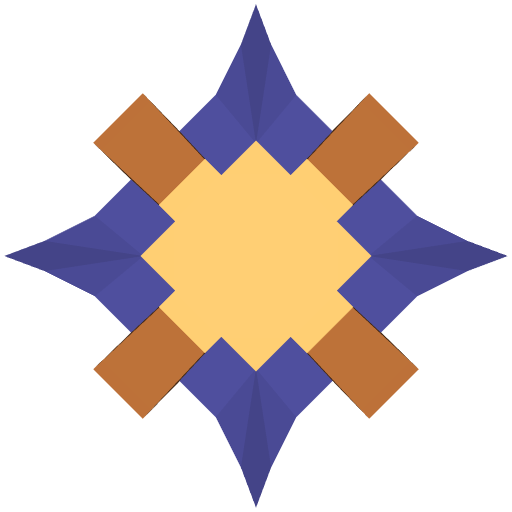}}\quad
  \subfloat[][]{\label{fig:special:2}%
    \includegraphics[height=2.6cm]{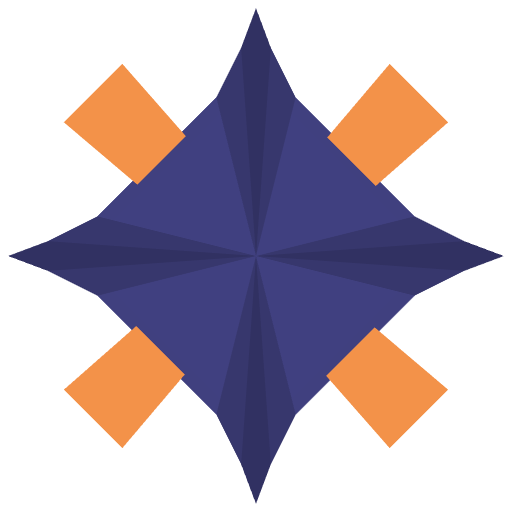}}\quad
  \subfloat[][]{\label{fig:special:3}%
    \includegraphics[height=2.6cm]{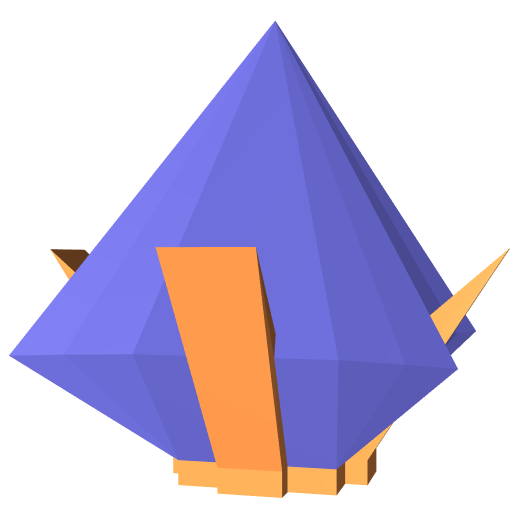}}\quad\quad\quad\quad
  \subfloat[][]{\label{fig:two-fingers}%
    \includegraphics[height=2.6cm]{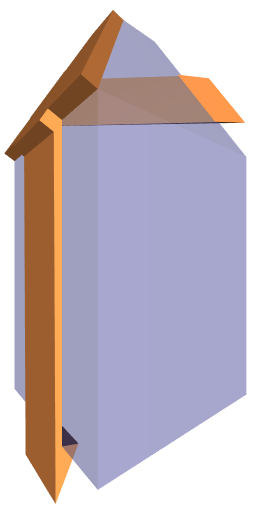}}
  \setlength{\abovecaptionskip}{0pt}
  \setlength{\belowcaptionskip}{-10pt}
  \caption[]{%
    (\subref*{fig:special:1}), (\subref*{fig:special:2}),
    (\subref*{fig:special:3})
    Different views of a polyhedron that has snapping fixtures with four
    fingers only and one of its four-finger fixtures.
    (\subref*{fig:two-fingers}) A snapping fixture with two fingers.
  }%
  \label{fig:special}
\end{figure}

A polyhedron that admits the lower bound is depicted in
Figures~\ref{fig:special:1}, \ref{fig:special:2},
and \ref{fig:special:3}. There there exists a polyhedron that has a
snapping fixture that has only two fingers; see
Figure~\ref{fig:two-fingers}.

\vspace{-7pt}
\section{Algorithm}
\label{sec:algorithm}
\vspace{-7pt}
A snapping fixture $G$ (of spreading degree two) is formally defined
by a pair that consists of
\begin{inparaenum}[(i)]
\item an index $i$ of a facet of $P$, and
\item a set of pairs of indices
  $(j_1,\ell_1),(j_2,\ell_2),...(j_k,\ell_k)$ of facets of $P$.
\end{inparaenum}
The palm of $G$ is the $\alpha_p$-extrusion of the facet $f_i$. Each
member pair of indices $(j, \ell)$ define a finger of $G$. The body
and fingertip of the finger are the $\alpha_b$- and $\alpha_t$-extrusion
values of the facets $f_j$ and $f_{\ell}$, respectively.

A simple algorithm that exhaustively searches through all valid snapping
fixtures with~2,~3, or~4, fingers, of a given polyhedron and runs in
$O(n^4)$ time is listed in Appendix~\ref{sec:simple-algorithm}. Here,
we introduce a much more parsimonious algorithm that uses a different
method to generate 4-finger fixtures, yielding an algorithm that
generates one fixture if exists with the minimal number of fingers and
runs in $O(n^3)$ time.


\begin{procedure}\normalfont%
  (\textbf{\textsc{minimalSnappingFixture}}$(P)$)
  The procedure accepts a polyhedron $P$ as input and returns a
  fixture of $P$ if exists with the minimal number of fingers; see
  Algorithm~\ref{alg:snapfixt}.  The algorithm consists of two phases.
  In the first phase we compute a data structure $M$ that associates
  palms and candidate fingers that extend from them. The second phase
  consists of three subphases in which we extract subsets of fingers
  of size,~2,~3, and~4, respectively, for each palm stored in $M$ and
  examine whether the palm and the subset of fingers form a valid
  fixture.  Once we strike one, we return it.
\end{procedure}

\begin{algorithm}[htb]
  \def\phase{\rlap{\smash{$\left.\begin{array}{@{}c@{}}\\{}\\{}\\{}\\{}\end{array}\color{red}\right\}%
        \color{red}\begin{tabular}{l}Phase 1\end{tabular}$}}}
  \def\subphase#1{\rlap{\smash{$\left.\begin{array}{@{}c@{}}\\{}\\{}\\{}\end{array}\color{red}\hspace*{-70pt}\right\}%
        \color{red}\begin{tabular}{l}Subphase #1\end{tabular}$}}}
  \caption{Minimal snapping fixture generation}
  \label{alg:snapfixt}
  \begin{algorithmic}[1]
    \Require A polyhedron $P$ with $m$ facets $\{f_1,f_2,...,f_m\}$.
    \Ensure A snapping fixture $G$ of $P$, if exists, with minimal
    number of fingers.
    \Procedure{minimalSnappingFixture}{$P$}
      \For{$i\gets 1, m$}
        \State{$M[i]\gets \emptyset$}
	\ForAll{$j, f_j \in \Call{neighbors}{f_i}$}\hspace*{100pt}\phase
	  \ForAll{$\ell, f_{\ell} \in \Call{neighbors}{f_j}\,\&\,\ell \neq i$}
	  \State{$M[i]\gets M[i]\cup \{(j,\ell)\}$}
	  \EndFor
	\EndFor
      \EndFor
      \For{$i\gets 1, m$}
        \ForAll{$\calS, \calS \in \Call{subsets}{M[i], 2}$}\Comment{$|\calS| = 2$}
        \State{$F\gets (f_i,\calS)$}\CommentB{Define a fixture}%
          \raisebox{0.5\baselineskip}{\subphase{2.1}}
          \If{$\Call{validFixture}{F}$} \Return{$F$}%
          \EndIf
        \EndFor
      \EndFor
      \For{$i\gets 1, m$}
        \ForAll{$\calS, \calS \in \Call{subsets}{M[i], 3}$}\Comment{$|\calS| = 3$}
        \State{$F\gets (f_i,\calS)$}\CommentB{Define a fixture}%
          \raisebox{0.5\baselineskip}{\subphase{2.2}}
          \If{$\Call{validFixture}{F}$} \Return{$F$}%
          \EndIf
        \EndFor
      \EndFor
      \For{$i\gets 1, m$}
        \State $F \gets\Call{fourFingersFixture}{f_i, M[i]}$
        \If{$F \neq \textrm{null}$} \Return{$F$}%
          \hspace*{191pt}\raisebox{0.5\baselineskip}{\subphase{2.3}}
        \EndIf
      \EndFor
      \State \Return{null}
    \EndProcedure
  \end{algorithmic}
\end{algorithm}

\begin{procedure}\normalfont%
  (\textbf{\textsc{neighbors}}$(f)$)
  The procedure accepts a facet $f$ of a polyhedron and returns all the
  neighboring facets of $f$.
\end{procedure}

\begin{procedure}\normalfont%
  (\textbf{\textsc{subsets}}$(\calC, k)$)
  The procedure accepts a set $\calC$ and a positive integer $k$; it
  returns all subsets of $\calC$ of cardinality $k$.
\end{procedure}

\begin{procedure}\normalfont%
  (\textbf{\textsc{validFixture}}$(F)$) The procedure accepts a
  snapping fixture and determines whether it is a valid snapping
  fixture based
  on~\Cref{expression:valid-fixture:1,expression:valid-fixture:2}
  defined in Section~\ref{ssec:terms:properties}.
\end{procedure}

In each one of the subphases of the second phase we iterate over all
facets of $P$ and treat each facet as a potential base facet of the
palm of a valid fixture (unless a fixture was found in a previous
subphase). In the following, we narrow down the search space for
fixtures with four fingers, once it has been established that our
workpiece does not have a fixture with two or three fingers.  Consider
a polyhedron $P$ that does have a valid fixture, say $G$ (with an
arbitrary number of fingers). There exists a subset
$\calR \subset H(\calF_{BT})$, such that
\begin{inparaenum}[(i)]
\item $\calR$ is a covering set of the closed hemisphere
  $\SOtwo\setminus H(\calF_{P})$, and
\item $|\calR|\in\{3,4,5\}$.
\end{inparaenum}
(This follows the same reasoning as in the proof of
Theorem~\ref{theorem:upper-bound}, which appears in
Appendix~\ref{sec:proofs}.)
The composition of $R$ can be categorized
into four cases listed below. We show that only one of theses cases,
namely Case IV, must be considered when searching for a fixture with
four fingers.

\begin{figure}[!ht]
  \vspace{-18pt}%
  \centering%
  \subfloat[]{\label{fig:tet}
    \includegraphics[width=2.0cm]{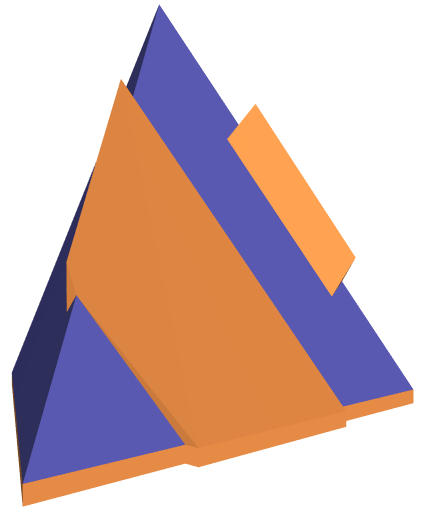}}\quad
  \subfloat[]{\label{fig:cube}
    \includegraphics[width=2.7cm]{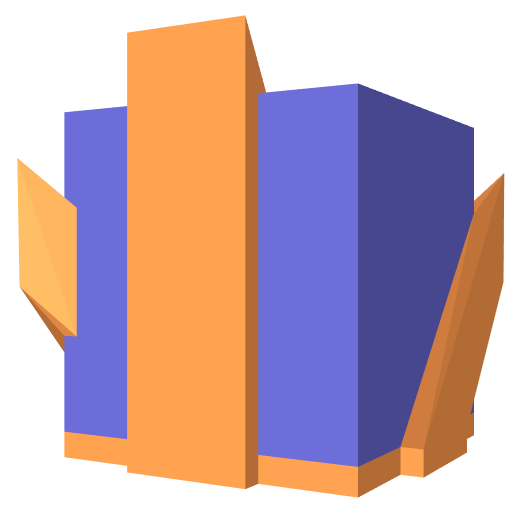}}\quad
  \subfloat[]{\label{fig:tri-prism}
    \includegraphics[width=2.0cm]{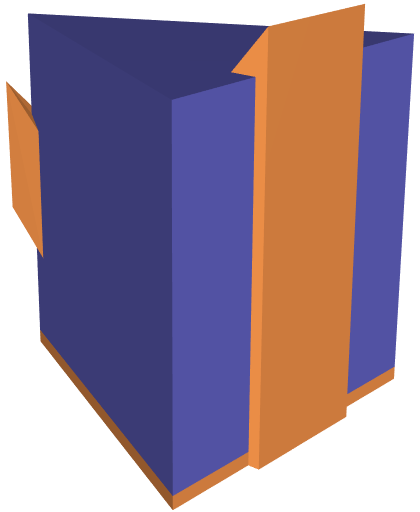}}\quad
  \subfloat[]{\label{fig:square-pyramid}
    \includegraphics[width=2.7cm]{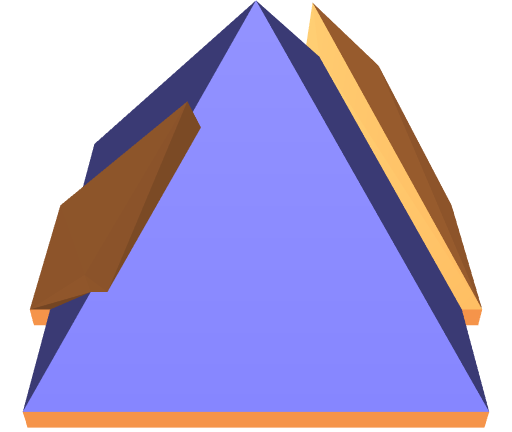}}
  \setlength{\abovecaptionskip}{3pt}
  \setlength{\belowcaptionskip}{3pt}
  \caption[]{%
  (\subref*{fig:tet}) A tetrahedron and a two-finger snapping fixture.
  (\subref*{fig:cube}) A cube and a three-finger snapping fixture.
  (\subref*{fig:tri-prism}) A triangular prism and a two-finger snapping fixture.
  (\subref*{fig:square-pyramid}) A square pyramid and a two-finger snapping
  fixture.}
  \label{fig:cases}%
  \vspace{-10pt}%
\end{figure}

Case I: $|\calR| = 3$. The tetrahedron and the fixture
depicted in Figure~\ref{fig:tet} demonstrate this case. At most three
distinct fingers of $G$ are needed; it implies that finding a fixture
similar to $G$, but only with these three fingers, during the first or
second subphases is guaranteed.

Case II: $|\calR| = 5$. The tetrahedron and the fixture
depicted in Figure~\ref{fig:cube} demonstrate this case.  By
Corollary~\ref{corollary:parallelepiped:3}, $\calR\cup H(\calF_{P})$
consists of three antipodal pairs of open unit hemispheres. As
$\calR\cup H(\calF_{P})$ is a covering set of $\SOtwo$ and $|\calR\cup
H(\calF_{P})|=6$, by Corollary~\ref{corollary:parallelepiped:3:no},
$H(\calF_{PBT}) =
\calR\cup H(\calF_{P})$. It implies that the facets in $\calF_{PBT}$
can be divided into three pairs of non-empty sets, such that each set
is a collection of all facets with the same normal, and the two sets
of every pair correspond to opposite normals, respectively. Without
loss of generality, we assume that $P$ does not have coplanar facets
that are neighbors, because such facets can be merged. Next, observe
that the facets in $\calF_{PBT}$ must be parallelograms. Assume, for
contradiction, that there exists a facet $f$ that is not a
parallelogram. It implies that $f$ has at least three neighboring
facets that are pairwise non-parallel, which implies that, together
with $h(f)$, $\calR$ contains at least four open hemispheres that are
pairwise non-antipodal, a contradiction.

$G$ must have at least one finger, say $F_1$, such that the normal to
the base facet of its fingertip, say $f_{t_1}$, is opposite to the
normal of the base facet of the palm $f_p$. Let $f_{b_1}$ denote the
base facet of the body of $F_1$. Consider the set $\calR_1 = \calR
\setminus \{h(f_{b_1}), h(f_{t_1})\}$. Observe that $|\calR_1|=3$.
Let $h(\bar{f_{b_1}})$ be the antipodal counterpart of $h(f_{b1})$.
Consider the finger $F_2$, such that $\bar{f_{b_1}}$ is
either the base facet, $f_{b_2}$, of the body of $F_2$ or the base
facet, $f_{t_2}$, of the fingertip of $F_2$. Naturally,
$h(\bar{f_{b_1}})$ is a member of $\calR_1$.
\begin{inparaenum}[(i)]
\item If $\bar{f_{b_1}} = f_{t_2}$, then, since $f_{b_2}$ is a
  neighbor of $f_p$ and $f_{t_2}$, $h(f_{b_2})$ must be a member of
  $\calR_1$ as well. Now, consider the set
  $\calR_2=\calR_1\setminus\{h(f_{b_2}),h(f_{t_2})\}$, and observe
  that $|\calR_2|=1$.
\item If $\bar{f_{b_1}} = f_{b_2}$, then let $f_{t'}$ be one of the
  neighbors of $f_{b_2}$ that is not parallel to $f_p$. Recall, that
  the facet $f_{b_2}$ has four neighbors---two pairs of parallel
  facets.  As $f_{b_2}$ and $f_{b_1}$ are parallel, $h(f_{t'})$ must
  be a member of $\calR_1$ as well. If $f_{t'} \neq f_{t_2}$, replace
  the fingertip of $F_2$ with a fingertip, the base of which is
  $f_{t'}$. Now, consider the set
  $\calR_2=\calR_1\setminus\{h(f_{b_2}),h(f_{t'})\}$, and observe that
  $|\calR_2|=1$.
\end{inparaenum}
It follows that there exists a third finger, say $F_3 \neq F_1,F_2$,
such that either $h(f_{b_3}) \in \calR_2$ or $h(f_{t_3})\in \calR_2$,
where $f_{b_3}$ and $f_{t_3}$ are the base facets of the body and
fingertip, respectively, of $F_3$, which obviates the need for further
fingers. It implies that finding a valid fixture during the first or
second subphases is guaranteed.

Case III: $|\calR| = 4$ and there exists a facet $f \in \calR$, such
that $h(f)$ and $h(f_p)$ are antipodal. The triangular prism and the
fixture depicted in Figure~\ref{fig:tri-prism} demonstrate
this case.  As in the previous case, $G$ must have at least one
finger, say $F_1$, such that the normal to the base facet of its
fingertip, say $f_{t_1}$, is opposite to the normal of the base facet of
the palm $f_p$. Let $f_{b_1}$ denote the base facet of the body of
$F_1$. Consider the set
$\calR_1=\calR\setminus\{h(f_{b_1}),h(f_{t_1})\}$.  Since
$|\calR_1|=2$, at most two additional distinct fingers of $G$ are
needed; it implies that finding a fixture similar to $G$, but only
with three fingers, during the first or second subphases is
guaranteed.

Case IV: $|\calR| = 4$ and $\calR$ does not contain an open
hemisphere, such that this hemisphere and $h(f_p)$ are antipodal. The
square pyramid and the fixture depicted in
Figure~\ref{fig:square-pyramid} demonstrate this case. Observe that
the fixture in the figure has two fingers. However, sometimes four
fingers are necessary as established by
Theorem~\ref{theorem:upper-bound}; see, e.g.,
Figure~\ref{fig:special:1}. This is the only case we need to consider
when searching for a fixture with four fingers. Notice, that in this
case, the intersections of at least two open hemispheres in $\calR$
with the great circle $\partial{h(f_p)}$ are pairwise antipodal open
unit semicircles.



\begin{procedure}\normalfont%
  (\textbf{\textsc{fourFingersFixtures}}$(f, \calC)$)
  The procedure accepts a facet $f$ of a potential palm and a set of
  pairs of facets, where each pair defines the base facets of the body
  and fingertip of a candidate finger, as input. It returns a valid
  fixture of $P$ with four fingers, if there exists one, such that $f$
  is the base facet of its palm, and its configuration matches Case
  IV above. Let $\calC'$ denote the set of unique
  facets in $\calC$. Let $\bar{h} = \SOtwo\setminus h(f)$ denote the
  closed hemisphere that must be covered by the open hemispheres
  $H(\calC')$. The procedure first divides all the hemispheres in
  $H(\calC')$ into equivalence classes, such that the intersections of
  all hemispheres in a class with the unit circle
  $C=\partial{\bar{h}}$ is a unique open semicircle. Let
  $s(\calE)=x\cap C, x\in \calE$ denote the unique open semicircle
  associated with the equivalence class $\calE$. There is a canonical
  total order of hemispheres within each class: Let $h_1$ and $h_2$ be
  two hemispheres in some class; then $h_1 \prec h_2$ iff
  $h_1\cap\bar{h}\subset h_2 \cap\bar{h}$.  Then, the procedure
  identifies pairs of equivalence classes $(\calE_1,\calE_2)$, such
  that $s(\calE_1)$ and $s(\calE_2)$ are antipodal open
  semicircles. For each pair, the procedure traverses all other
  equivalence classes twice searching for two additional equivalence
  classes $\calE_3$ and $\calE_4$, such that the set
  $\{s(\calE_1),s(\calE_2),s(\calE_3),s(\calE_4)\}$ covers $C$. If it
  finds such four equivalence classes, it implies that there exists a
  valid fixture with four fingers $F_1,F_2,F_3,F_4$, such that the
  maximal hemisphere associated with $\calE_i$ is either $h(f_{b_i})$
  or $h(f_{g_i})$. In this case the procedure returns such a fixture.
\end{procedure}

The complexity of the algorithm is the accumulated complexities of
Phase 1 and Subphases~2.1,~2.2, and~2.3. The efficiency (low
running-time complexity) of Subphase 2.2 stems from an observation on
the maximum number of possible candidates for this subphase, which in
turn relies on the genus of the polyhedron, as we discuss next.

\begin{lemma}[Genus of complete bipartite graphs~\cite{b-ongcpg-78}]
The genus of the complete bipartite graph, $k_{m,n}$, with $m$ nodes in one
side and $n$ in the other, is $\lceil(m-2)(n-2)/4\rceil$.\label{lemma:genus}
\end{lemma}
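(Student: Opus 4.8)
The plan is to prove the equality as a matching lower and upper bound on the orientable genus $\gamma(K_{m,n})$. I expect the lower bound to be a one-line Euler-characteristic computation and the upper bound --- an explicit low-genus embedding --- to be the real content.

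\textbf{Lower bound.} It suffices to bound $g$ from below over cellular embeddings, since a minimum-genus embedding of a connected graph is cellular. Fix a cellular embedding of $K_{m,n}$ in the orientable surface of genus $g$ with $F$ faces. With $V=m+n$ vertices and $E=mn$ edges, Euler's formula reads $V-E+F=2-2g$. Since $K_{m,n}$ is bipartite it has no odd closed walk, so the boundary walk of every face has length at least $4$; as these boundary walks together traverse each edge at most twice, $4F\le 2E=2mn$, hence $F\le mn/2$. Substituting,
\[
2-2g \;=\; (m+n)-mn+F \;\le\; (m+n)-\frac{mn}{2},
\]
which rearranges to $g\ge \frac{1}{4}(mn-2m-2n+4)=\frac{1}{4}(m-2)(n-2)$. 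As $g$ is an integer, $\gamma(K_{m,n})\ge \left\lceil (m-2)(n-2)/4\right\rceil$.

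\textbf{Upper bound.} One must now construct, for every $m$ and $n$, an embedding of $K_{m,n}$ in the surface of genus $g_0:=\left\lceil(m-2)(n-2)/4\right\rceil$. By the Euler count above, such an embedding has exactly $F_0:=2-2g_0-(m+n)+mn$ faces of total boundary length $2mn$, so it must be \emph{quadrangular} (every face a $4$-cycle) when $4\mid(m-2)(n-2)$, and otherwise quadrangular except for one ``defect'' face of length $6$, $8$, or $10$, according to the residue of $(m-2)(n-2)$ modulo $4$ --- equivalently the parities of $m$ and $n$. Producing such rotation systems is the combinatorial core. The standard route is the current-graph (Ringel--Youngs) technique: design a current graph whose single circuit, once read off, assigns to the vertices of one part a rotation that forces every face of the induced embedding to be a $4$-cycle; then dispose of the parity/residue cases by a bounded local repair --- adding one handle and fusing two quadrilateral faces into the longer defect face --- organizing the whole argument according to the values of $m$ and $n$ modulo $4$. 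Every completed construction is checked by a direct face count against Euler's formula.

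\textbf{Main obstacle.} The difficulty is concentrated entirely in the upper bound: exhibiting current graphs that provably produce only quadrilateral faces, and verifying that each residue case lands exactly on $g_0$ rather than $g_0+1$. This is precisely Ringel's theorem on the genus of complete bipartite graphs; in practice one cites it (as in the statement above) rather than reproducing the multi-case construction, and records only the short lower-bound computation.
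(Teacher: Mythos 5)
The paper offers no proof of this lemma at all: it is stated purely as a citation to the literature (Ringel's determination of the orientable genus of $K_{m,n}$), so any argument you give is by definition a different route. Your lower bound is correct and self-contained: minimum-genus embeddings of connected graphs are cellular, Euler's formula plus the girth-$4$ bound $4F\le 2E$ gives $g\ge (m-2)(n-2)/4$, and integrality yields the ceiling (for $m,n\ge 2$; the formula and the girth argument both degenerate for $K_{1,1}$, but that case is irrelevant here). Your upper bound remains a sketch that ultimately defers to the Ringel--Youngs current-graph construction, which you acknowledge openly, so as a complete proof of the stated equality your proposal has the same status as the paper's: the hard direction is outsourced. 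What your write-up buys that the citation alone does not is worth noting, though: in the only place the paper uses this lemma (the proof of Lemma~\ref{lemma:num-palms}), the inference is ``$H=K_{3,|\calB|}$ embeds in a surface of genus at most $g$, hence $\lceil(|\calB|-2)/4\rceil\le g$,'' which requires only the \emph{lower} bound on the genus of $K_{m,n}$ --- precisely the elementary half you prove in full. So for the purposes of this paper one could replace the citation by your one-paragraph Euler-characteristic argument and drop the Ringel construction entirely; the exact-value statement is stronger than what is actually needed.
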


\begin{lemma}\label{lemma:num-palms}
Given an input polyhedron $P$ of genus $g$. Let $\tau$ be a triplet of
candidate fingers. Let $\calP$ be the set of plams, such that all
fingers in $\tau$ extend every palm in $\calP$. Then,
$|\calP| \leq 4\cdot g+2$.
\end{lemma}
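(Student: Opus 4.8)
The statement to prove is Lemma~\ref{lemma:num-palms}: given a polyhedron $P$ of genus $g$ and a triplet $\tau$ of candidate fingers, if $\calP$ is the set of palms that are extended by all three fingers of $\tau$, then $|\calP| \le 4g+2$. Let me think about what this means combinatorially.

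A candidate finger is defined by a pair of facets $(f_j, f_\ell)$ where $f_j$ is a neighbor of the palm facet $f_i$, and $f_\ell$ is a neighbor of $f_j$ (with $f_\ell \neq f_i$). So a finger "extends a palm $f_i$" means: the body facet $f_b$ of the finger is adjacent to $f_i$ in the facet-adjacency graph. That is, the body facet $f_b$ is a neighbor of the palm facet.

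So if a triplet $\tau = \{F_1, F_2, F_3\}$ of fingers has body facets $b_1, b_2, b_3$ (these might not all be distinct — I need to be careful), then a palm $f_i \in \calP$ must be adjacent to each of $b_1, b_2, b_3$ in the dual graph of $P$.

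Let me consider the dual graph $G_P^*$ of $P$: vertices are facets, edges connect adjacent facets. This graph embeds in the genus-$g$ surface. Now $\calP$ is the set of common neighbors of $\{b_1, b_2, b_3\}$ in $G_P^*$.

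**Case analysis on distinctness of body facets.**

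Sub-case: all three body facets $b_1, b_2, b_3$ are distinct. Then the palms in $\calP$, together with $b_1, b_2, b_3$, form a complete bipartite subgraph $K_{3, |\calP|}$ of $G_P^*$ (on the "left" side $\{b_1,b_2,b_3\}$, on the "right" side $\calP$). Wait — I should double-check that none of the $b_i$ lies in $\calP$ and vice versa, and that there are no coincidences. A palm facet and a body facet of a finger extending it are adjacent, hence distinct. Could $b_1$ equal some palm in $\calP$? Then $b_1$ would be adjacent to $b_2$ and $b_3$ and to itself — no, $b_1$ adjacent to $b_1$ is nonsense (no self-loops). Actually the cleaner statement: the set $\{b_1,b_2,b_3\}$ and $\calP$ might not be disjoint a priori, but any $p \in \calP \cap \{b_1,b_2,b_3\}$ would need to be adjacent to all of $b_1,b_2,b_3$, including itself — impossible. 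So they're disjoint, and we genuinely have a $K_{3,|\calP|}$ subgraph (possibly plus extra edges, but that only helps).

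Now apply Lemma~\ref{lemma:genus}: $K_{m,n}$ has genus $\lceil (m-2)(n-2)/4 \rceil$. A subgraph of a graph embeddable in a genus-$g$ surface is itself embeddable in that surface, so $\mathrm{genus}(K_{3,|\calP|}) \le g$. Hence $\lceil (3-2)(|\calP|-2)/4 \rceil = \lceil (|\calP|-2)/4 \rceil \le g$, giving $(|\calP|-2)/4 \le g$, i.e. $|\calP| \le 4g+2$. Done in this sub-case.

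Sub-case: some body facets coincide, say $b_1 = b_2 \neq b_3$ (or all three equal). Then $\calP$ is the set of common neighbors of at most two distinct facets $\{b_1, b_3\}$ (or just one facet $b_1$). With two distinct body facets, we get a $K_{2, |\calP|}$ subgraph; $K_{2,n}$ is planar ($\lceil 0 \cdot (n-2)/4\rceil = 0$), so this puts no bound from genus alone — and indeed $|\calP|$ could in principle be large. Hmm, but wait: I need to use something else here. Actually, if $b_1 = b_2$, then fingers $F_1$ and $F_2$ have the same body facet but (presumably) different fingertips — but for the purpose of "extending a palm," only the body facet matters, and... hmm, this needs care.

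**The main obstacle.**

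The degenerate sub-cases (repeated body facets) are the crux. When two or three fingers in $\tau$ share a body facet, the complete-bipartite argument degrades to $K_{2,n}$ or $K_{1,n}$, which genus cannot bound. I expect the resolution is that candidate fingers in a triplet $\tau$ must be distinct as fingers, but more importantly that the definition of "the set of palms extended by all fingers in $\tau$" should be read so that when body facets repeat we're really looking at common neighbors of a single facet or two facets — and then some additional structural constraint (e.g. that a palm facet has bounded degree? no — facets can have many neighbors). Actually, re-reading: perhaps the intended reading is that $\tau$ is a triplet of \emph{distinct candidate fingers} that jointly could belong to a fixture, which forces constraints I should extract from Condition~1/Condition~2, or perhaps the algorithm only generates triplets with distinct body facets. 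The honest plan: I will first nail the generic case cleanly via Lemma~\ref{lemma:genus} and subgraph-monotonicity of genus, then argue that the degenerate cases either cannot arise among valid candidate-finger triplets, or reduce to the generic bound because a palm extended by two fingers with the same body facet $b$ but distinct fingertips already over-counts and the true multiset of body facets across a ``useful'' triplet has three distinct elements. Pinning down that reduction rigorously is the part I'd need to think hardest about; everything else is a two-line consequence of the bipartite-genus formula.
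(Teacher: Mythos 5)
Your generic-case argument is precisely the paper's proof: pass to the dual graph of $P$ (which embeds in the same genus-$g$ surface), note that the three body facets together with the palms of $\calP$ span a complete bipartite subgraph $K_{3,|\calP|}$, and combine the genus formula for $K_{m,n}$ (Lemma~\ref{lemma:genus}) with subgraph-monotonicity of genus to get $\lceil(|\calP|-2)/4\rceil\le g$, hence $|\calP|\le 4g+2$. This part is complete and correct, and your check that the three body facets and $\calP$ are disjoint is a detail the paper omits.

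The degenerate sub-case you flag and leave open is a genuine soft spot in the statement, but it is not one the paper resolves either: the paper's proof simply declares $\calA$ to be ``the set of three facets'' corresponding to the bodies of the fingers in $\tau$, i.e., it tacitly assumes the three body facets are pairwise distinct. Your worry is well-founded, since genus cannot bound the number of common neighbours of only two facets --- in an $n$-gonal prism the top and bottom facets have $n$ common neighbours while the genus is $0$ --- so a triplet with a repeated body facet can be extended by far more than $4g+2$ palms. The lemma must therefore be read as applying to triplets whose body facets are distinct; under that reading your argument is exactly the intended proof and you need not repair the degenerate case, though it is worth noting that the paper's use of the lemma in the running-time analysis of Subphase~2.2 then also implicitly relies on accounting for repeated-body triplets separately.
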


\begin{proof}
Let $\calA$ be the set of three facets of $P$ that correspond to the
three base facets of the bodies of the fingers in $\tau$.  Let $\calB$
be the set of facets of $P$ that correspond to the base facets of the
palms in $\calP$. Let $V,E,F$ denote the vertices, edges, and facets
of $P$, respectively. Let $P^*=(V^*,E^*,F^*)$ be the dual graph of
$P$, where each facet is represented as a node, and two nodes are
connected by an arc if the corresponding two facets are neighbors.
According to Euler characteristic, the genus of $P^*$ is given by
$1-(|V^*|-|E^*|+|F^*|)/2$, which is equal to $1-(|F|-|E|+|V|)/2=g$.
Consider the subgraph $H$ of $P^*$ that consists of the nodes that
correspond to the facets in $\calA$ and in $\calB$. The genus of $H$
is at most $g$.  Since each facet in $\calA$ and each facet in $\calB$
are neighbors, $H$ is a complete bipartite graph $k_{(3,|{\calB}|)}$.
By Lemma~\ref{lemma:genus}, the genus of $H$ is
$\lceil(3-2)(|\calB|-2)/4\rceil=\lceil(|\calB|-2)/4\rceil \leq g$.
Hence, $|{\calB}| \leq g\cdot 4 + 2$.
\qed\end{proof}

\begin{theorem}\label{theorem:complexity}
  Algorithm~\ref{alg:snapfixt} runs in $O(n^{3})$ time, where $n$ is
  the number of vertices of the input polyhedron.
\end{theorem}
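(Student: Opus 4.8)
The plan is to charge the running time to the four parts of Algorithm~\ref{alg:snapfixt}---Phase~1 and Subphases~2.1, 2.2, 2.3---and to verify that each is $O(n^3)$, with Phase~1 in fact only $O(n^2)$; since the algorithm only returns early, bounding the cost of executing all parts to completion suffices. Throughout I use that a polyhedron of constant genus with $n$ vertices has $O(n)$ facets and $O(n)$ edges (Euler's formula), so each of the outer loops over facets contributes only its $O(n)$ iteration count on top of the inner work. I also use that every call to \textsc{validFixture} handles a fixture with at most four fingers, hence with a constant number of base facets; deciding whether a constant-size family of open hemispheres covers $\SOtwo$ and checking \Cref{expression:valid-fixture:2} are constant-size problems, so \textsc{validFixture} runs in $O(1)$ time. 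Three facts about the structure $M$ built in Phase~1 drive everything else: each list satisfies $|M[i]|\le\sum_{f_j\sim f_i}\deg(f_j)\le 2|E|=O(n)$; the total size is $\sum_i|M[i]|\le\sum_{f_j}\deg(f_j)^2=O(n^2)$; and the number of \emph{distinct} (body, fingertip) pairs that can occur in any $M[i]$ is at most the number of ordered pairs of adjacent facets, i.e.\ $2|E|=O(n)$ (cf.\ Observation~\ref{observation:candidate-fingers}).

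Phase~1 spends $\sum_{f_i}\sum_{f_j\sim f_i}\deg(f_j)=\sum_{f_j}\deg(f_j)^2\le\bigl(\max_j\deg f_j\bigr)\bigl(\sum_j\deg f_j\bigr)=O(n)\cdot O(n)=O(n^2)$ time. Subphase~2.1 enumerates, over all palms, all $O(|M[i]|^2)$ size-two finger subsets, at cost $\sum_i O(|M[i]|^2)\le O\bigl((\max_i|M[i]|)\cdot\sum_i|M[i]|\bigr)=O(n)\cdot O(n^2)=O(n^3)$, plus $O(1)$ per subset. Subphase~2.3 calls \textsc{fourFingersFixtures}$(f_i,M[i])$ for every $f_i$; writing $N_i\le|M[i]|=O(n)$ for the number of distinct facets in $M[i]$, this call buckets the $O(N_i)$ hemispheres into equivalence classes and orders each class in $O(N_i\log N_i)$ time, finds the $O(N_i)$ antipodal pairs of class-semicircles, and for each such pair scans all classes a constant number of times---once to cover each of the two antipodal points left uncovered by the pair---in $O(N_i)$ time; this is $O(N_i^2)$ per palm, hence $\sum_i O(N_i^2)\le O\bigl((\max_i N_i)\sum_i N_i\bigr)=O(n)\cdot O(n^2)=O(n^3)$ overall.

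The crux is Subphase~2.2, where the naive count $\sum_i O(|M[i]|^3)$ can be $\Theta(n^4)$---e.g.\ when a single facet has degree $\Theta(n)$, yielding $\Theta(n^3)$ triples for each of $\Theta(n)$ palms. Two reductions bring it down. First, a fixture in which two fingers share a body facet is invalid, because those two bodies are $\alpha_b$-extrusions of the same facet and hence overlap; so it suffices to enumerate triples of fingers with three pairwise distinct body facets, which is done by grouping the fingers in $M[i]$ by body facet and combining across three distinct groups. Second, for such a triple $\tau$, Lemma~\ref{lemma:num-palms}---whose proof uses only that the three body facets are mutually adjacent to each palm, via the genus of the complete bipartite graph $K_{3,\ell}$ (Lemma~\ref{lemma:genus})---shows that at most $4g+2=O(1)$ facets can be a common neighbor of all three body facets, i.e.\ can serve as a palm for $\tau$; and $\tau$ is examined for palm $f_i$ only when $\tau\subseteq M[i]$, which forces $f_i$ to be one of those common neighbors. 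Double-counting the examined (palm, triple) pairs by the triple, the total number of such pairs over all palms is at most $4g+2$ times the number of distinct admissible triples, hence $O(g\,n^3)=O(n^3)$ since $g=O(1)$ (using the third fact above); together with the $O(\sum_i|M[i]|)=O(n^2)$ grouping cost and $O(1)$ per validity test, Subphase~2.2 runs in $O(n^3)$.

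Summing the four bounds---$O(n^2)$ for Phase~1 and $O(n^3)$ for each of Subphases~2.1--2.3---gives the claimed $O(n^3)$ running time. The only non-routine ingredient is the analysis of Subphase~2.2: the genus-based Lemma~\ref{lemma:num-palms} combined with the reduction to distinct-body finger triples. I expect the main point to pin down carefully is precisely which subsets \textsc{subsets} (and hence the algorithm) enumerates, so that the distinct-body restriction and the double-counting bound apply exactly as stated.
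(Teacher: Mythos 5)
Your decomposition and all four bounds coincide with the paper's: Phase~1 in $O(n^2)$, Subphases~2.1 and~2.3 via the per-palm quadratic counts (the paper analyzes \textsc{fourFingersFixtures} exactly as you do: $O(n)$ to build and order the equivalence classes, $O(n^2)$ to form and scan the antipodal pairs), and Subphase~2.2 via Lemma~\ref{lemma:num-palms} plus the fact that there are only $O(n)$ distinct candidate fingers and hence $O(n^3)$ distinct triples, each chargeable to $O(g)=O(1)$ palms. So this is essentially the paper's proof. The one place where you go beyond it deserves note: the paper simply asserts that ``each triplet of fingers is considered a constant number of times,'' but Lemma~\ref{lemma:num-palms} argues through $K_{3,|\calB|}$ and therefore applies only when the three body facets are pairwise distinct; if two fingers share a body facet the relevant graph degenerates to $K_{2,|\calB|}$, which is planar, such a triple can be shared by $\Theta(n)$ palms (e.g.\ when one facet has $\Theta(n)$ neighbours), and the literal enumeration of all $3$-subsets of $M[i]$ then costs $\Theta(n^4)$. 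Your repair---group the fingers in $M[i]$ by body facet, enumerate only triples with three distinct bodies, and justify the restriction by the requirement that the parts be disjoint in their interiors---is the right one and is more careful than what the paper records. The only step you should make fully explicit is that discarding shared-body triples loses no fixture the algorithm is obliged to return, i.e.\ that such configurations are excluded by the structural definition and not merely by the hemisphere-covering tests in \textsc{validFixture}; without that, one would instead have to bound the enumeration cost of the degenerate triples directly.
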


\begin{proof}
  During the first phase we list all the potential palms, each palm
  together with all the fingers that can be connected to it.  The
  overall number of potential fingers is twice the number of edges in
  the polytope; see
  Observation~\ref{observation:candidate-fingers}. Since The number of
  facets and the number of edges in a polytope with $n$ vertices is
  linear in $n$, the number of palm-finger combinations created in
  Phase~1 is $O(n^2)$.
  The second phase dominates the time complexity. We examine each
  subphase separately.  Recall, that a potential fixture passed to
  \textsc{validFixture}$(F)$ (encoded by $(f,S)$, where $f$ denotes a
  facet and $S$ denotes a set, the cardinality of which is fixed,
  i.e., 2, 3, or 4) has a fixed number of fingers. Therefore, every
  execution of the function consumes constant time.
  In the first subphase for every possible palm the function
  \textsc{validFixture} is invoked once per every subset of candidate
  fingers of size~2. As the number of candidate fingers is linear in
  $n$, the number of pairs of fingers is in $O(n^2)$. Thus, the total
  complexity of this subphase is $O(n\cdot n^2) = O(n^3)$.
  In the second subphase for every possible palm the function
  \textsc{validFixture} is invoked once per every subset of candidate
  fingers of size~3.  By Lemma~\ref{lemma:num-palms} and the
  assumption that the genus of the input polyhedron is constant, while
  iterating over all possible fixtures that have exactly three
  fingers, each triplet of fingers is considered a constant number of
  times. Therefore, the total time consumed processing potential
  fixtures of three fingers is bounded by $O(n^3)$.
  \textsc{fourFingersFixtures}($f, \calC$) is invoked once for every
  facet in the input polyhedron. Building the equivalence classes and
  finding the maximum of each class takes $O(n)$ time. Matching
  maximal hemispheres of equivalence classes to form pairs of
  associated antipodal semicircles takes $O(n^2)$ time.  Finally,
  examining every pair, traversing all other equivalence classes for
  each pair, also takes $O(n^2)$ time. Thus, the total complexity of
  this subphase is $O(n\cdot n^2) = O(n^3)$. The overall time
  complexity is thus $O(n^3)$.\qed
\end{proof}

\section{Two Applications}
\label{sec:cases}
We present two applications that utilize our algorithm and its implementation.
\subsection{Minimal Weight Fixtures}
\label{ssec:cases:weight}
\begin{figure}[!ht]
  \vspace{-4pt}
  \centering%
  \subfloat[]{\label{fig:ms:a}
    \includegraphics[height=2.5cm]{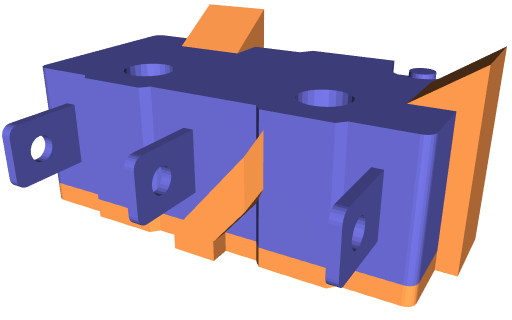}}\quad
  \subfloat[]{\label{fig:real-ms}
    \includegraphics[height=2.5cm]{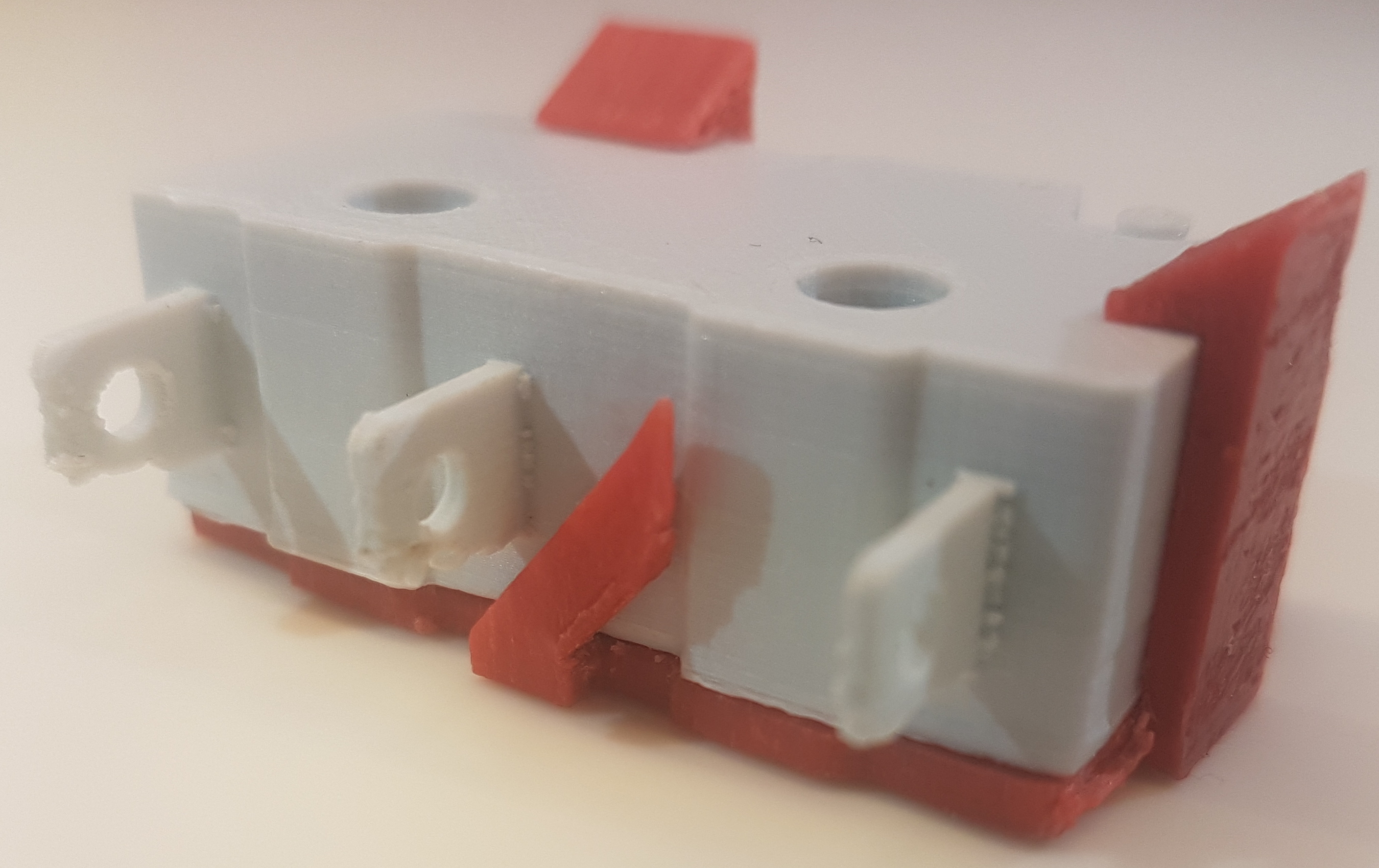}}\quad
  \subfloat[][]{\label{fig:real-drone}%
    \includegraphics[height=2.5cm]{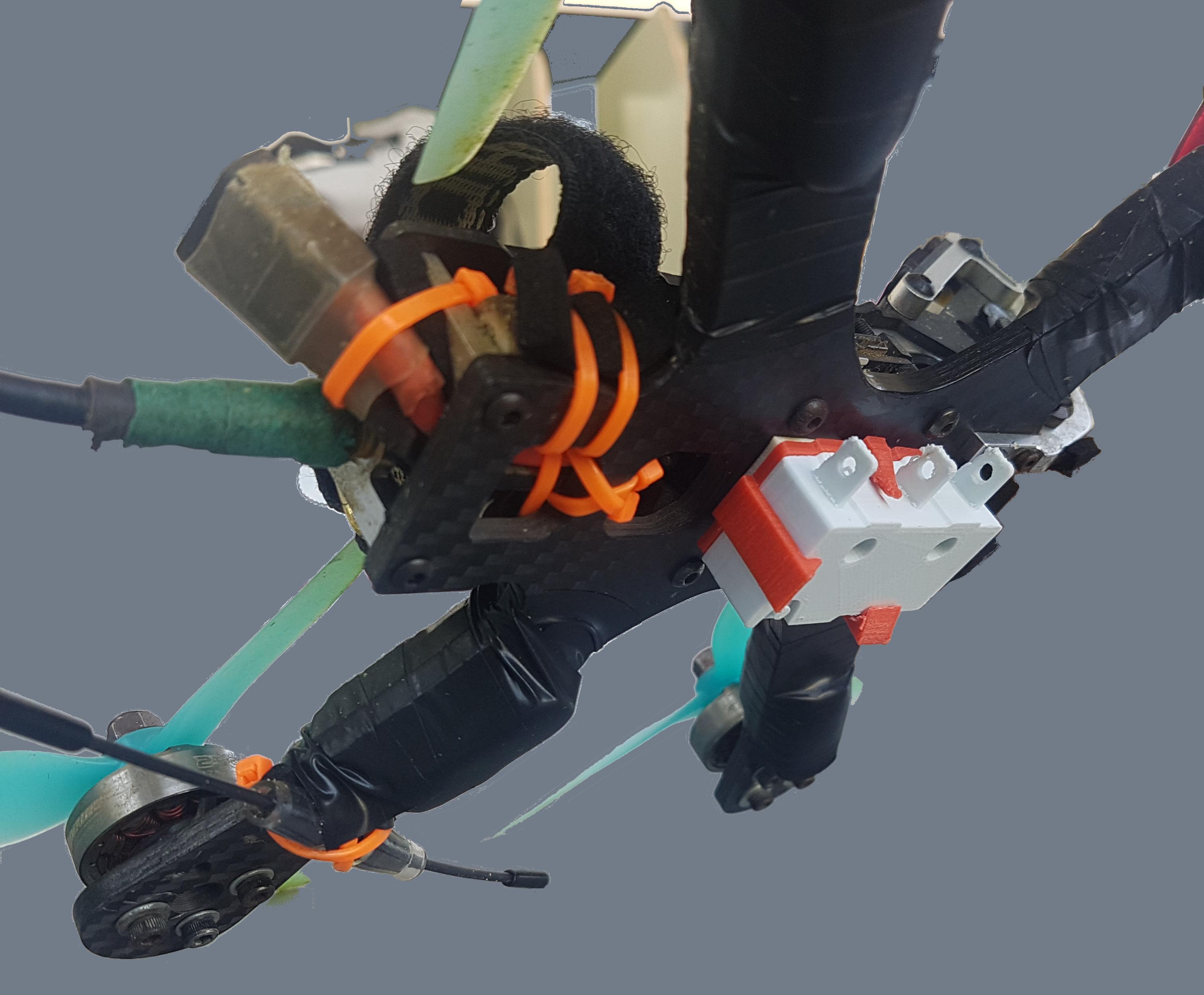}}
  \setlength{\abovecaptionskip}{0pt}
  \setlength{\belowcaptionskip}{-2pt}
  \caption[]{
    (\subref*{fig:ms:a}) Synthetic micro switch sensor and a snapping fixture assembled.
    (\subref*{fig:real-ms}) A micro-switch sensor held by a fabricated snapping fixture.
    (\subref*{fig:real-drone}) A drone with the snapping fixture attached to it.
  }
  \label{fig:ms}
  \vspace{-4pt}
\end{figure}
Generating lightweight fixtures that could be mounted on a UAV has
been a major challenge ever since the first UAV was introduced. The
desire for robust and efficient solutions to this problem rapidly
scaled up during the last decade with the introduction of small
drones, the weight of devices that can be mounted on which, is
limited.  Naturally, the device must be securely attached to the
drone; however, at the same time, the holding mechanism should weigh
as little as possible. Figure~\ref{fig:ms} shows a fixture generated
for a micro-switch sensor, a common sensor in the field
of robotics and automation. Figure~\ref{fig:real-drone} shows the
fabricated fixture (3D printed) permanently attached to a drone. It
holds a micro-switch. While the micros-switch is firmly held during
flight, it can be easily replaced.

\subsection{Minimal Obscuring Fixtures}
\label{ssec:cases:visibility}
One of the objectives of jewelry making is to expose the gems mounted
on a jewel, such as a ring, and reveal their allure. As with the
minimal-weight fixture, the mounted gem must be securely attached to
the jewel; however, the weight of the holding mechanism can be
compromised. Here we seek to find a fixture that obscures the gem as
little as possible, so that the gem surface is exposed as much as
possible. Figure~\ref{fig:spider} shows a pendant with an integrated
fixture synthesized by our generator. The fixture in
Figure~\ref{fig:emerald} is generated for an emerald cut; it reveals a
surprising portion of the front facet of the stone.%

\begin{figure}[!ht]
  \vspace{-4pt}
  \centering%
  \subfloat[][]{\label{fig:emerald}%
    \includegraphics[height=2.5cm]{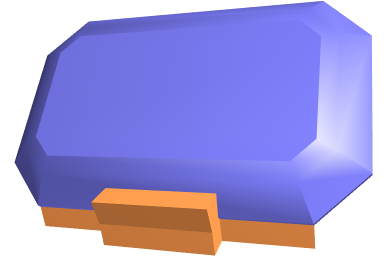}}\quad
  \subfloat[][]{\label{fig:spider}%
    \includegraphics[height=2.5cm]{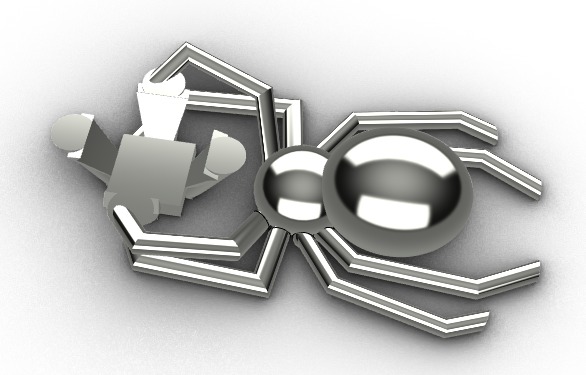}}\quad
  \subfloat[][]{\label{fig:real-spider}%
    \includegraphics[height=2.5cm]{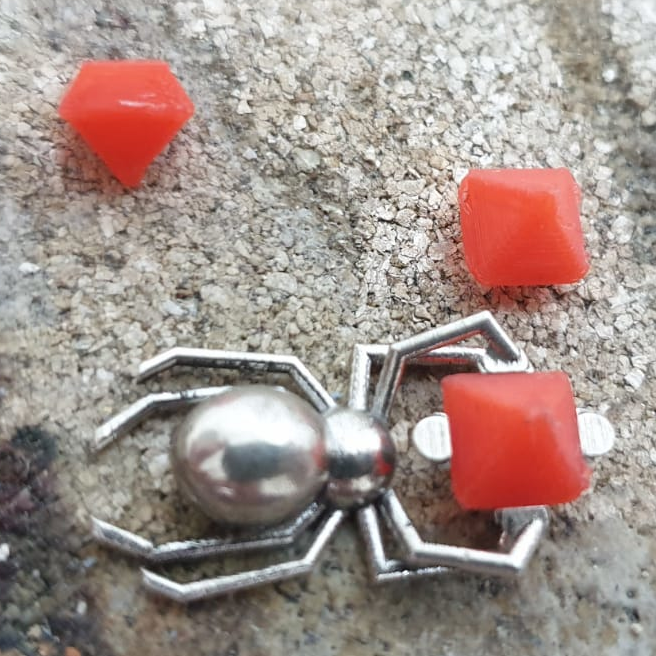}}
  \setlength{\abovecaptionskip}{0pt}
  \setlength{\belowcaptionskip}{-2pt}
  \caption[]{(\subref*{fig:emerald}) an emerald cut---a common cut for precious
    stones.
    (\subref*{fig:spider}) a synthetic pendant with an integrated snapping
    fixture.
    (\subref*{fig:real-spider}) the fabricated pedant holding a precious stone.}
  \label{fig:jewel}
  \vspace{-4pt}
\end{figure}


\subsection{3D Printing Considerations}
\label{ssec:cases:considerations}
We used various materials for generating snapping fixtures, such as,
ABS, PLA, PETG, Nylon 12, and Sterling silver.\footnote{3D printed wax
  and lost-wax where used to generate fixtures made of Sterling
  Silver.} All generated fixtures properly snapped and firmly held the
workpieces. However, low quality prints (made of ABS, PLA, or PETG)
occasionally broke after repeated or incautious uses. We noticed that
increasing the infill density and orienting the prints such that the
joint axes and the printing plate are not parallel increase the fixture
durability. Also, we compensated for the limited precision of printers
by scaling up the fixture to create a gap of up to 0.2mm between the
fixture and the workpiece.

\section{Experimental Results}
\label{sec:experiments}
The generator was developed in C++; it depends on the Polygon Mesh
Processing package of \cgal~\cite{cgal:lty-pmp-20a}. Table~\ref{tab:experiments} lists some of the workpieces
we fed as input, and provides information about the generation of the
corresponding snapping fixtures. The coordinates of the vertices of
the input models were given in floating point numbers. The generator was
executed on an \emph{Intel Core i7-2720QM} CPU clocked at~2.2~GHz
with~16~GB of RAM.

\begin{table}[!ht]
  \scriptsize
  \renewcommand{\arraystretch}{1.2}
  \newlength{\cellWidth}\setlength\cellWidth{9.5ex}
  \label{tab:experiments}
  \caption{Information related to snapping fixture generation of various
    workpieces. \textbf{Verts}, \textbf{Tris}, and \textbf{Fixts} stand for
    \textbf{Vertices}, \textbf{Triangles}, and \textbf{Fixtures}, respectively.
    The column entitled \textbf{Merged} indicates the number of facets
    after the merging of coplanar triangular facets.
    The last column indicates the number of fixtures that admit the minimal
    number of fingers.}
  \centering
  \begin{tabular}{|l|>{\raggedleft}p{\cellWidth}|>{\raggedleft}p{\cellWidth}|>{\raggedleft}p{\cellWidth}|>{\raggedleft}p{\cellWidth}|>{\raggedleft}p{\cellWidth}|>{\raggedleft}p{\cellWidth}|>{\raggedleft}p{\cellWidth}|r|}
    \hline
    \multicolumn{6}{|c|}{\textbf{\mbox{Workpiece}}} &
    \multicolumn{2}{c|}{\textbf{\mbox{Fixture}}} &
    \multicolumn{1}{c|}{\textbf{\mbox{\# Fixts}}}\\\cline{1-6}\cline{7-8}
    \multirow{2}{*}{\textbf{\mbox{Name}}} &
    \multicolumn{1}{c|}{\textbf{\mbox{\#}}} &
    \multicolumn{1}{c|}{\textbf{\mbox{\#}}} &
    \multicolumn{2}{c|}{\textbf{\mbox{\# Facets}}} &
    \multirow{2}{*}{\textbf{\mbox{Genus}}} &
    \multicolumn{1}{c|}{\textbf{\mbox{\# Min}}} &
    \multicolumn{1}{c|}{\textbf{\mbox{Time}}} &
    \multicolumn{1}{c|}{\textbf{\mbox{Min}}}\\\cline{4-5}
    &
    \multicolumn{1}{c|}{\textbf{\mbox{Verts}}} &
    \multicolumn{1}{c|}{\textbf{\mbox{Edges}}} &
    \multicolumn{1}{c|}{\textbf{\mbox{Tris}}} &
    \multicolumn{1}{c|}{\textbf{\mbox{Merged}}} &
    &
    \multicolumn{1}{c|}{\textbf{\mbox{Fingers}}} &
    \multicolumn{1}{c|}{\textbf{\mbox{(ms)}}} &
    \multicolumn{1}{c|}{\textbf{\mbox{Fingers}}}\\
    \hline
    \hline
    tetrahedron         &   4 &     6 &     4 &   4 & 0 & 2 &      3 &      36\\
    dodecahedron\footnotemark & 20 & 30 &  36 &  12 & 0 & 2 &     15 &      50\\
    emerald             &  34 &    96 &    64 &  25 & 0 & 2 &     39 &       8\\
    square pyramid      &   5 &     8 &     6 &   5 & 0 & 2 &      4 &      24\\
    micro switch        & 594 & 1,806 & 1,204 & 305 & 2 & 2 & 42,761 & 263,895\\
    cube                &   8 &    18 &    12 &   6 & 0 & 3 &     20 &     216\\
    octahedron          &   6 &    12 &     8 &   8 & 0 & 3 &      3 &      16\\
    torus               &  32 &    64 &    32 &  10 & 1 & 3 &    307 &   2,760\\
    4-finger            &  26 &    64 &    42 &  41 & 0 & 4 &     45 &      17\\
    truncated cuboctahedron\footnotemark[\value{footnote}] & 48 & 72 & 92 & 26 & 0 & 2 & 163 & 29\\
    icosahedron         &  12 &    30 &    20 &  20 & 0 & $\infty$ & 22 &    0\\
    8-base cylinder     &  16 &    42 &    28 &  10 & 0 & 2 &     44 &     106\\
    28-base cylinder    &  56 &   162 &   108 &  30 & 0 & 2 &    984 &   4,396\\
    48-base cylinder    &  96 &   282 &   188 &  50 & 0 & 2 &  4,672 &  24,456\\
    68-base cylinder    & 136 &   402 &   268 &  70 & 0 & 2 & 13,008 &  71,892\\
    88-base cylinder    & 176 &   522 &   348 &  90 & 0 & 2 & 27,233 & 159,124\\
    108-base cylinder   & 216 &   642 &   428 & 110 & 0 & 2 & 50,122 & 297,956\\
    \hline
  \end{tabular}
  \vspace{-6pt}
\end{table}
\footnotetext{Limited precision coordinates render the actual models non-regular.}

\vspace{\beforeSectionVSpace}
\section*{Acknowledgement}
\label{sec:ack}
\vspace{\afterSectionVSpace}
The authors thank Shahar Shamai for his technical support, Omer Kafri and Raz
Parnafes for fruitful discussions, and Shahar Deskalo for exploiting our work
while crafting a jewel.

\pagebreak
\bibliographystyle{unsrt}
\bibliography{../abrev-short,../snapping_fixture_generation}
\appendix
\section{Notation Glossary}
\label{sec:glossary}
The following lists typical notations.
\begin{itemize}
\item $\openCS$, $\closedCS$---general open and closed unit semicircles, respectively
\item $\openHS$, $\closedHS$---general open and closed unit hemispheres, respectively
\item $\PPone$---the affinely extended real number line
\item $\PPtwo$---a generalization of the affinely extended real number line to the plane
\item $\SOone$---the unit circle
\item $\SOtwo$---the unit sphere
\item $f$---a facet
  \begin{itemize}
  \item $f_p$---the base facet of the palm of a fixture
  \item $f_{b_i}$---the base facet of the body of finger $i$
  \item $f_{t_i}$---the base facet of the fingertip of finger $i$
  \end{itemize}
\item $s$, $\bar{s}$---instances of an open and a closed unit semicircle, respectively
\item $h$, $\bar{h}$---instances of an open and a closed unit hemispheres, respectively
\item $\calC$, $\calE$, $\calR$, $\calS$---sets
\item $\calF$---a set of facets
  \begin{itemize}
  \item $\calF_P$---the singleton that consists of the base facet of the palm of
    a fixture
  \item $\calF_B$---the set of the base facets of the bodies of the fingers
    of a fixture
  \item $\calF_T$---the set of the base facets of the fingertips of the fingers
    of a fixture
  \item $\calF_{PBT}$---the union of the above three sets
  \item $\calF^P$---the facets of a polyhedron $P$
  \end{itemize}
\item $P$,$G$---polyhedrons, a workpiece and a snapping fixture, respectively
\item $h(f)$---a mapping from a facet to the hemisphere that consists of the
  blocking directions induced by $f$
\item $H(\calF)$---a mapping from a set of facets to the corresponding
  hemispheres
\end{itemize}

\section{Simple Algorithm}
\label{sec:simple-algorithm}
\begin{procedure}{\sc snappingFixture($P$)}
The procedure accepts a polyhedron $P$ as input and returns a fixture
$G$ of $P$ of the best quality according to given optimization
criteria (see below); see Algorithm~\ref{alg:snapfixt}. The algorithm
consists of two phases.  In the first phase we compute a data
structure $M$ that associates palms and their candidate fingers. In
the second phase we identify subsets of fingers for each palm stored
in $M$ that together form a potential valid fixture and extract the
fixture with the best quality over all potential fixtures.
\end{procedure}

\begin{algorithm}[htb]
  \caption{Snapping fixture generation}
  \label{alg:simple-snapfixt}
  \begin{algorithmic}[1]
    \Require A simple polyhedron $P$ that consists of $m$ facets
      $\{f_1,f_2,...,f_m\}$.
    \Ensure A snapping fixture $G$ of $P$, if there exists one, with the
    best quality.
    \Procedure{snappingFixture}{$P$}
      \For{$i\gets 1, m$}
        \State{$M[i]\gets \emptyset$}
        \ForAll{$j, f_j \in \Call{neighbours}{f_i}$}
          \ForAll{$\ell, f_{\ell} \in \Call{neighbours}{f_j}\,\&\,\ell \neq i$}
            \State{$M[i]\gets M[i] \cup \{(j,\ell)\}$}
          \EndFor
        \EndFor
      \EndFor
      \State{$B\gets \textrm{null}$}\Comment{Initialize best fixture}
      \For{$i\gets 1, m$}
        \ForAll{$S, S \in \Call{subsets}{M[i], 4}$}\Comment{$|S| \leq 4$}
          \State{$F\gets (f_i,S)$}\Comment{Define a fixture}
          \If{$\Call{validFixture}{F}$}
            \State{$B\gets\Call{findBest}{B, F}$}
          \EndIf
        \EndFor
      \EndFor\\
      \Return{$F$}
    \EndProcedure
  \end{algorithmic}
\end{algorithm}

The following is a generalization of Lemma~\ref{lemma:num-palms}.

\begin{lemma}\label{lemma:num-palms-general}
Given an input polyhedron $P$ of genus $g$. Let $\calQ$ be a set of at
least three candidate fingers. Let $\calP$ be the set of plams, such
that all fingers in $\calQ$ extend every palm in $\calP$. Then,
$|\calP| \leq \frac{4\cdot g}{|\calQ|-2}+2$.
\end{lemma}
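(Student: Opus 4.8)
The plan is to mimic the proof of Lemma~\ref{lemma:num-palms}, replacing the bipartite graph $k_{3,|\calB|}$ by $k_{|\calQ|,|\calB|}$. Concretely, let $\calA$ be the set of facets of $P$ that serve as the base facets of the bodies of the fingers in $\calQ$, so $|\calA| = |\calQ|$ (note every candidate finger's body determines a distinct facet, so these are indeed $|\calQ|$ facets), and let $\calB$ be the set of base facets of the palms in $\calP$, so $|\calB| = |\calP|$. Pass to the dual graph $P^*$ of $P$, whose genus equals $g$ by the Euler characteristic computation already carried out in the proof of Lemma~\ref{lemma:num-palms}. Since every finger in $\calQ$ extends every palm in $\calP$, each facet in $\calA$ is a neighbor (in $P$) of each facet in $\calB$, hence in $P^*$ every node coming from $\calA$ is adjacent to every node coming from $\calB$; thus the subgraph $H$ of $P^*$ induced on $\calA \cup \calB$ contains the complete bipartite graph $k_{|\calQ|,|\calB|}$ as a subgraph.

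Next I would invoke monotonicity of genus under taking subgraphs: since $H \subseteq P^*$, the genus of $H$ is at most $g$, and since $k_{|\calQ|,|\calB|} \subseteq H$, the genus of $k_{|\calQ|,|\calB|}$ is also at most $g$. By Lemma~\ref{lemma:genus}, the genus of $k_{|\calQ|,|\calB|}$ equals $\lceil (|\calQ|-2)(|\calB|-2)/4 \rceil$. Combining, $\lceil (|\calQ|-2)(|\calB|-2)/4 \rceil \le g$, hence $(|\calQ|-2)(|\calB|-2)/4 \le g$, and therefore $|\calB| - 2 \le 4g/(|\calQ|-2)$, i.e.\ $|\calP| = |\calB| \le \frac{4g}{|\calQ|-2} + 2$. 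Specializing $|\calQ| = 3$ recovers Lemma~\ref{lemma:num-palms}.

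The only point requiring a little care — and the main obstacle — is verifying that the nodes corresponding to $\calA$ are genuinely distinct from those corresponding to $\calB$, so that $H$ really contains $k_{|\calQ|,|\calB|}$ rather than some graph on fewer vertices where the bipartite structure degenerates. A finger-body facet that also happened to be a palm facet would make a node appear in both parts; but this cannot occur: a palm facet $f_p$ of a fixture is never the body facet of one of that fixture's own fingers (the body is a \emph{neighbor} of the palm, and we assume no coplanar neighboring facets, so a facet is not its own neighbor), and since all fingers in $\calQ$ extend all palms in $\calP$, the same facet cannot play both roles. With $\calA$ and $\calB$ disjoint and the complete bipartite adjacency established, the genus bound goes through verbatim as above. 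I would also remark, as in Lemma~\ref{lemma:num-palms}, that we assume without loss of generality that $P$ has no coplanar neighboring facets (they can be merged), which is what licenses treating the body facets as $|\calQ|$ distinct nodes.
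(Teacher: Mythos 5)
Your proof is correct and takes essentially the same route as the paper's: the paper's own argument is literally ``follow the reasoning of Lemma~\ref{lemma:num-palms} with $k_{|\calQ|,|\calB|}$ in place of $k_{3,|\calB|}$,'' yielding $\lceil(|\calQ|-2)(|\calB|-2)/4\rceil\le g$ and hence $|\calP|\le \frac{4g}{|\calQ|-2}+2$. Your additional care about the distinctness of the body facets and the disjointness of $\calA$ and $\calB$ only makes explicit what the paper leaves implicit.
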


\begin{proof}
  Let $\calB$ be the set of facets of $P$ that correspond to the base
  facets of the palms in $\calP$.  Following the same reasoning of the
  proof of Lemma~\ref{lemma:num-palms},
  $\lceil(|\calQ|-2)(|\calB|-2)/4\rceil\leq g$.  Hence, $|{\calB}|
  \leq \frac{4\cdot g}{|\calQ|-2}+2$.  \qed\end{proof}

\begin{theorem}\label{theorem:complexity}
Algorithm~\ref{alg:snapfixt} runs in $O(n^{4})$ time, where $n$ is the
number of vertices of the input polyhedron.
\end{theorem}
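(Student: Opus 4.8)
The plan is to bound the total running time as the sum of the costs of Phase~1 and Phase~2, reusing almost verbatim the accounting behind the faster, $O(n^3)$-time algorithm, and to isolate the one place where the simple exhaustive variant pays more: the enumeration of four-finger fixtures.

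First I would dispose of Phase~1. The structure $M$ records, for each facet $f_i$ (a potential palm), every pair $(f_j,f_\ell)$ with $f_j$ a neighbor of $f_i$ and $f_\ell$ a neighbor of $f_j$, $f_\ell\neq f_i$, so $|M[i]|=\sum_{f_b\sim f_i}(\deg(f_b)-1)$. Since $\sum_f\deg(f)=2|E|=O(n)$, every $|M[i]|$ is $O(n)$ and $\sum_i|M[i]|=\sum_{f_b}\deg(f_b)(\deg(f_b)-1)\le(\max_f\deg(f))\cdot 2|E|=O(n^2)$; hence Phase~1 runs in $O(n^2)$ time and space and is dominated by Phase~2. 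I would also note, as in the $O(n^3)$ analysis, that every call to \textsc{validFixture} concerns a fixture with at most four fingers, so it tests the two covering conditions on $O(1)$ hemispheres and costs $O(1)$.

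Next I would analyze the three parts of Phase~2 (subsets of $M[i]$ of sizes $2$, $3$, and $4$) by counting the candidate fixtures that reach \textsc{validFixture}. For the size-$2$ part no genus bound is available, so I would use the degree estimate directly: the number of (palm, pair of fingers) inspected is at most $\sum_i|M[i]|^2\le(\max_i|M[i]|)\cdot\sum_i|M[i]|=O(n)\cdot O(n^2)=O(n^3)$. For the size-$3$ part I would argue exactly as for the faster algorithm: any triple of fingers occurring in a valid fixture has three pairwise-distinct body facets (bodies are interior-disjoint), and by Lemma~\ref{lemma:num-palms} with the constant-genus assumption such a triple extends only $O(g)=O(1)$ palms, so the $O(n^3)$ triples of fingers account for $O(n^3)$ work. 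The new ingredient is the size-$4$ part: by Lemma~\ref{lemma:num-palms-general} with $|\calQ|=4$, a quadruple of fingers with pairwise-distinct bodies extends at most $4g/(4-2)+2=2g+2=O(1)$ palms, and since there are only $O(n)$ fingers in total ($2|E|$ ordered pairs of adjacent facets; cf.\ Observation~\ref{observation:candidate-fingers}) there are $O(n^4)$ such quadruples, hence $O(n^4)$ valid four-finger fixtures, each checked in $O(1)$ time. Summing, the running time is $O(n^2)+O(n^3)+O(n^3)+O(n^4)=O(n^4)$, dominated by the size-$4$ part.

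The step I expect to be the main obstacle is reconciling this count with what the loop literally does, because the raw total number of four-element subsets of the sets $M[i]$ need not be $O(n^4)$: if $f_i$ is a lateral facet of a long prism with base facets $T$ and $B$ and lateral facets $S_j$, then $M[i]$ contains all pairs $(T,S_j)$ and $(B,S_j)$, so $|M[i]|=\Theta(n)$ and there are $\Theta(n^4)$ four-subsets for that single palm, i.e.\ $\Theta(n^5)$ overall. I would close this gap by discarding, in $O(1)$ time per subset, any subset of fingers that repeats a body facet---such a subset can never be a valid fixture---equivalently by organizing $M[i]$ by body facet so that only subsets with pairwise-distinct bodies are generated; once so restricted, every generated subset of size $3$ or $4$ maps to $O(1)$ palms via Lemmas~\ref{lemma:num-palms} and~\ref{lemma:num-palms-general}, and the global count collapses to $O(n^4)$. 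A smaller point to state carefully is that Lemma~\ref{lemma:num-palms-general} needs $|\calQ|\ge 3$, which is exactly why the size-$2$ part has to be bounded by the degree-sum estimate rather than by a genus argument.
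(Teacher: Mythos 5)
Your proof follows the same route as the paper's: Phase~1 in $O(n^2)$, constant-time \textsc{validFixture} calls, an $O(n^3)$ bound for the size-$2$ part, Lemma~\ref{lemma:num-palms} for triples, and Lemma~\ref{lemma:num-palms-general} with $|\calQ|=4$ for quadruples, summing to $O(n^4)$. The one place you go beyond the paper is the obstacle you flag at the end, and it is a genuine one: the paper's proof simply asserts that each quadruplet of fingers is considered a constant number of times, but Lemmas~\ref{lemma:num-palms} and~\ref{lemma:num-palms-general} (whose proofs build a complete bipartite graph $k_{|\calQ|,|\calB|}$ between body facets and palm facets) only control the multiplicity of finger sets with pairwise-distinct body facets; a set of fingers all sharing one body facet $f_b$ extends every neighbor of $f_b$, of which there may be $\Theta(n)$, and your prism example correctly shows that the literal enumeration of $4$-subsets of $M[i]$ can then cost $\Theta(n^5)$. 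Your repair---generating only subsets with pairwise-distinct bodies, which is anyway forced because bodies must be interior-disjoint, and discarding the rest in $O(1)$ time each---is exactly what is needed and is what the paper's accounting implicitly assumes; with it, your argument is complete and, if anything, more careful than the published one.
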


\begin{proof}
  The first phase of Algorithm~\ref{alg:simple-snapfixt} is identical
  to the first phase of Algorithm~\ref{alg:snapfixt}; thus, the
  overall time complexity of the first phase is $O(n^2)$.  The second
  phase dominates the time complexity. Recall that a potential fixture
  passed to {\sc validFixture}$(F)$ and to {\sc findBest}$(B,F)$ has
  at most~4 fingers. Therefore, both functions run in constant
  time. For every possible palm the function {\sc validFixture} is
  invoked once per every subset of candidate fingers of size at
  most~4. Recall that the number of candidate fingers is linear in
  $n$; see Observation~\ref{observation:candidate-fingers}. Therefore,
  the number of subsets of size two is $O(n^2)$, and the total time
  consumed processing potential fixtures of two fingers is bounded
  by $O(n^3)$. We restate the assumption that the genus of the input
  polyhedron is constant, as the following deductions depend on it. By
  Lemma~\ref{lemma:num-palms}, each triplet of fingers is considered a
  constant number of times. Therefore, the total time consumed
  processing potential fixtures of three fingers is bounded by
  $O(n^3)$. Similarly, By Lemma~\ref{lemma:num-palms-general} each
  quadruplet of fingers is considered a constant number of
  times. Therefore, the total time consumed processing potential
  fixtures of three fingers is bounded by $O(n^4)$. Thus, the total
  complexity of this phase is $O(n^{4})$.

\end{proof}

\section{Proofs}
\label{sec:proofs}

\begin{theorem}[Helly's theorem~\cite{hv-httgt-17}]
Let $\calS=\{ X_{1},...,X_{n}\} $ be a finite collection of convex
subsets of $\Rd$, with $n > d$.  If the intersection of every $d + 1$
of these sets is nonempty, then the whole collection has a nonempty
intersection; that is, $\cap_{j=1}^{n}X_{j} \neq \emptyset$.
\end{theorem}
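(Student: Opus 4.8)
The plan is to prove Helly's theorem by induction on $n$, with Radon's partition theorem as the crucial geometric ingredient. First I would dispose of the base case $n = d+1$: here the only $(d+1)$-element subcollection of $\calS$ is the whole family itself, so the hypothesis is literally the conclusion.

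For the inductive step, assume $n \ge d+2$ and that the statement holds for every family of $n-1$ convex sets satisfying the hypothesis. For each index $i \in \{1,\dots,n\}$, the subfamily $\calS \setminus \{X_i\}$ has $n-1 \ge d+1$ members, and every $d+1$ of them still have nonempty intersection (being a subfamily of $\calS$); so by the inductive hypothesis there is a point $p_i \in \bigcap_{j \ne i} X_j$. This yields $n \ge d+2$ points $p_1,\dots,p_n$ in $\R^d$.

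Next I would invoke Radon's theorem: any $d+2$ (or more) points in $\R^d$ can be partitioned into two disjoint nonempty blocks whose convex hulls meet. Applying it to $p_1,\dots,p_n$ gives a partition $\{1,\dots,n\} = I \sqcup J$ with $I,J \neq \emptyset$ and a point $q \in \mathrm{conv}\{p_i : i \in I\} \cap \mathrm{conv}\{p_j : j \in J\}$. To finish, fix any $k \in \{1,\dots,n\}$; it lies in exactly one of $I,J$, say $k \in J$. Then for every $i \in I$ we have $i \neq k$, hence $p_i \in X_k$ by construction; since $X_k$ is convex, $\mathrm{conv}\{p_i : i \in I\} \subseteq X_k$, so $q \in X_k$. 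As $k$ was arbitrary, $q \in \bigcap_{k=1}^n X_k$, which completes the induction.

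The one ingredient that needs its own argument is Radon's theorem, which I would prove by linear algebra: given points $p_1,\dots,p_m$ with $m \ge d+2$, the homogeneous system $\sum_{i=1}^m \lambda_i p_i = 0$ together with $\sum_{i=1}^m \lambda_i = 0$ has more unknowns ($m$) than equations ($d+1$), so it admits a nontrivial solution $(\lambda_i)_i$; put $I = \{i : \lambda_i > 0\}$ and $J = \{i : \lambda_i \le 0\}$, both nonempty because $\sum_i \lambda_i = 0$ and the $\lambda_i$ are not all zero, and normalize by $s = \sum_{i \in I} \lambda_i > 0$ to obtain the common point $\sum_{i \in I} (\lambda_i/s) p_i = \sum_{j \in J} (-\lambda_j/s) p_j$. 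The main obstacle is really just bookkeeping: verifying that the inductive hypothesis genuinely applies (that $n - 1 \ge d+1$ and that the ``$(d+1)$-wise intersection'' property is inherited by subfamilies) and that the Radon partition has both blocks nonempty; everything else is routine convexity.
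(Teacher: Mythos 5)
Your proof is correct and complete: it is the classical argument for Helly's theorem, namely induction on the number of sets with Radon's partition theorem supplying the common point, and your linear-algebra derivation of Radon's theorem (more unknowns than equations in the affine dependence system, then splitting the coefficients by sign and normalizing) is sound, including the check that both blocks of the partition are nonempty. Note, however, that the paper does not prove this statement at all --- Helly's theorem is imported from the literature as a black box (it is stated with a citation and only its contrapositive and a covering corollary are then derived) --- so there is no in-paper proof to compare against; your write-up simply supplies the standard textbook proof of the cited result. The only point worth double-checking in a final version is the one you already flag: when you apply Radon to all $n\geq d+2$ points $p_1,\dots,p_n$ rather than to just $d+2$ of them, the concluding step still works because each index $k$ lies in exactly one block, so every point of the opposite block lies in $X_k$ and convexity pulls $q$ into $X_k$.
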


The contrapositive formulation of the theorem follows.  If
$\cap_{j=1}^{n}X_{j}=\emptyset$ then there exists a subset $\calR=\{
X_{i_{1}},...,X_{i_{d+1}}\}\subseteq \calS$ such that $|\calR|=d+1$
and $\cap_{j=1}^{d+1}X_{i_{j}}=\emptyset$.  In the succeeding proofs
we use the following corollary:

\begin{corollary}
Let $\calS=\{ X_{1},...,X_{n}\} $ be a finite set of convex subsets of
$\Rd$. If $\cup_{j=1}^{n}X_{j}=\Rd$ then there exists a subset
$\calR=\{ X_{i_{1}},...,X_{i_{d+1}}\}\subseteq \calS$ such that
$|\calR|=d+1$ and $\cup_{j=1}^{d+1}X_{i_{j}}=\Rd$.
\end{corollary}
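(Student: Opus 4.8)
The plan is to deduce the corollary from the contrapositive of Helly's theorem (stated just above) by passing to complements, thereby turning the covering hypothesis into an intersection hypothesis. Set $Y_j := \Rd \setminus X_j$ for $1 \le j \le n$. By De~Morgan's law, $\bigcap_{j=1}^{n} Y_j = \Rd \setminus \bigcup_{j=1}^{n} X_j$, so the hypothesis $\bigcup_{j=1}^{n} X_j = \Rd$ is exactly equivalent to $\bigcap_{j=1}^{n} Y_j = \emptyset$.

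Given $\bigcap_{j=1}^{n} Y_j = \emptyset$, I would invoke the contrapositive of Helly's theorem on the family $\{Y_1, \dots, Y_n\}$ to extract a subfamily $\{Y_{i_1}, \dots, Y_{i_{d+1}}\}$ of exactly $d+1$ sets whose intersection is already empty, that is $\bigcap_{j=1}^{d+1} Y_{i_j} = \emptyset$. Complementing once more, $\bigcup_{j=1}^{d+1} X_{i_j} = \Rd \setminus \bigcap_{j=1}^{d+1} Y_{i_j} = \Rd$, so $\calR = \{X_{i_1}, \dots, X_{i_{d+1}}\}$ is the desired subset, and its cardinality $d+1$ is precisely the count delivered by Helly's contrapositive.

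The step I expect to be the main obstacle is the hypothesis check that licenses Helly's theorem (and hence its contrapositive): it speaks only of \emph{convex} families, whereas the complements $Y_j = \Rd \setminus X_j$ of convex sets are in general not convex, so the invocation in the previous paragraph does not follow automatically from the convexity of the $X_j$ alone. This is the genuine content of the corollary: the complementation reduction is valid exactly when every $Y_j$ is convex. For the sets to which this corollary is applied in the sequel the condition does hold, since each $X_j$ is the open half-space of directions blocked by a facet---the open hemisphere $h(f)$ read as a half-space through the origin---whose complement $\Rd \setminus X_j$ is a closed half-space and therefore convex. Once this convexity of the complements is in hand, the argument above applies verbatim and yields the claimed $(d+1)$-element covering subfamily.
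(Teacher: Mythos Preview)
Your complementation-plus-Helly argument is exactly the paper's; its entire justification is the one-line De~Morgan observation that an intersection is empty iff the union of the complements is all of $\Rd$, with no mention of the convexity of the complements $Y_j$. You are right to flag that issue---indeed the corollary is false as literally stated (in $\Rone$, the three convex sets $(-\infty,0)$, $(-1,2)$, $(1,\infty)$ cover $\Rone$ while no two of them do), so the step does not follow from Helly for arbitrary convex $X_j$. Your remark that the paper only ever invokes the corollary when the $X_j$ are open half-spaces (via the central projections in Lemmas~\ref{lemma:A} and~\ref{lemma:C}), whose complements are closed half-spaces and hence convex, is precisely the missing hypothesis that makes the argument valid, and is more careful than the paper itself.
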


The corollary holds because the intersection of a set of subgroups of
$\Rd$ is empty iff the union of their complement in $\Rd$ is $\Rd$.

The following four lemmas, namely, \ref{lemma:A}--\ref{lemma:D},
are based on the analysis in~\cite{shb-spspm-17}.

\begin{lemma}\label{lemma:A}
Let $\calS$ be a finite set of open unit semicircles. If $\calS$ is a covering
set of a closed unit semicircle $\closedCS$, then there exists
$\calR\subseteq \calS$ such that $\calR$ is a covering set of $\closedCS$ and
$|\calR|\in\{2,3\}$.
\end{lemma}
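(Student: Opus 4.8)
The plan is to lift the covering problem on a closed semicircle to a covering problem on the full circle, and then apply the Helly-type corollary stated above (with $d=1$, so $\Rd$ becomes $\PPone$, the affinely extended real line, which is topologically $\SOone$). First I would parametrize the closed unit semicircle $\closedCS$ by an angle $\theta\in[0,\pi]$. Each open unit semicircle $s_i\in\calS$, when intersected with the circle containing $\closedCS$, is an open arc of angular length $\pi$; its intersection with $\closedCS$ is therefore an open (possibly half-open at an endpoint of $\closedCS$) subarc, i.e.\ an open interval of angles. The key observation is that, since the $s_i$ have angular length exactly $\pi$ and $\closedCS$ has angular length $\pi$, every $s_i$ that meets the interior of $\closedCS$ either contains one of the two endpoints $\theta=0$ or $\theta=\pi$, or contains the interior endpoint issue in a controlled way; more usefully, the trace of $s_i$ on $\closedCS$ is a sub\emph{interval} that is convex in the parameter $\theta$.

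Next I would set up the reduction. Identify $\closedCS$ with the closed interval, and note that a finite family of open subintervals of an interval covers it iff the corresponding family of ``complements within $\PPone$'' fails to have a common point; equivalently, I can directly invoke the one-dimensional Helly corollary: a finite family of convex subsets of $\PPone$ whose union is all of $\PPone$ contains a subfamily of size $d+1=2$ whose union is already $\PPone$. To make $\closedCS$ look like $\PPone$ I would glue the two endpoints $\theta=0$ and $\theta=\pi$ together (or, cleaner, add a single ``point at infinity'' covered by a dummy set) so that covering $\closedCS$ becomes covering a circle-like $\PPone$; the open semicircles become genuinely convex subsets there. Then the corollary yields a subfamily $\calR'$ of size $2$ that covers the glued space. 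Pulling back, either those two sets already cover $\closedCS$ (the $|\calR|=2$ case), or the gluing introduced an artifact at the identified endpoint, in which case I add one more set from $\calS$ that covers a neighborhood of that endpoint on $\closedCS$ (such a set exists because $\calS$ covers $\closedCS$), giving $|\calR|=3$.

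The main obstacle I anticipate is the bookkeeping at the two endpoints of $\closedCS$: because the $s_i$ are \emph{open} and have angular length exactly equal to that of $\closedCS$, an $s_i$ covering the endpoint $\theta=0$ cannot also reach $\theta=\pi$, and boundary points must be covered by sets whose trace on $\closedCS$ is half-open, so the naive ``treat everything as open convex subsets of $\PPone$'' move needs care to remain valid (this is exactly why the bound is $3$ rather than $2$). I would handle this by treating the interior $(0,\pi)$ with the clean Helly argument to get two sets covering all of the open semicircle, then separately arguing that at most one additional semicircle is needed to pick up whichever of the two endpoints is not already covered --- and that in fact a single well-chosen semicircle covers a closed neighborhood of one endpoint while a second covers the other, so two or three suffice in total. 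A secondary subtlety is verifying convexity of the traces in the chosen parametrization (an open semicircle that ``wraps around'' past an endpoint of $\closedCS$ still meets $\closedCS$ in a single interval, never two), which follows from the length constraint and should be stated explicitly as a short preliminary claim.
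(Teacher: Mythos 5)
Your proposal is correct and takes essentially the same route as the paper: both reduce the problem to a one-dimensional Helly/interval-covering argument on the traces of the semicircles (the paper via a central projection of $\closedCS$ onto $\PPone$), combined with a short case analysis around the interior semicircle $\openCS$ and the two endpoints. The endpoint bookkeeping you flag as the main obstacle is resolved in the paper by the crisp observation that every open semicircle other than $\openCS$ whose trace on $\closedCS$ is nonempty contains exactly one of the two endpoints, which is precisely what makes your ``two for the interior, at most one more for an endpoint'' count close out at $|\calR|\in\{2,3\}$.
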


\begin{proof}
  It is obvious that one open unit semicircle cannot cover a closed
  unit semicircle. Let $\openCS$ denote the interior of $\closedCS$.
  ($\openCS$ is an open unit semicircle.) There are two cases:
  (i) $\openCS \in \calS$ and $\openCS \in \calR$ for every
  covering set $\calR\subseteq \calS$ of $\closedCS$. It implies that every
  covering set $\calR\subseteq \calS$ must contain two additional open
  semicircles that cover the two boundary points of $\closedCS$,
  respectively. These two semicircles together with $\openCS$
  constitute a covering set of $\closedCS$ of size three.
  (ii) There exists a covering set $\calS' \subseteq \calS$, where
  $\openCS \notin \calS'$. Let $\calS'_{\closedCS}=\{s\cap \closedCS\,|\,s\in
  \calS'\}$ be the set of intersections of the elements of $\calS'$ and
  $\closedCS$.
  Let $\Pi^1$ denote the extended central projection that maps the
  closed semicircle $\closedCS$ to the \emph{affinely extended real
    number line},\footnote{The set $\Rone \cup \{+\infty,-\infty\}$ is
    referred to as the affinely extended real number line.}, $\Pi^1(p) =
  (x,w):\closedCS \rightarrow \PPone$, where the points in \PPone are
  represented in homogeneous coordinates $(x,w)$.
  Notice that for every $s\in \calS'_{\closedCS}$, $s$ covers one of the
  boundary points of $\closedCS$; therefore, $\Pi^1(s)$ is an open ray
  covering either $(-1,0)$ or $(+1,0)$. $\calS'_{\closedCS}$ covers
  $\closedCS$; therefore, the set of its images
  $\calS'_{\Pi^1}=\{\Pi^1(s)\,|\,s\in \calS'_{\closedCS}\}$ covers \PPone. By
  Helly's theorem, there exists a subset $\calR'_{\Pi^1} \subseteq \calS'_{\Pi^1}$
  of size two that covers \Rone. Thus, the set of preimages of
  $\calR'_{\Pi^1}$ covers $\closedCS$.
\qed\end{proof}

\begin{lemma}\label{lemma:B}
  Let $\calS$ be a finite set of open unit semicircles. If $\calS$ is
  a covering set of the unit circle $\SOone$, then there exists $\calR
  \subseteq \calS$ such that $\calR$ is a covering set of $\SOone$ and
  $|\calR| \in \{3,4\}$.
\end{lemma}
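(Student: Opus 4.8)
The plan is to reduce this to Lemma~\ref{lemma:A}, which already handles the covering of a \emph{closed} unit semicircle, by peeling off one semicircle from $\calS$. First I would note that $\calS\neq\emptyset$ (in fact $|\calS|\geq 3$: two open unit semicircles have total arc length $2\pi$, so their union equals $\SOone$ only if they meet exactly at their two common endpoints, but being open they then leave those two endpoints uncovered; and if they overlap the covered length is strictly less than $2\pi$). In particular we may fix an arbitrary $s\in\calS$.

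Next, consider $\closedCS := \SOone\setminus s$. Since $s$ is an open unit semicircle, its complement in $\SOone$ is exactly a closed unit semicircle, so $\closedCS$ is a closed unit semicircle. No point of $\closedCS$ lies in $s$, while every point of $\closedCS$ lies in $\bigcup\calS$; hence every point of $\closedCS$ is covered by some member of $\calS\setminus\{s\}$, i.e.\ $\calS\setminus\{s\}$ is a (finite) covering set of the closed unit semicircle $\closedCS$. Applying Lemma~\ref{lemma:A} to $\calS\setminus\{s\}$ yields a subset $\calR'\subseteq\calS\setminus\{s\}$ with $|\calR'|\in\{2,3\}$ that covers $\closedCS$.

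Finally I would take $\calR := \{s\}\cup\calR'$. Since $s\notin\calR'$ we get $|\calR| = |\calR'| + 1 \in\{3,4\}$, and $\bigcup\calR = s\cup\bigcup\calR' \supseteq s\cup\closedCS = s\cup(\SOone\setminus s) = \SOone$, so $\calR$ is a covering set of $\SOone$ of the claimed size, completing the proof.

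There is essentially no serious obstacle here; the only step deserving an explicit (one-line) remark is that the complement of an open unit semicircle inside $\SOone$ is precisely a closed unit semicircle, which is what lets Lemma~\ref{lemma:A} be invoked verbatim. The same peel-off strategy — remove one element, apply the closed-semicircle (later, closed-hemisphere) lemma to the complement, add the element back — is what drives the analogous reductions in the higher-dimensional lemmas and in the proof of Theorem~\ref{theorem:upper-bound}.
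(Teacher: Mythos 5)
Your proof is correct and follows essentially the same route as the paper: pick an arbitrary $s\in\calS$, observe that $\calS\setminus\{s\}$ must cover the closed unit semicircle $\SOone\setminus s$, apply Lemma~\ref{lemma:A} to get $\calR'$ with $|\calR'|\in\{2,3\}$, and return $\calR=\calR'\cup\{s\}$. The extra remark that $|\calS|\geq 3$ is harmless but not needed.
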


\begin{proof}
  Let $s \in \calS$ be an arbitrary open unit semicircle in $\calS$. The
  remaining elements $\calS \setminus \{s\}$ of $\calS$ must cover the
  complement $\bar{s}$ of $s$, which is a closed unit semicircle.
  By lemma~\ref{lemma:A}, there exists $\calR' \subseteq \calS \setminus \{s\}$
  that covers $\bar{s}$, and $|\calR'| \in \{2,3\}$. Thus, $\calR' \cup
  \{s\}$ covers $\SOone$, and $|\calR' \cup \{s\}| \in \{3,4\}$.
\qed\end{proof}


\begin{lemma}\label{lemma:C}
Let $\calS$ be a finite set of open unit hemispheres. If $\calS$ is a covering
set of a closed unit hemisphere $\closedHS$, then there exists
$\calR \subseteq \calS$, such that $\calR$ is a covering set of $\closedHS$ and
$|\calR|\in\{3,4,5\}$.
\end{lemma}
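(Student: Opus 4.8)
The plan is to mimic the two-dimensional argument in the proof of Lemma~\ref{lemma:A}, but one dimension up: reduce the covering problem for a closed hemisphere to a covering problem for the entire plane $\PPtwo$ via an extended central projection, and then apply Helly's theorem (the corollary stated above with $d=2$), which yields a covering subset of size $d+1=3$ for $\Rtwo$. The extra casework compared to Lemma~\ref{lemma:A} accounts for the ``$+1$'' and ``$+2$'' in $\{3,4,5\}$: a single hemisphere may have to be retained because it alone covers part of the boundary great circle $\partial\closedHS$, and (unlike in the $1$-dimensional case where the boundary is just two points) that boundary great circle is itself a circle that may need up to the Lemma~\ref{lemma:B} bound of four semicircles to be covered.

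First I would dispose of the degenerate possibility: if some $s\in\calS$ equals the open hemisphere $\openHS$ that is the interior of $\closedHS$, then $\calS\setminus\{s\}$ must cover the boundary great circle $C=\partial\closedHS$ (a unit circle), so by Lemma~\ref{lemma:B} there is a subset of size in $\{3,4\}$ covering $C$; together with $\openHS$ this gives a covering set of $\closedHS$ of size in $\{4,5\}$. Otherwise, for every covering subset none of the elements is the interior hemisphere itself, so each relevant hemisphere $s$ meets $C$ in an open unit semicircle (each $s\cap\closedHS$ is a ``lune-like'' region touching the boundary). Intersecting with $\closedHS$ and applying the extended central projection $\Pi^2\colon \closedHS\to\PPtwo$ that sends the closed hemisphere to the projective plane (the affine plane plus its line at infinity, exactly as $\Pi^1$ sent $\closedCS$ to $\PPone$), each set $s\cap\closedHS$ maps to an open convex region whose closure contains an arc of the line at infinity, and the whole family covers $\PPtwo$, hence covers $\Rtwo$. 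The corollary to Helly's theorem ($d=2$) then produces three of them whose images cover $\Rtwo$; pulling back, those three hemispheres cover $\closedHS$ minus possibly part of $C$. The final bookkeeping step is to repair the boundary: I would argue that covering the leftover portion of the great circle $C$ costs at most two further hemispheres — this is where I would invoke (the one-dimensional) Lemma~\ref{lemma:A} applied to the at most one closed arc of $C$ that the three chosen hemispheres fail to cover — bringing the total to at most $5$, and never below $3$ since two open hemispheres plainly cannot cover a closed hemisphere (their union misses at least a point of the boundary, or one contains a point whose antipode-neighborhood is uncovered).

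The main obstacle I anticipate is making the projection argument rigorous enough that the images of the $s\cap\closedHS$ are genuinely convex subsets of $\Rtwo$ to which Helly applies, and simultaneously controlling the behaviour on the line at infinity so that the ``repair'' of the boundary circle $C$ really costs at most two additional hemispheres rather than up to four. Concretely, one must show that the three hemispheres extracted by Helly already cover all of $C$ except at most a single closed arc; this should follow because their $\Pi^2$-images cover a neighbourhood of infinity in $\Rtwo$, but spelling out exactly which arcs of $C$ can be left uncovered — and checking that the worst case is one arc, coverable by two more semicircles via Lemma~\ref{lemma:A} — is the delicate part. The uniform-membership subcase (when some hemisphere lies in every covering set because it is the unique one covering a particular boundary point) must also be folded in, exactly as in case (i) of Lemma~\ref{lemma:A}, and that is what pushes the bound up to $5$.
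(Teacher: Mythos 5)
Your overall strategy is the same as the paper's: split on whether the interior hemisphere $\openHS$ must be used (reducing to covering the boundary great circle via Lemma~\ref{lemma:B}, which gives a set of size $4$ or $5$), and otherwise centrally project the wedges $s\cap\closedHS$ onto an extended plane and apply Helly's theorem in dimension two. The genuine gap is in your ``repair'' step after Helly. You take three halfplanes covering $\Rtwo$, pull them back, and propose to cover the leftover portion of $C=\partial\closedHS$ by applying Lemma~\ref{lemma:A} to ``the at most one closed arc'' that remains. As stated this does not deliver the bound: Lemma~\ref{lemma:A} covers a closed semicircle with two \emph{or three} open semicircles, so if the uncovered arc could be as large as a closed semicircle you would end with $3+3=6$. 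What actually saves the argument---and what the paper proves---is that the uncovered portion of $C$ is never a nondegenerate arc at all: if an open arc of directions at infinity were uncovered, then points of $\Rtwo$ far out along an interior direction of that arc would be uncovered, contradicting Helly's output. The paper makes this precise by examining the size of the minimal Helly subset: if it is $3$, the three halfplanes are pairwise non-parallel and already cover all of $\PPtwo$, so the preimages cover $\closedHS$ and the total is $3$; if it is $2$, the two halfplanes are antiparallel and miss exactly two antipodal points at infinity, each of which costs one more hemisphere, for a total of $4$. Your ``$+2$'' is therefore attainable, but not by the route you describe; you need the two-antipodal-points dichotomy, not an invocation of Lemma~\ref{lemma:A} on an arc.

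A second, smaller issue: for the Helly corollary you invoke, convexity of the images $\Pi^2(s\cap\closedHS)$ is not the operative hypothesis---the corollary is obtained by passing to complements, so you need the \emph{complements} in $\Rtwo$ to be convex. An open convex set whose complement is also convex is an open halfplane, and that is precisely what the paper establishes for the image of each semi-open spherical wedge; ``open convex region whose closure contains an arc of the line at infinity'' is not enough to run the argument. Your lower-bound remark (two open hemispheres cannot cover $\closedHS$) and your handling of the case $\openHS\in\calS$ are fine.
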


\setlength{\intextsep}{0pt}%
\begin{wrapfigure}[8]{R}{3.8cm}
  \begin{tikzpicture}[node distance=5cm]
    \node (img) {\includegraphics[width=3.8cm]{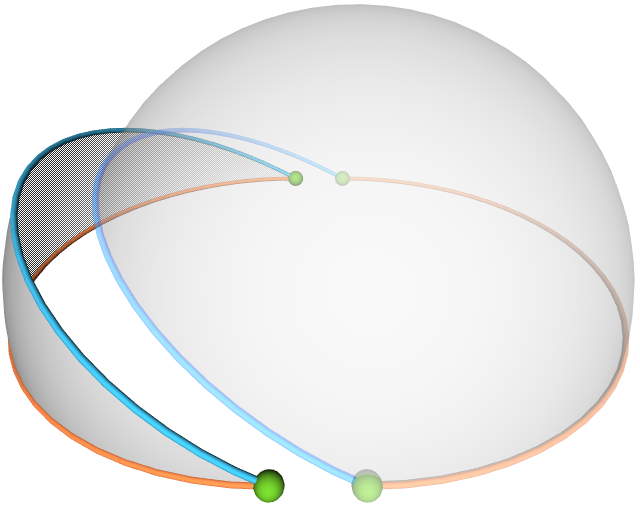}};
    \node [](B) at (-1.6, -1.2) {\textcolor{color1!80!black}{$\closedCS_1$}};
    \node [](A) at (-1.6, 0.95) {\textcolor{color2!80!black}{$\closedCS_2$}};
  \end{tikzpicture}
\end{wrapfigure}
\noindent\textit{Proof.}
  Let $\openHS$ denote the interior of $\closedHS$ ($\openHS$ is
  an open unit hemisphere) and $\boundaryHS$ denote the boundary of
  $\closedHS$ ($\boundaryHS$ is a great circle).
  Similar to the proof of Lemma~\ref{lemma:A}, there are two cases:
  (i) $\openHS \in \calS$ and $\openHS \in \calR$ for every covering
  set $\calR \subseteq \calS$ of $\closedHS$. It implies that every
  covering set $\calR \subseteq \calS$ must contain additional open
  hemispheres that cover $\boundaryHS$.
  Let $\calS_{\boundaryHS}=\{s\cap \boundaryHS\,|\,s\in \calS\}$ be
  the set of intersections of the elements of $\calS$ and
  $\boundaryHS$. Note that an intersection of a unit open hemisphere
  and a great circle is either empty or an open unit
  semicircle. Therefore $\calS_{\boundaryHS}$ is a set of open unit
  semicircles lying on the same plane. By Lemma~\ref{lemma:B}, there
  exists a covering set $\calR_{\boundaryHS} \subset
  \calS_{\boundaryHS}$ of $\boundaryHS = \SOone$, such that
  $|\calR_{\boundaryHS}\,|\,\in \{3,4\}$.  This implies that there
  exists a covering set $\calR \subseteq \calS$ of $\closedHS$, such
  that $|\calR| \in \{4,5\}$.
  (ii) There exists a covering set $\calS' \subseteq \calS$, where
  $\openHS \notin \calS'$.  Let $\calS'_{\closedHS}=\{s\cap
  \closedHS\,|\,s\in \calS'\}$ be the set of intersections of the
  elements of $\calS'$ and $\closedHS$.
  Let $\Pi^2$ denote the extended central projection that maps the
  closed hemisphere $\closedHS$ to an extended plane obtained by
  adjoining all signed slopes to \Rtwo (a generalization of the
  affinely extended real number line, to the plane), $\Pi^2(p) =
  (x,y,w):\closedHS \rightarrow \PPtwo$, where the points in \PPtwo
  are represented in homogeneous coordinates $(x,y,w)$.
  Notice that every $s\in \calS'_{\closedHS}$ is a semi-open spherical
  wedge; see the figure in the previous page. The wedge is bounded
  by two semicircles $\closedCS_1$ and $\closedCS_2$ (in the figure),
  where $\closedCS_1$ lies in $\boundaryHS$. The intersection of
  $\closedCS_2$ and $s$ is empty, and the intersection of
  $\closedCS_1$ and $s$ is an open semicircle; therefore, $\Pi^2(s)$ is
  an open halfplane. $\calS'_{\closedHS}$ covers $\closedHS$; therefore,
  the set of its images $\calS'_{\Pi^2}=\{\Pi^2(s)\,|\,s\in \calS'_{\closedHS}\}$
  covers \PPtwo. By Helly's theorem, there exists a minimal subset
  $\calR'_{\Pi^2} \subseteq \calS'_{\Pi^2}$ of size at most three that covers
  \Rtwo.  If $|\calR'_{\Pi^2}|=2$, that is, two open halfplanes, say $h_1$
  and $h_2$ comprise $\calR'_{\Pi^2}$, then they must be parallel: $h_1: a x
  + b y + c_1 > 0$ and $h_2: a x + b y + c_2 > 0$.  In this case they
  do not cover the points $(-b, a)$ and $(b, -a)$ in $\PPtwo$.  Thus,
  the pair of preimages of $\calR'_{\Pi^2}$ covers $\closedHS$ except for
  two antipodal points. Covering these antipodal points requires two
  additional elements from $\calS'_{\closedHS}$, which yields a covering
  set of size four. If $|\calR'_{\Pi^2}| = 3$, then none of the halfplanes in
  $\calR'_{\Pi^2}$ (which cover $\Rtwo$) are parallel, and they also cover
  $\PPtwo$.  Thus, the set of preimages of $ \calR'_{\Pi^2}$ covers
  $\closedHS$, which yields a covering set of size three.\qed
\setlength\intextsep{\intextsepSaved}%
\setlength\columnsep{\columnsepSaved}%

\begin{lemma}\label{lemma:D}
  Let $\calS$ be a finite set of open unit hemispheres. If $\calS$ is a
  covering set of the unit sphere $\SOtwo$, then there exists
  $\calR\subseteq \calS$ such that $\calR$ is a covering set of $\SOtwo$ and
  $|\calR|\in\{4,5,6\}$.
\end{lemma}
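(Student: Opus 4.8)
The plan is to mimic the proof of Lemma~\ref{lemma:B} verbatim, replacing its one-dimensional ingredient (Lemma~\ref{lemma:A}, the semicircle-on-a-closed-semicircle statement) by its two-dimensional counterpart (Lemma~\ref{lemma:C}, the hemisphere-on-a-closed-hemisphere statement). First I would note a trivial base case: a single open unit hemisphere is a proper subset of $\SOtwo$, so it cannot by itself cover the sphere; hence $|\calS|\ge 2$, and after removing one element the remaining collection is still nonempty. This is the only point that needs a word of care.

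Next I would pick an arbitrary $s\in\calS$ and argue that $\calS\setminus\{s\}$ is a covering set of the complement $\bar{s}$ of $s$, which is a closed unit hemisphere. Indeed, $s$ and $\bar{s}$ are disjoint, so $s$ contains no point of $\bar{s}$; since $\calS$ covers $\SOtwo\supseteq\bar{s}$, every point of $\bar{s}$ must be covered by some element of $\calS\setminus\{s\}$. Applying Lemma~\ref{lemma:C} to $\calS\setminus\{s\}$ and $\bar{s}$ yields a subset $\calR'\subseteq\calS\setminus\{s\}$ that covers $\bar{s}$ with $|\calR'|\in\{3,4,5\}$.

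Finally, set $\calR:=\calR'\cup\{s\}$. Then $\calR$ covers $s\cup\bar{s}=\SOtwo$, and since $s\notin\calR'$ we have $|\calR|=|\calR'|+1\in\{4,5,6\}$, which is exactly the claim. There is essentially no obstacle: the lemma is the precise three-dimensional analogue of Lemma~\ref{lemma:B}, and the whole argument is the one-line ``peel off one hemisphere, apply the closed-hemisphere lemma to the rest'' reduction.
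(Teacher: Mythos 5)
Your proof is correct and is essentially identical to the paper's own argument: remove an arbitrary $s\in\calS$, observe that the rest covers the closed hemisphere $\bar{s}$, apply Lemma~\ref{lemma:C} to get $\calR'$ with $|\calR'|\in\{3,4,5\}$, and add $s$ back. The only difference is your explicit remark that one hemisphere cannot cover $\SOtwo$, which the paper leaves implicit.
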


\begin{proof}
  Let $s \in \calS$ be an arbitrary open unit hemisphere in $\calS$. The
  remaining elements $\calS \setminus \{s\}$ of $\calS$ must cover the
  complement $\bar{s}$ of $s$, which is a closed unit hemisphere.
  By lemma~\ref{lemma:C}, there exists $\calR' \subseteq \calS \setminus \{s\}$
  that covers $\bar{s}$, and $|\calR'| \in \{3,4,5\}$. Thus,
  $\calR' \cup \{s\}$ covers $\SOtwo$, and $|\calR' \cup \{s\}| \in \{4,5,6\}$.
\qed\end{proof}

\noindent%
Proof of Corollary~\ref{corollary:parallelepiped:2:minimal}.
\begin{proof}
  $(\Rightarrow)$ Assume, by contradiction, that $\calR$ contains an open
  unit semicircle $a$, such that the interior of its complement is not
  in $\calR$. Observe that the complement of $a$ is a closed unit
  semicircle. This is exactly case (ii) in the proof of
  Lemma~\ref{lemma:A}. Here, there exists a covering set $\calR'$ of the
  closed unit semicircle, such that $|\calR'|=2$. It implies that $|\calR|$
  is at most three, a contradiction.

  \medskip\noindent $(\Leftarrow)$ If $\calR$ consist of two antipodal
  pairs of open unit hemispheres, then the removal of any one of the
  four hemispheres leaves one point on $\SOone$ uncovered.
  \qed\end{proof}

\noindent%
Proof of Corollary~\ref{corollary:parallelepiped:2:no}.
\begin{proof}
  Assume, for contradiction, that a subset $\calR \subset \calS$,
  $|\calR|=3$ that covers $\SOone$ does not exist. By
  Lemma~\ref{lemma:B}, there exists a minimal subset $\calR$ of
  $\calS$ that covers $\SOone$ and $|\calR|=4$.  By
  Corollary~\ref{corollary:parallelepiped:2:minimal}, $\calR$ consists
  of two antipodal pairs of open unit semicircles.  Let $\bar{a}$
  denote the complement of the sole semicircle in $\calS \setminus
  \calR$. Observe that $\bar{a}$ is equivalent to the closed
  semicircle $\closedCS$, and that $a$, the interior of $\bar{a}$, is
  not in $\calR$.  This, again, is exactly case (ii) in the proof of
  Lemma~\ref{lemma:A}.  Here, there exists a covering set $\calR'$ of
  $\closedCS$, such that $|\calR'|=2$.  It implies that $|\calR|=3$, a
  contradiction.
  \qed\end{proof}

\noindent%
Proof of Corollary~\ref{corollary:parallelepiped:3}.
\begin{proof}
  $(\Rightarrow)$ Assume, by contradiction, that $\calR$ contains an open
  unit hemisphere $a$, such that the interior of its complement is not
  in $\calR$. Observe that the complement of $a$ is equivalent to
  $\HOtwo$. This is exactly case (ii) in the proof of
  Lemma~\ref{lemma:C}. Here, there exists a covering set $\calR'$ of
  $\HOtwo$, such that $|\calR'| \in \{3,4\}$. It implies that $|\calR|$ is at
  most five, a contradiction.

  \medskip\noindent$(\Leftarrow)$ $\calR$ consists of three antipodal
  pairs of open unit hemispheres that cover $\SOtwo$. Arbitrarily pick
  one antipodal pair. There is a great circle $c$ that it not covered
  by the pair. By corollary~\ref{corollary:parallelepiped:2:minimal}
  two antipodal pairs of open unit semicircles are required to cover
  $c$; they must be the intersections of the remaining two antipodal
  pairs of open unit hemispheres, respectively. Thus, six open
  hemispheres are required in total.
  \qed\end{proof}

\noindent%
Proof of Corollary~\ref{corollary:parallelepiped:3:no}.
\begin{proof}
  Assume, for contradiction, that a subset $\calR \subset \calS$,
  $|\calR|=5$ that covers $\SOtwo$ does not exist. By
  Lemma~\ref{lemma:B}, there exists a minimal subset $\calR$ of
  $\calS$ that covers $\SOtwo$ and $|\calR|=6$. By
  Corollary~\ref{corollary:parallelepiped:3}, $\calS$ consists of
  three antipodal pairs of open unit hemispheres. Let $\bar{h}$ denote
  the complement of the sole hemisphere in $\calS \setminus
  \calR$. Observe that $\bar{h}$ is equivalent to $\closedHS$, and
  that $h$, the interior of $\bar{h}$, is not in $\calR$. This, again,
  is exactly case (ii) in the proof of Lemma~\ref{lemma:C}.  Here,
  there exists a covering set $\calR'$ of $\closedHS$, such that
  $|\calR'|\in\{3,4\}$. It implies that $|\calR|\leq 5$, a
  contradiction.
  \qed\end{proof}

\medskip
\noindent%
Proof of Observation~\ref{observation:candidate-fingers}.
\begin{proof}
  Let $e$ be an edge of $P$ and let $f_{e}$ and $f_{e}'$ be the two
  faces incident to $e$. Two fingers can be built on $e$. The base
  facet of the body and the base facet of the tip of one finger
  coincides with $f_{e}$ and $f_{e}'$, respectively. In order to
  construct the other finger, the roles of these facets exchange; that
  is, the base facet of the body and the base facet of the fingertip
  coincides with $f_{e}'$ and $f_{e}$, respectively. Every candidate
  finger is built on a single edge. Thus, the number of candidate
  fingers is at most $2|E|$. From Euler's formula we know that the
  number of edges in a polyhedron is linear in the number of vertices
  of the polyhedron. Thus, the number of candidate fingers is at most
  $6n-12$.
\qed\end{proof}

\medskip
\noindent%
Proof of Theorem~\ref{theorem:upper-bound}
\begin{proof}
  Consider a polyhedron $P$. Let $G$ be a valid fixture of $P$, and
  assume that $G$ has more than four fingers. We show that it is
  possible to construct a valid snapping fixture of $P$ that has (i)
  the same palm as $G$, and (ii) four fingers that are a subset of the
  fingers of $G$.  Consider the closed hemisphere $\closedHS =
  \SOtwo\setminus H(\calF_{P})$.  By~\Cref{expression:valid-fixture:1}
  defined in Section~\ref{ssec:terms:properties},
  $\SOtwo=\bigcup H(\calF_{PBT})$.  We get that $\closedHS
  \subseteq \bigcup H(\calF_{BT}$). In other words,
  $H(\calF_{BT})$ is a covering set of $\closedHS$.  By
  Lemma~\ref{lemma:C}, there exists a subset $\calR \subset
  H(\calF_{BT})$, such that (i) $\calR$ is a covering set, and (ii)
  $|\calR| \in \{3,4,5\}$.  We prove separately for $|\calR| \in\{3,4\}$ and
  $|\calR| = 5$.

  If $|\calR| \in\{3,4\}$, there exist $i \in \{3,4\}$ hemispheres that
  correspond to $i$ base facets of $i$ bodies and fingertips,
  respectively, of at most four fingers, which we choose as the
  fingers of $G'$.

  If $|\calR| = 5$, then $\calR$ contains an open hemisphere
  $\openHS_t$, such that $\openHS_t = h(f_t)$ and the base facet of
  the palm and $f_t$ are parallel.\footnote{Similar conditions are
    described in the proof of Lemma~\ref{lemma:C}.} In
  a polyhedron, two parallel facets cannot be neighbors; thus, $f_t$
  must be the base facet of a fingertip of some finger $F$. Let $f_b$
  denote the base facet of the body of the finger $F$ and set
  $\openHS_b = h(f_b)$. Observe, that $\calR_1 = \calR \setminus
  \{\openHS_t\}$ must be a covering set of the unit circle
  $\boundaryHS_t$, and $|\calR_1| = 4$. Observe that $\boundaryHS_b
  \neq \boundaryHS_t$; thus, $\calR_2 = \calR_1 \setminus
  \{\openHS_b\}$ is a covering set of a closed semicircle $\closedCS$
  and $|\calR_2| = 3$. Following a deduction similar to the above,
  there exist three hemispheres that correspond to three base facets
  of three bodies or fingertips, respectively, of at most three fingers,
  which we choose as the fingers of $G'$ in addition to $F$.

  A polyhedron that admits the lower bound is depicted in
  Figure~\ref{fig:special}. Proving that a snapping fixture for this
  polyhedron with less then four fingers does not exists is done using
  our generator. We exhaustively searched the configurations space and
  did not find a valid snapping fixture.
  \qed\end{proof}

\noindent%
Proof of Observation~\ref{observation:one-finger}
\begin{proof}
  Let $G$ be a fixture with only one finger. Then,
  $|H(\calF_{PBT})| = 3$.  However, by Lemma~\ref{lemma:D} the
  minimum size of a covering set of $\SOtwo$ is four.
\end{proof}

\section{Limitations and Future Research}
\label{sec:future}

\subsection{Form Closure}
\label{ssec:future:form-closure}
\setlength{\intextsep}{-2pt}%
\setlength{\columnsep}{0pt}%
\begin{wrapfigure}[5]{R}{3.5cm}
  \includegraphics[width=3.5cm]{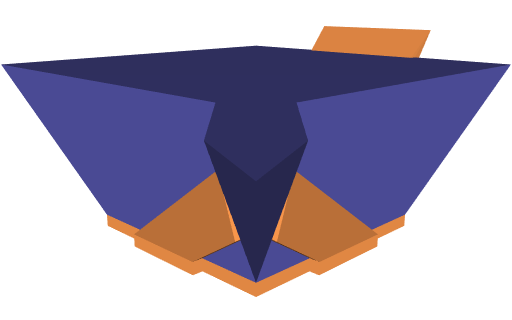}
\end{wrapfigure}
Our generator synthesizes fixtures that do not necessarily prevent
angular motion. Such fixtures are rarely obtained. Nevertheless, the
figure to the right depicts a workpiece and a snapping fixture
(synthesized by our generator), such that the workpiece can escape the
assembled configuration using torque.  However, other snapping fixture
of this workpiece that guarantee form-closure of the workpiece do
exist (and offered by our generator). Devising efficient synthesis
algorithms for guaranteeing form closure is left for future research.%
\setlength\intextsep{\intextsepSaved}%
\setlength\columnsep{\columnsepSaved}%

\subsection{Spreading Degree}
\label{ssec:future:spreading-degree}
Increasing the \emph{spreading degree} (see
Section~\ref{sssec:terms:spreading-degree}) will enable the synthesis
of fixtures for a larger range of workpieces.  Future research could
result with (i) a classification of polyhedra according to the minimal
spreading degree required for their snapping fixtures, and (ii)
algorithms for synthesis of fixtures with a larger fixed spreading
degree or even unlimited.

\subsection{Joint Flexibility}
\label{ssec:future:flexibility}
The flexibility of the \emph{joints} is an important consideration in
the design. In order to construct a snapping fixture, the joint that
connects the body of a finger to the palm, and the joint that connects
the fingertip of a finger to its body must allow the rotation of the
respective parts about the respective axes when force is applied. Some
of the subtleties of this flexibility are discussed below. For
simplicity we move the discussion to the plane, where our workpiece
and snapping fixture are polygons.

\setlength{\intextsep}{-3pt}%
\setlength{\columnsep}{0pt}%
\begin{wrapfigure}[10]{r}{3.8cm}
  \centering%
  \begin{tikzpicture}
    \tikzstyle{myarrows}=[line width=1mm,draw=blue,-triangle 45,postaction={draw, line width=3mm, shorten >=4mm, -}]
    \begin{scope}[shift={(0,-1.1)}]
      \coordinate (p1) at (-0.5,-0.513);
      \coordinate (p2) at (1,0);
      \coordinate (p3) at (1,2);
      \coordinate (p4) at (-0.5,2);
      \draw[pattern color=color2!50,pattern=crosshatch dots] (p1)--(p2)--(p3)--(p4) to[out=-60,in=120] cycle;
    \end{scope}
    \coordinate (p1) at (-0.5,-0.513);
    \coordinate (p2) at (1,0);
    \coordinate (p3) at (1,2);
    \coordinate (p4) at (-0.5,2);
    \draw[fill=color2] (p1)--(p2)-- node[left]{$a$} (p3)--(p4) to[out=-60,in=120] cycle;
    \draw [myarrows](0.25,-0.75)--(0.25,0.75);
    \coordinate (g1) at (-0.5,2);
    \coordinate (g2) at (1,0);
    \coordinate (g3) at (-0.5,0);
    \draw[fill=color1] (g1)-- ++(1.5,0) node(axis){}-- ++([rotate=30]0,-2)-- node[above]{$b$} ++([rotate=50]-1.53,0)-- ++([rotate=20]1,0)-- ++([rotate=50]1,0)-- ++([rotate=30]0,2.6) node(x){}-- ++(-1.8,0) to[out=-60,in=120] cycle;
    \centerarc[red,very thick,dotted](1,2)(270:300:1.1)
    \node [red] at (1.2,1.2) {$\theta$};
    \draw[>=latex,xshift=1cm,yshift=2cm,<->] ([rotate=30]0.15,0)-- node[right]{$a$} ++([rotate=30]0,-2);
    \centerarc[red,very thick,dotted](1,0)(90:200:0.45)
    \node [red] at (0.82,0.16) {$\eta$};
    \centerarc[red,very thick,dotted](1.99,0.27)(120:230:0.45)
    \node [red] at (1.75,0.28) {$\eta$};
    \node [color1!50!black] at (0.3,2.24) {Palm};
    \node (as) at (0.0,1.7) {Axis};
    \draw [thick,-{Stealth[scale=1.5]}] (as)to[out=0,in=210](axis);
    \node[point] at (axis){};
    \draw[dashed] (axis) -- (x);
  \end{tikzpicture}
\end{wrapfigure}
Let's focus on one finger. Consider the configuration where the finger
is about to snap.  Assume, for further simplicity, that the joint that
connects the body and the fingertip of the finger is rigid, and consider
only the joint that connects the finger with the palm, as depicted in
the figure to the right. This configuration occurs a split second
before the assembly reaches the assembled state when translated,
starting at the serving configuration. Let $\theta$ denote the angle
between the finger and the workpiece. Note that in the assembled
configuration $\theta$ equals $0$ for all fingers. Let $\theta_{c}$
denote the joint threshold angle, that is, the maximum bending angle
the finger can tolerate without breaking. The threshold angle of every
joint depends on the material and thickness of the region around the
joint. $\theta$ is an angle of a triangle with one edge lying on the
body base-facet and another edge lying on the fingertip base-facet. Let
$a$ and $b$ denote the lengths of these edges, respectively, and let
$\eta$ be the angle between them. The finger will break when $\theta >
\theta_{c}$.  Applying the law of sines, we get $b = \frac{a \sin
  \theta }{\sin (\pi-\theta-\eta)} = \frac{a \sin \theta
}{\sin(\theta+\eta)}$, which implies a maximal value $b \leq
\frac{\min (a) \sin \theta }{\sin(\theta+\eta)}$.  On the other hand,
the characteristics of the material of the finger determine the
minimal value of $b$ that guarantees a secured grasp of the workpiece
by the fingertip. The construction of a fixture $G$ is feasible, only if
selecting a proper value $b$ for every finger of $G$ is possible. We
remark that the full analysis in space is more involved, and for now
our generator does not take into account material properties such as
flexibility.
\setlength\columnsep{\columnsepSaved}%

\subsection{Gripping Strength}
\label{ssec:future:strength}
Another consideration in the fixture design is the gripping strength.
The gripping strength of a finger is based on the angle between the
palm and the body of the finger and on the angle between the body of
the finger and the fingertip of the finger. The gripping strength is in
opposite relation with these angles; that is, the smaller each one of
these angles is the stronger the gripping is. While our generator
currently does not take in account strength considerations, it could
be used as a criterion in ranking valid snapping fixtures of a given
workpiece.

\section{Assortment of Interesting Workpieces and Fixtures}
\label{sec:assortment}
\begin{figure}[!ht]
  \centering%
  \subfloat[]{\label{fig:sub_tet:1}
    \includegraphics[width=2.5cm]{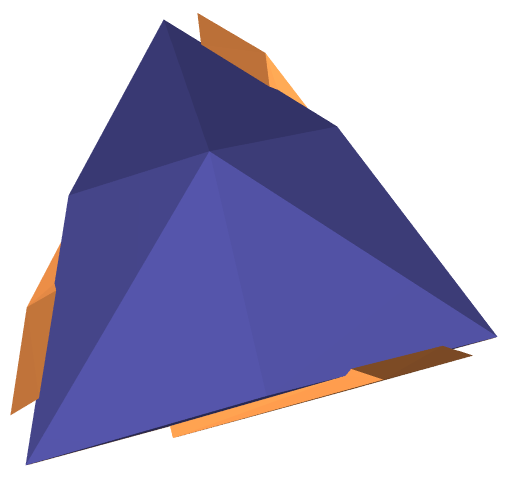}}\quad
  \subfloat[]{\label{fig:sub_tet:2}
    \includegraphics[width=2.5cm]{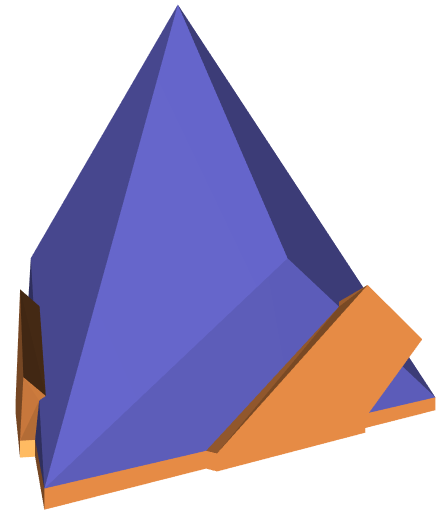}}\quad
  \subfloat[]{\label{fig:sub_tet:3}
    \includegraphics[width=2.5cm]{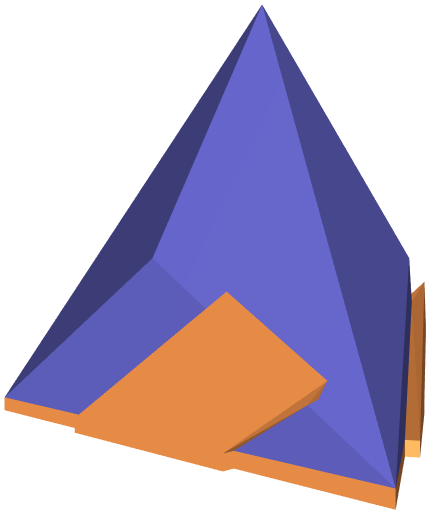}}
  \setlength{\abovecaptionskip}{\abovecaptionskipSaved}
  \setlength{\belowcaptionskip}{10pt}
  \caption[]{%
  (\subref*{fig:sub_tet:1}),(\subref*{fig:sub_tet:2}),(\subref*{fig:sub_tet:3})
  Three different views of a polyhedron with~10 facets and a three-finger
  snapping fixture.}\label{fig:R3}
\end{figure}

Figures~\ref{fig:sub_tet:1}, \ref{fig:sub_tet:2},
and~\ref{fig:sub_tet:3} depict a polyhedron $P$ and a fixture with
three fingers that snaps onto $P$. They demonstrate case I in
Section~\ref{sec:algorithm}. Here, we fix the base facet of the palm.
It holds that for every possible fixture of $P$ with the fixed palm in
the figures $|\calR|=3$.  To construct the polyhedron $P$ in the
figures we start with a regular tetrahedron (such as the one depicted
in Figure~\ref{fig:tet}), fix the bottom facet, subdivide each one of
the remaining three facets into three identical triangles, and
slightly translate the newly introduced vertex in the direction of the
outer normal to the original facet, ensuring that the dihedral angle
between the bottom facet and its neighbor remains acute.

\begin{figure}[!ht]
  \centering%
  \subfloat[]{\label{fig:par_3_3_1-1}
    \includegraphics[width=2.5cm]{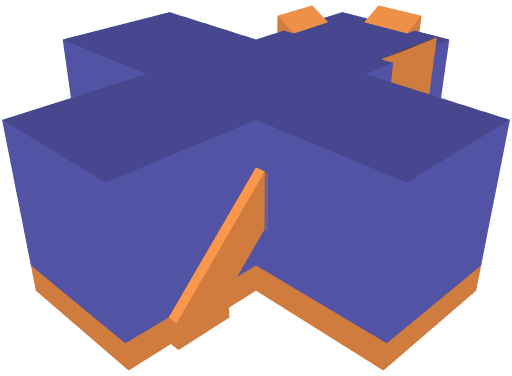}}\quad
  \subfloat[]{\label{fig:par_3_3_1-2}
    \includegraphics[width=2.5cm]{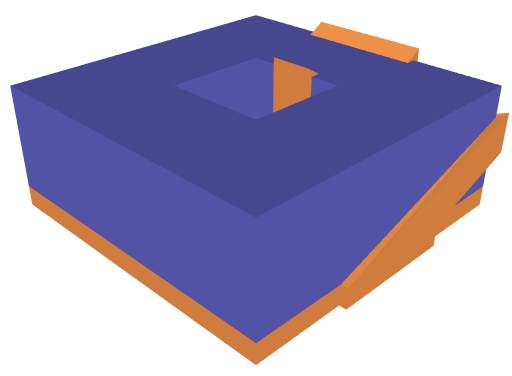}}\quad
  \subfloat[]{\label{fig:no-fixture}
    \includegraphics[width=2.5cm]{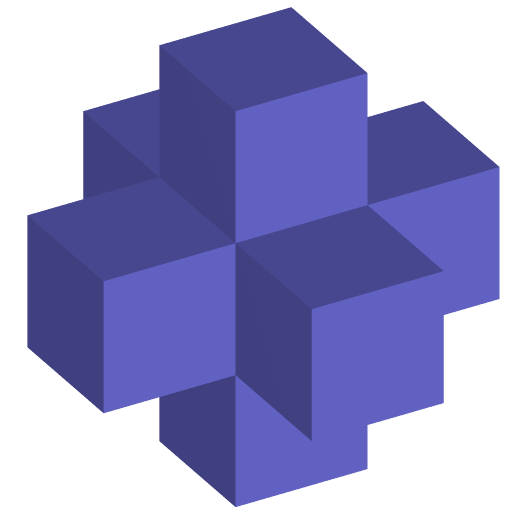}}
  \setlength{\abovecaptionskip}{\abovecaptionskipSaved}
  \setlength{\belowcaptionskip}{10pt}
  \caption[]{%
    (\subref*{fig:par_3_3_1-1}),(\subref*{fig:par_3_3_1-2}) Two polyhedra
    and their snapping fixtures, respectively.
    (\subref*{fig:no-fixture}) A polyhedron that does not have a valid
    snapping fixture.}\label{fig:R5}
\end{figure}

Figures~\ref{fig:par_3_3_1-1} and~\ref{fig:par_3_3_1-2} depict two
polyhedra, $P_1$ and $P_2$, and their snapping fixtures,
respectively. They demonstrate case I in
Section~\ref{sec:algorithm}. The number of facets of each polyhedron
is larger then six; however, it holds that for every possible fixture
of $P_i$, $|\calR|=5$, where $\calR \subset H(\calF_{BT})$ and $\calR$
is a covering set of the closed hemisphere $\SOtwo\setminus
H(\calF_{P})$.

There exists a polyhedron $P$ that does not have a valid fixture and
the cardinality of the minimal covering set of $H(\calF^P)$ is~6,
where $\calF^P$ is the set of all facets of the polyhedron $P$; see
the Figure~\ref{fig:no-fixture}.

\begin{figure}[!ht]
  \vspace{0pt}%
  \centering%
  \subfloat[]{\label{fig:octa1}
    \includegraphics[width=2.5cm]{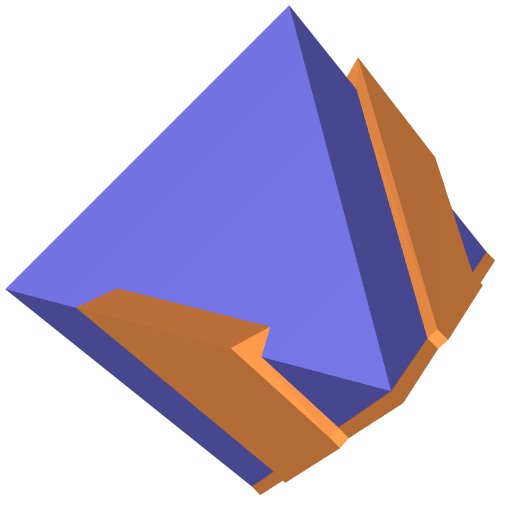}}
  \subfloat[]{\label{fig:octa2}
    \includegraphics[width=2.5cm]{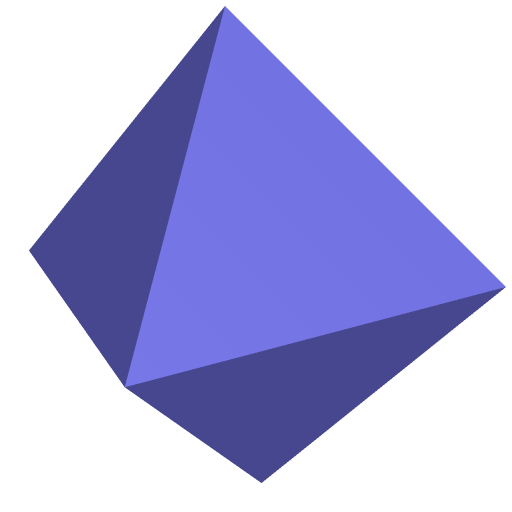}}
  \subfloat[]{\label{fig:octagonal-pyramid}
    \includegraphics[width=2.5cm]{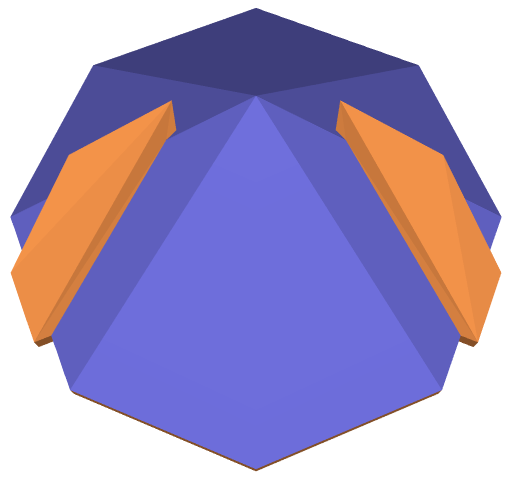}}\quad
  \setlength{\abovecaptionskip}{\abovecaptionskipSaved}
  \setlength{\belowcaptionskip}{10pt}
  \caption[]{%
    (\subref*{fig:octa1}),(\subref*{fig:octa2}) Tow different views of
    An octahedron and a three-finger snapping fixture.
    (\subref*{fig:octagonal-pyramid}) An octagonal-pyramid and a two-finger
    snapping fixture.
  }
  \vspace{0pt}%
\end{figure}

Figures~\ref{fig:octa1} and~\ref{fig:octa2} depict an octahedron and a
snapping fixture with three fingers, which is the minimum in this
case.  Figure~\ref{fig:octagonal-pyramid} depict an octagonal pyramid
and a snapping fixture with two fingers.

\end{document}